\newcommand\Tstrut{\rule{0pt}{2.6ex}}         
\newcommand\Bstrut{\rule[-0.9ex]{0pt}{0pt}}   
\newcommand{\newapprox}{{\raise.17ex\hbox{$\scriptstyle\mathtt{\sim}$}}}
\newtheorem{definition}{Definition}[section]
\newtheorem{lemma}{Lemma}[section]
\newtheorem{proposition}{Proposition}[section]
\newtheorem{remark}{Remark}[section]
\title{Investor Experiences and International Capital Flows\thanks{	We thank Vladimir Asriyan, Fernando Broner, Nicolas Courdacier, Jaume Ventura, and seminar participants at CREi-UPF and the NBER ISOM conference 2019 in London for their comments and suggestions, and Clint Hamilton for excellent research assistance.}}
\author{Ulrike Malmendier \thanks{UC\ Berkeley, NBER, and CEPR. Corresponding author: Department of Economics and Haas School of Business, University of California, 501 Evans Hall, Berkeley, CA 94720-3880; ulrike@berkeley.edu.} \and  Demian Pouzo \thanks{Department of Economics, University of California, 501 Evans Hall, Berkeley, CA 94720-3880; dpouzo@berkeley.edu.} \and Victoria Vanasco  \thanks{CREI and BGSE, Ramon Trias Fargas 25-27, Barcelona, 08005, Spain; vvanasco@crei.cat.} }
\begin{document}
	\maketitle
	
	\begin{abstract}
		We propose a novel explanation for 
		classic international macro puzzles regarding capital flows and portfolio investment, which builds on modern macro-finance models of experience-based belief formation.
		Individual experiences of past macroeconomic outcomes have been shown to exert a long-lasting influence on beliefs about future realizations, and to explain domestic stock-market investment.
		We argue that experience effects can explain the tendency of investors to hold an over proportional fraction of their equity wealth in domestic stocks (home bias), to  invest in domestic equity markets in periods of domestic crises (retrenchment), and to withdraw capital from foreign equity markets in periods of foreign crises (fickleness).  Experience-based learning generates additional implications regarding the strength of these puzzles in times of higher or lower economic activity and depending on the demographic composition of market participants. We test and confirm these predictions in the data.
	\end{abstract}

\doublespacing
\newpage

\section{Introduction}

\pagenumbering{arabic}


At least since \cite{obstfeld2000six}, researchers have aimed 
to find a unifying explanation for the major puzzles in international macro-economics that is both realistic and empirically convincing. Traditionally, explanations the home bias in stock-market investments and other capital-market anomalies have largely relied on the dichotomy of traded versus non-traded goods, on trade costs, and on other frictions. The more recent literature has aimed to improve the realism of these models, which often seemed too rigid or miscalibrated in terms of magnitude \citep{coeurdacier2013home}. 

In this paper, we aim to improve a different aspect of modeling realism -- the psychological realism of investors' belief formation.
We show that the modern macro-finance models of belief formation, and in particular the notion of \emph{experience-based learning}, contribute to a better understanding of international capital flows and portfolio investments.

What is experience-based learning? Motivated by the investor response to macro-shocks such as the Great Depression or the 2008 Financial Crisis, modern models of belief formation have focused on two elements. The first is the overweighting of recent realizations. 
When predicting future returns in the stock, housing, or other asset markets, investors appear to overly rely on their observations of those markets in recent months or years.\footnote{This first element features in several of the more recent models, including natural expectation formation \citep{FusterHebertLaibson2011, FusterLaibsonMendel2010} and over-extrapolation \citep{barberis2015x, barberis2016extrapolation}.}
The second key element is the long-lasting effect of crisis experiences and the associated systematic cross-sectional differences. As conveyed by the notion of ``depression babies'' or the ``deep scars'' of the 2008 financial crisis  \citep{Blanchard2012,Malmendier_Shen2015}, macro-economic shocks alter investment and consumption behavior for decades to come. Moreover, younger cohorts tend to react significantly more strongly than older cohorts. For example, investors who have lived through financial crises tend to shy away from stock-market investment in the long-run and are pessimistic about future stock market returns, and this long-term influence is more pronounced among younger generations \citep{Malmendier_Nagel2008}.\footnote{\hspace{0mm} The same holds for IPO experiences and future IPO investment \citep{Kaustia_Knuepfer2008, HirshleiferEtAl2011}.}

Models of \emph{experience effects} naturally generate both the overweighting of recent experiences and the long-lasting effects of lifetime experiences. Experience-based learners assign extra weight to realizations of macro-financial variables that they have personally experienced when they form beliefs about future outcomes of the same variables.
A given crisis experience exerts stronger influence on younger cohorts, for whom the crisis experience constitutes a larger portion of their lifetime histories so far.\footnote{See \cite{malmendierpouzovanasco2018} and \cite{collin2016asset}, and for the long-lasting effects the model of \cite{schraeder2015information}.
}
In addition, models of experience-based learning also rationalize classical asset-pricing puzzles such as return predictability \citep{CampbellShiller1988,fama1988dividend} and excess volatility \citep{LeRoyPorter1981, Shiller1981, LeRoy2005}, and micro-level stylized facts such as investors chasing past performances.\footnote{A discussion of the broader literature on learning and asset pricing puzzles, extrapolation, departures from rational beliefs in asset pricing, and learning in OLG models is in \cite{malmendierpouzovanasco2018}.}

These insights have direct implications for international macro models, as investors in different countries have different ``experiences." They differ in their exposures to domestic versus foreign outcomes, and different countries also have different demographics. For example, applied to the Asian financial crisis, experience-based learning would predict that the crisis has exerted a significant influence on the risk attitudes and beliefs of East Asians, and its influence was different for, say, Europeans -- even controlling for income, wealth, and other standard factors. Moreover, comparing regions across and within East Asia, the influence is predicted to differ depending on where a higher fraction of \emph{young} market participants were exposed in each country.
Nevertheless, most of the international finance models to date are formulated under the rational expectation paradigm. By and large, they work under the assumption that economic agents have correct beliefs about the laws of complex economic processes.\footnote{Some exceptions include \cite{gourinchas2004exchange}, \cite{veldkamp2009information}, \cite{dziuda2012asymmetric}, \cite{ordonez2016aggregate}.}




This paper introduces the notion of \emph{experience-based learning}  (EBL) into the international macro context and shows its potential to jointly explain some of the long-standing puzzles on capital flows and portfolio investment: home bias, fickleness, and retrenchment.
Experience-based learning not only provides an alternative, psychologically and micro founded explanation 
but
also generates additional predictions that relate these puzzles to the demographic composition and size of risky-asset markets and the cyclicality of capital flows.

The pattern of \emph{home bias} in portfolio holdings was first discussed by \cite{cooper1986costs} and \cite{french1991investor}, the latter of whom calculated that, at the time, both Americans and Japanese investors held more than 90\% of their equity wealth in respective home countries' stocks.
The bias remains strong across countries today, though it has declined relative to the late 1990s and is smaller in smaller countries \citep{cooper2013equity}. \emph{Retrenchment} describes the pattern of domestic capital inflows increasing during periods of domestic or global crisis \citep{forbesetal2012, broner2013gross}, and \emph{fickleness} describes the pattern of foreign capital outflows increasing during periods of domestic or global crises \citep{forbesetal2012, caballero2016model}. 
Our model jointly rationalizes these puzzles and generates new testable predictions that account for the demographic composition of different countries
and the interaction with cycles of economic booms and busts.

Our model set-up extends the CARA-Normal OLG framework of \cite{malmendierpouzovanasco2018} to a two-country setting. Agents maximize their end-of-life consumption, and during their lifetime can invest in (i) the domestic and the foreign risky assets, i.e. a claim to the stream of future outputs of the respective country, each in unit supply, and (ii) a risk-free asset in fully elastic supply that pays return $R$.
We assume that all agents know the distribution of outputs for both countries, including their variance, but that they are uncertain and learn about the output means.

Experience-based learning means that agents overweight realizations observed during their lifetimes when forecasting output. That is, even though all agents observe the entire history of output realizations of both countries, they choose to weigh observations more if they have personally \emph{experienced} them. Furthermore, to capture that agents are more confident about their knowledge of their own country than of a foreign country, we assume that prior beliefs about domestic output are more precise than prior beliefs about the foreign output. Note that, as all information is available to all agents, our model differs from models of asymmetric information, and prices and portfolios do not contain relevant information about future output realizations. 

Turning to the equilibrium analysis, we first consider the benchmark case where countries are symmetric, i.\,e., all agents have the same prior beliefs and both countries have the same demographics. In this benchmark scenario, both countries hold the same (aggregate) portfolio and asset prices are constant over time.\footnote{We abstract away from the possibility of rational bubbles, which may introduce dynamics to asset prices that are unrelated to fundamentals.} This is our version of the \emph{mutual-fund theory} \citep{markowitz1959portfolio}, which states that all agents should hold the efficient (market) portfolio. This result is important in that it shows that EBL alone does not generate heterogeneity in aggregate portfolio across countries. It does, however, generate heterogeneity in portfolio holdings across cohorts within a country, as different generations have different experiences \citetext{cf.\;\citealp{malmendierpouzovanasco2018}}.

This prediction changes when we allow EBL agents to hold more precise prior beliefs about their own country's output, when countries differ in the demographics of their market participants, or both.
We begin by characterizing the equilibrium when prior beliefs about domestic and foreign output differ, but both countries have the same demographics. In that case, the equilibrium price of a country's risky asset varies with past realizations of that country's output.\footnote{This result relies on the assumption that output realizations are independent across countries; else, prices would depend on all output realizations. We discuss the role of output correlation in Section \ref{sec: Equilibrium}.} Specifically, it depends on the history of past output realizations observed by the oldest market participant. As a result, demands and resulting portfolio holdings also depend on such history, in line with the findings in \cite{malmendierpouzovanasco2018}.

%
%
This setting also generates home bias 
in portfolio holdings as agents perceive domestic output as less risky.
Moreover, agents over-react to foreign output realizations, generating the observed pattern of retrenchment and fickleness of capital flows in response to shocks: 
As agents are less confident about what they know about the foreign country, personal experiences of foreign-output realizations strongly influence their respective posterior beliefs. As a result, they over-react to negative shocks, e.\,g., low output realizations abroad, which, together with general equilibrium effects, imply that capital inflows of domestic agents (retrenchment) and outflows of foreign agents (fickleness) both increase -- even though all agents perceive the shock as negative. The reverse holds for booms, which are followed by the corresponding decreases.
The effect is observed in both countries during global shocks.
These predictions are consistent with, and provide an explanation for, the evidence in \cite{broner2006peril,milesi2011great,broner2013gross}, who find that domestic inflows and foreign outflows are counter-cyclical, and that retrenchments and fickleness occur during both domestic and global crises. They are also consistent with the negative correlation of (global and domestic) growth proxies with domestic inflows (``retrenchment") and foreign outflows (``stops") documented in \cite{forbesetal2012}.

Next we extend the model to consider the role of cross-country differences in the demographics of market participants.
We analyze how the reaction of capital flows to a given shock varies with the demographic composition of market participants in each country. We find that when a country has a larger number of young market participants, it over-reacts to both domestic and foreign shocks relative to the baseline model. As a result, the retrenchment and fickleness effects are alleviated in this country. Vice versa, retrenchment and fickleness are exacerbated in countries with a smaller number of young market participants.

Finally, we take our model predictions to the data. We use data from the IMF, World Bank, and the World Federation of Exchanges. First, we confirm positive US\;home bias in every period of our sample from 1980 to 2017. We then turn to a panel of G20 countries with a sample period from 1971-2017 and provide evidence of fickleness (outflow of foreign funds) and retrenchment (inflow of domestic funds) after recessions. Turning to the novel predictions of our model, we use UN population data to test for differences by demographics. We show that both capital-flow patterns, fickleness and retrenchment, are exacerbated when the number of older people in a given country is particularly high. Through the lenses of our model, the data also suggests that longer experiences (older age) is empirically more predictive of precise priors about a country than being a domestic (rather than foreign) investor, underscoring the importance of experience effects.

Overall our findings illustrate that modern approaches to belief formation, which acknowledge the long-lasting effects of prior experiences in domestic and foreign markets, help explain longstanding puzzles in international capital flows and generate further testable predictions that hold up in the data.

\bigskip

\noindent\textbf{Related Literature.}
The literature on \emph{experience effects} builds on the seminal work by Kahneman and Tversky on availability bias. \cite{TverskyKahneman1974} show that, when individuals form beliefs about future realizations of stochastic variables, they tend to assign extra weight to information that is easily ``available'' to them. Such information tends to be personally experienced outcomes, and in particular recent realizations. Building on this insight, \cite{weber1993determinants} provide extensive evidence that ``learning from experience'' is significantly more powerful than ``learning from description.''\footnote{\hspace{0mm} Cf.\;also \cite{Hertwigetal2004} and \cite{simonsohn2008tree}.}

In the economics literature, a growing body of evidence documents experience effects in economic decision-making. In addition to the theoretical and empirical work cited above, several papers show that investors who have experienced high inflation tend to overestimate future inflation and interest rates, and thus invest more in housing (as an inflation hedge). For the same (perceived) hedging reason, they also tend to finance their housing with fixed-rate mortgages \citetext{cf.\;\citealp{Malmendier_Nagel2013, Botsch_Malmendier2018, Malmendier_Steiny2018}}.\footnote{\hspace{0mm} \cite{Malmendier_Nagel_Yan2018} show that even FOMC members' stated inflation beliefs are strongly affected by their personal lifetime experiences.} In the realm of consumption decisions, consumers who have experienced periods of economic downturn and high unemployment rates are more careful in their spending \citep{Malmendier_Shen2015}.

Turning to the international macro puzzles our analysis aims to explain, a large literature rationalizes the equity home bias, first documented by \cite{french1991investor} and \cite{tesar1995home}. \cite{coeurdacier2013home} and \cite{cooper2013equity} survey and summarize the leading explanations, including hedging motives  \citetext{e.\,g.,  \citealp{solnik1974equilibrium,fidora2007home,coeurdacier2010international,coeurdacier2016bonds}}, explicit costs and barriers to entering foreign stock markets \citetext{e.\,g., \citealp{black1974international,stulz1981effects,errunza1985international}}, information asymmetries, and alternative approaches. 

The theory proposed in this paper lies in the intersection of behavioral and informational approaches. Several papers highlight the importance of informational asymmetries between domestic and foreign investors for understanding international capital flows.\footnote{The literature relying on informational asymmetries documents the roles of distance, language, and culture in generating home bias \citep{grinblatt2001distance,malloy2005geography,bae2008local}, but also points to home bias at the local level, e.\,g., towards local companies \citep{coval2001geography}.} \cite{gehrig1993information} shows that, in a two-country, noisy rational expectations setting, home bias is a natural response to domestic agents having private information about domestic assets. Relatedly, \cite{brennan1997international} and \cite{brennan2005dynamics} model a dynamic setting with private and public information, in which domestic investors over-react to public signals about the foreign country relative to foreign agents, as foreign agents also have private information. This over-reaction to news is similar to the one described in our model, with the twist that we allow all information to be public. On a similar note, \cite{veldkamp2009information,mondria2010puzzling,dziuda2012asymmetric} argue that home bias can result from small informational advantages or financial frictions combined with agents' information processing constraints.\footnote{Asymmetric information in international finance models can also rationalize trading patterns such as return chasing, volatility of capital flows, and positive correlations between inflows and outflows \citetext{e.g. \citealp{albuquerque2009global,tille2014international,dvovrak2003gross}}.}
Here, we postulate that agents overweight observations that they experience even when they do not have informational advantages.

Relatedly, we assume that priors about domestic assets are more precise than those about foreign assets, but not due to private information. Information-based rationalizations, such as \cite{guidolin2005home}, generate predictions that differ from the ones in our model, and from empirical observations. First, in such a model, agents learn from prices and portfolio holdings, which is not the case in our environment. Thus, a theory of private information needs to explain why foreign agents do not mimic the portfolio holdings of domestic agents. Second, in such models, domestic investors should earn higher expected returns than foreigners. The evidence on this prediction is mixed \citep{coeurdacier2013home,ardalanequity}. In contrast, in our setting, a more precise prior does not imply that domestic agents earn higher expected returns, as prior beliefs need not be centered around the truth. Finally, in our model, domestic agents over-react to any information about the foreign country, generating the documented cyclicality of capital flows. These patterns cannot be easily generated with models of private information. For example, retrenchment of capital may not occur if domestic agents have more negative private information about their own country than foreign agents, which is a possibility during downturns.
That is, the theory proposed in this paper is able to jointly rationalize home bias and other documented patterns of capital flows, such as retrenchment and fickleness during global recessions, while most equity home bias models fail to rationalize these facts jointly.

Our results also relate to a large literature on global imbalances and their relation to demographics and saving decisions \citep{bernanke2005global,caballero2008equilibrium,caballero2009global,coeurdacier2015credit}. This paper contributes to this line of research by providing a novel connection between the demographic composition of market participants and the sensitivity of capital flows to shocks, i.\,e., the cyclicality of gross capital flows.

\bigskip

The remainder of the paper is organized as follows. Section \ref{sec:Baseline} presents the model setup, the notion of experience-based learning, and our equilibrium concept. Section \ref{sec:EqChar} characterizes equilibrium demand and prices for the full-information benchmark and the baseline model. Section \ref{sec:Heterogeneity} introduces heterogeneity in the demographic structures of markets across countries.
In Section \ref{sec:EmpiricalImplications}, we test the predictions of the model, and Section \ref{sec: Conclusions} concludes. All proofs are relegated to the Appendix.

\section{Model Setup}\label{sec:Baseline}

Consider two countries $H$ and $F$, each populated with overlapping generations of a continuum of risk-averse agents. Time is infinite and indexed by $t$. At each point in time $t \in \mathbb{N}_{0}$, a new generation is born in each country and lives for two periods.\footnote{We define a period as the time between time $t$ and $t+1$.} Hence, there are three generations alive at any $t$: the young, the old, and the retired. The generation born at $t=n$ is called generation $n$, and each generation has a mass of $\frac{1}{4}$ identical agents.

\emph{Preferences.} All agents have CARA preferences with risk aversion $\gamma$, and they maximize per-period utility.\footnote{\hspace{0mm} This myopia assumption simplifies the maximization problem considerably and highlights the main determinants of portfolio choice generated by experience-based learning. It is commonly used in finance (see \cite{Vives2008}).} Generation $n$ in country $i\in\{H,F\}$ is born with an endowment of 
$\mathcal{W}^{i,n}$
consumption goods. Agents can transfer resources across time by investing in financial markets, where trading takes place at the beginning of each period. At the end of the last period of their lives, agents consume the wealth they have accumulated. Hence, the young and the old generations both participate in financial markets, while the retired simply consume. Figure \ref{f: timeline} shows the timeline of this economy.

\emph{Financial Markets.}  All agents have access to a risk-free asset, which pays a gross return given by $R$, and two risky assets in unit net supply, each being a claim to the stream of future risky outputs of a country $j\in\{H,F\}$, $\{y_{j,t}\}$. That is, the dividend paid on country $j$'s risky asset corresponds to country $j$'s output. We assume outputs to be independent across countries and time and identically distributed, $y_{j,t}\sim N\left(\theta_j,\sigma_j^{2}\right)$ at time $t$ for $j\in \{H,F\}$.

Our model describes the decisions of different generations of domestic and foreign investors to invest in domestic and foreign assets. To keep notation tractable, we use superindices for the investor's country of origin $i$ and her generation $n$, and subindices for the asset's country of origin (issuance) $j$ and time $t$.

\subsection{The Agents' Problem} \label{sec:Max}

Agents from generation $n$ in country $i$ that participate in financial markets have the following budget constraint at any time $t \in \{n,n+1\}$
\begin{align}\label{eqn:BC}
\mathcal{W}^{i,n}_{t} = x^{i,n}_{H,t} \cdot p_{H,t} + x^{i,n}_{F,t} \cdot p_{F,t}  + a^{i,n}_{t},
\end{align}
where $\mathcal{W}_{t}^{i,n}$ denotes the wealth of generation $n$ of country $i$ at time $t$, $x^{i,n}_{j,t}$ and $a^{i,n}_{t}$ are the investments of generation $n$ of country $i$ in the risky asset of country $j\in\{H,F\}$ and in the risk-free asset, respectively, and $p_{j,t}$ is the price of one unit of the risky asset of country $j$ at time $t$. As a result, wealth next period is
\begin{align}\label{eqn:LoM-W}
\mathcal{W}^{i,n}_{t+1} =&\; x^{i,n}_{H,t}\cdot ( p_{H,t+1} + y_{H,t+1}) + x^{i,n}_{F,t}\cdot ( p_{F,t+1} + y_{F,t+1})  + a^{i,n}_{t} R \\
=& \sum_{j\in\{H,F\}} x^{i,n}_{j,t} ( p_{j,t+1} + y_{j, t+1} - p_{j,t}R) + \mathcal{W}^{i,n}_{t} R.
\end{align}
We denote the excess payoff received in $t+1$ from investing at time $t$ in one unit of the risky asset of country $j$, relative to the risk-free asset, as $s_{j,t+1} \equiv p_{j,t+1} + y_{j,t+1} - p_{j,t} R$.
In our framework with CARA preferences, this is analogous to the equity premium.
Using this notation, $\mathcal{W}^{i,n}_{t+1} = \sum_{j\in\{H,F\}} x^{j,n}_{i,t} s_{j,t+1} + \mathcal{W}^{i,n}_{t} R$.

To model uncertainty about fundamentals, we assume that agents do not know the mean of the output processes, $\theta_H$ and $\theta_F$, but can use past realizations to learn about them. For tractability, the variance of output is $\sigma^2$ for both countries, and is known by all agents at all times; see Section \ref{sec:learning} below for more details.

Given a wealth level $\mathcal{W}_t^{i,n}$, the problem of generation $n$
at each time $t \in \{ n,n+1 \}$ is to choose
$\{x^{i,n}_{j,t}\}$
to maximize $E_{t}^{i,n}[-\exp(-\gamma \mathcal{W}_{t+1}^{i,n})]$:
\begin{align}\label{eq: Generation n Problem}
\max_{\{x_H,x_F\} \in \mathbb{R}^2} E_{t}^{i,n}\left[-\exp(-\gamma (x_H s_{H,t+1} + x_F s_{F,t+1}))\right],
\end{align}
where $E_{t}^{i,n}\left[\cdot\right]$ is the (subjective) expectation with respect to a joint Gaussian distribution that we will define below. Note that, when $x^{i,n}_{j,t}$ is negative, generation $n$ of country $i$ is short-selling the risky asset of country $j$ at time $t$.

\begin{figure}[t]
	\centering
	\includegraphics[scale=1]{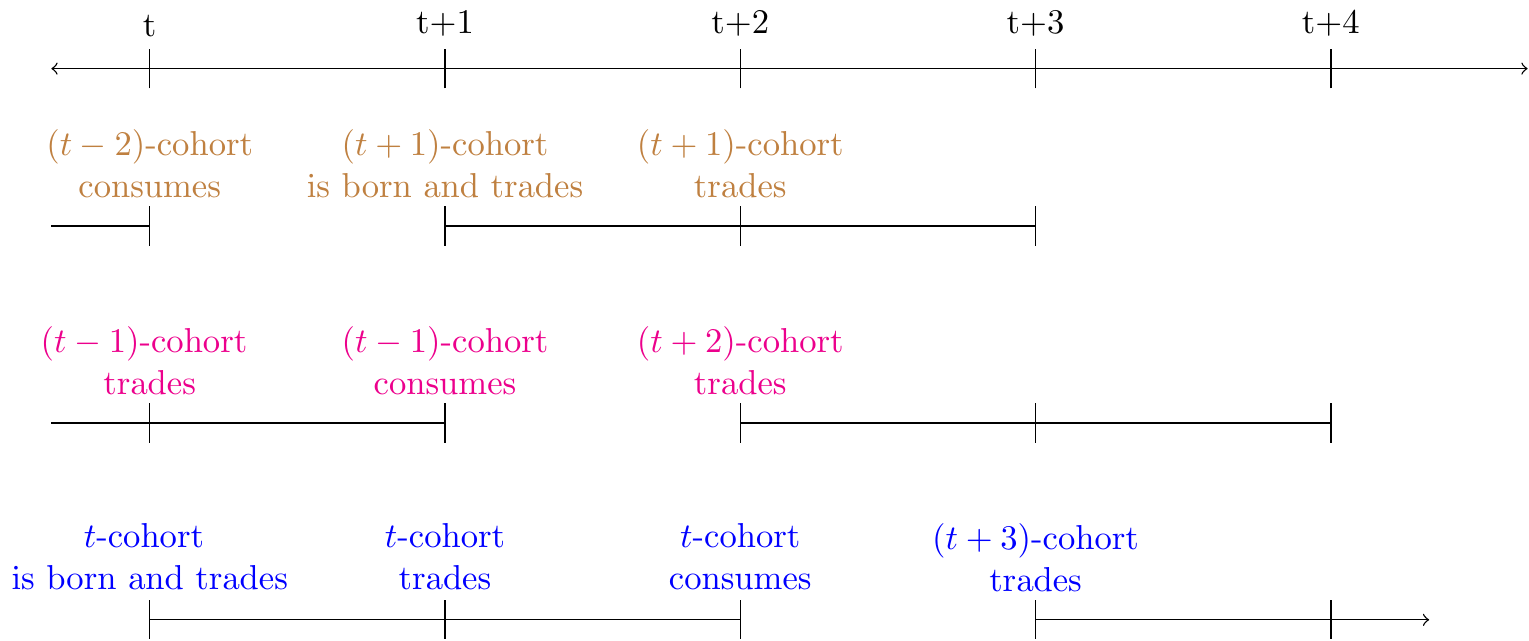}
	\caption{Timeline.} \label{f: timeline}
	\vspace{0.7cm}
\end{figure}



\subsection{The Learning Model}
\label{sec:learning}

Over time, agents learn about those aspects of the stochastic output process they are uncertain about. In order to capture experience effects and recency bias, we allow agents to depart from the standard Bayesian learning paradigm in terms of how they weight past observations when they form their beliefs.
However, we also want to retain some of the convenient features of the Bayesian framework, namely the concept of belief over subjective models. We do so by implementing the generalization of Bayesian updating in \cite{bissiri2016general}. We will show that their generalization implies that seemingly dissimilar learning procedures used in economics --- such as Bayesian and frequentist learning, and departures from these such as ``pseudo Bayesian" and experience-based learning (EBL) as in \cite{malmendierpouzovanasco2018} --- can be cast as part of an encompassing class. Thus, we think this generalization might be of independent interest for modeling departures of Bayesian learning, and we present it in more generality than needed for this paper.

We first describe agents' subjective model, i.\,e., the parametrizations of the stochastic process that the agents consider. Second, we describe the extension of Bayesian updating proposed in \cite{bissiri2016general}. We then provide a series of examples that illustrate the scope of this framework and study the learning specification used in this paper.

\subsubsection{Agents' subjective model}

Agents want to learn about the distribution of the process $(y_{t})_{t=0}^{\infty}$ where $y_{t} = (y_{F,t},y_{H,t})$ for all $t \in \mathbb{N}_{0}$. As we assume this process is iid, it suffices to specify the density function for $y_{t}$.\footnote{The iid assumption is done for simplicity; it is straightforward to extend the framework to a Markov model for output.} The subjective model is given by a family of density functions, $\{q_{\theta}\colon\theta\in\Theta\}$
where $q_{\theta}$ is a joint probability of the output vector $(y_F,y_H)$, parametrized by $\theta \in \Theta$, and $\Theta$ is the parameter set. The mapping $\theta \mapsto q_{\theta}$ is one-to-one so that each index $\theta$ fully characterizes one (and only one) density.

Agents
are uncertain only about the output process, and use data on past realizations of output to form beliefs about the densities in the subjective model that best fit the data. As typically assumed in the literature, agents correctly understand the process driving prices, i.\,e., the equilibrium mapping from output realizations to prices (see equation \eqref{e: LinearPrices} below).\footnote{Extending this assumption, while outside the scope of this paper, is not difficult. It requires considering a subjective model over the process of output and prices, where agents use past realization of all variables to update their beliefs.}

\subsubsection{Agents' Beliefs}

The objective is to model the belief of the generation $n$ at time
$t\geq n$ in country $i$, which is formally defined as a probability measure over the parameter set, i.\,e.,  $\mu_{t}^{i,n}\in\mathcal{P}(\Theta)$ where $\mathcal{P}(\Theta)$ denotes the set of probability measures over $\Theta$.

The learning problem of the generation $n$ in country $i$ is as follows. Given a prior $\mu_{0}^{i}\in\mathcal{\mathcal{P}}(\Theta)$, a scaling parameter $\varrho>0$ and, for each time $t$, a loss function, $\ell^{i,n}_{t}:\mathbb{R}^{2t}\times\Theta\rightarrow\mathbb{R}$, the posterior of generation $n$ at time $t\geq n$ in country
$i$, is
\[
\mu_{t}^{i,n}=\arg\min_{\nu\in\mathcal{P}(\Theta)}\int\ell_{t}^{i,n}(y_{0:t},\theta)\nu(d\theta)+ \varrho DIV(\nu,\mu_{0}^{i}),
\]
where $DIV(\nu,\mu)\equiv\int\log\frac{d\nu}{d\mu}(\theta)\nu(d\theta)$
is the KL divergence and $y_{0:t} = (y_{0},...,y_{t})$.\footnote{Formally $\theta\mapsto f(\theta)\equiv\frac{d\nu}{d\mu}(\theta)$
	is the derivative of $\mu$ with respect to $\nu$, i.\,e., $\int_{A}f(\theta)\mu(d\theta)=\nu(A)$
	for any Borel set $A$. It is well-defined provided $\nu$ is absolutely
	continuous with respect to $\mu$. If this is not the case, we simply
	set $DIV$ to $\infty$.}

This optimization problem can be interpreted through the lens of rational inattention (see \cite{mackowiak2018survey} for a nice review) in the sense that the agent is ``buying" a belief to maximize the payoff $-\ell^{i,n}_{t}$ and paying a ``cost" given by $-\varrho DIV(\cdot, \mu^{i}_{0})$. In fact, for $\varrho>0$, the solution is given by
\[
\mu_{t}^{i,n}(A)=\frac{\int_{A}\exp\{-\ell_{t}^{i,n}(y_{0:t},\theta)/\varrho\}\mu_{0}^{i}(d\theta)}{\int_{\Theta}\exp\{-\ell_{t}^{i,n}(y_{0:t},\theta)/\varrho\}\mu_{0}^{i}(d\theta)}
\]
for any Borel set $A\subseteq\Theta$. It is easy to see that in fact
$\mu_{t}^{i,n}\in\mathcal{P}(\Theta)$. For $\varrho=0$, $\mu_{t}^{i,n}$
puts probability one to $\arg\min_{\theta\in\Theta}\ell_{t}^{i,n}(y_{0:t},\theta)$; thus, $\varrho=0$ characterizes ``frequentist" learning scheme.
We now discuss some examples.

\subsubsection{Examples}

\textbf{Bayesian Learners. }First consider the case where the loss function is given by the log-likelihood
function, i.\,e., $\ell_{t}^{i,n}(y_{0:t},\theta)\equiv-(t+1)^{-1}\sum_{k=0}^{t}\log q_{\theta}(y_{k})$.
If $\varrho=(t+1)^{-1}$, then
\[
\mu_{t}^{i,n}(A)=\frac{\int_{A}\prod_{k=0}^{t}q_{\theta}(y_{k})\mu_{0}^{i}(d\theta)}{\int_{\Theta}\prod_{k=0}^{t}q_{\theta}(y_{k})\mu_{0}^{i}(d\theta)},~\forall~A \subseteq \Theta~Borel
\]
which is the standard Bayesian posterior for the subjective model
$\{q_{\theta}\colon\theta\in\Theta\}$ given prior $\mu_{0}^{i}$.
If $\varrho=0$, then $\mu_{t}^{i,n}$ puts probability one to the
maximum likelihood estimator, thus modeling the beliefs of a ``frequentist'' using the log-likelihood as the loss function.

\bigskip{}

\textbf{Learners from experience. }Motivated by the prior empirical evidence on experience effects, we are interested in agents who do not attach equal weight to all past observations. Under our formulation, this feature can easily be captured by simply considering a different loss function given by $\ell_{n+age}^{i,n}(y_{0:n+age},\theta)\equiv-\sum_{k=0}^{n+age}\varpi(k)\log q_{\theta}(y_{k})$, where $(\varpi(k))_{k=0}^{n+age}$ are weights chosen by the researcher and $n + age$ is the calendar time, decomposed in the generation $n$'s birth year and age.

For instance, in order to model a ``pseudo-Bayesian learner'' who
only considers observations during his lifetime (cf.\;\cite{malmendierpouzovanasco2018}), we
set $\varpi(k)=0$ for all $k<n$ and $\varpi(k)=\frac{1}{age+1}$
for all $k \in \{n,...,n+age\} $, and $\varrho=\frac{1}{age+1}$, and thus obtain
\[
\mu_{n+age}^{i,n}(A)=\frac{\int_{A}\prod_{k=n}^{n+age}q_{\theta}(y_{k})\mu_{0}^{i}(d\theta)}{\int_{\Theta}\prod_{k=n}^{n+age}q_{\theta}(y_{k})\mu_{0}^{i}(d\theta)}, ~\forall~A \subseteq \Theta~Borel.
\]

To model the case for experience-based learners who also overweight recent observations (as modeled in \cite{malmendierpouzovanasco2018}), we set $w(k)=0$
for all $k<n$ and $(\varpi(k))_{k=n}^{n+age}$ to be increasing,
e.\,g, $\varpi(k)=\frac{\beta^{n+age-k}}{\sum_{b=0}^{age}\beta^{age-b}}$
for all $k\in\{n,...,n+age\}$ where $\beta\in(0,1)$. In this case,
with $\varrho=1$, it follows that
\[
\mu_{n+age}^{i,n}(A)=\frac{\int_{A}\prod_{k=n}^{n+age}q_{\theta}(y_{k})^{\varpi(k)}\mu_{0}^{i}(d\theta)}{\int_{\Theta}\prod_{k=n}^{n+age}q_{\theta}(y_{k})^{\varpi(k)}\mu_{0}^{i}(d\theta)},~\forall~A \subseteq \Theta~Borel.
\]

\bigskip{}

\textbf{The Gaussian Case. }We denote the specification we will use throughout the rest of the paper as the ``Gaussian Case.'' For country $i \in \{H,F\}$, the subjective model is Gaussian with covariance matrix $\Sigma = \sigma^{2} I$, and the prior $\mu^{i}_{0}$ is also Gaussian, with mean $(m^{i}_{i},m^{i}_{j})$ and covariance matrix $\Sigma_{0} \equiv [1/\tau^{i}_{i},0;0,1/\tau^{i}_{j}]$ for $j \ne i$. The parameter set is $\Theta = \mathbb{R}^{2}$, i.\,e., agents only learn about the mean of the subjective model and
the two output processes are independent.

For general weights $(\varpi(k))_{k=n}^{n+age}$, the posterior of generation $n$ at time $n+age\geq n$ in country $i$ is
\[
\mu_{n+age}^{i,n}=\arg\min_{\nu\in\mathcal{P}(\Theta)}\int\sum_{k=n}^{n+age}\left\{ 0.5\varpi(k)(y_{k}-\theta)^{T}\Sigma^{-1}(y_{k}-\theta)\right\} \nu(\theta)d\theta+\varrho DIV(\nu,\mu_{0}^{i}).
\]

For $\varrho=0$, it follows that $\mu_{n+age}^{i,n}$ will assign probability
one to $\hat{\theta}^{i,n}_{n+age} = (\hat{\theta}^{i,n}_{i,n+age},\hat{\theta}^{i,n}_{j,n+age}) \equiv\sum_{k=n}^{n+age}\varpi(k)y_{k}$.
For $\varrho=1$, it follows that:\footnote{The fact that we consider $\varrho = 1$ --- as opposed to any $\varrho > 0$ --- comes with little loss of generality because what matters for the posterior belief is $\varrho \sigma^{2} $ and not $\varrho$ individually.}

\begin{proposition} \label{p: posterior_beliefs}
	The posterior belief $\mu_{n+age}^{i,n}$ of generation $n$ at time $n+age\geq n$ in country
	$i \in \{H,F\}$ is the product of two Gaussians with means
	\begin{align*}
	\hat{\theta}^{i,n}_{j,n+age} =   \frac{age+1}{age+1 + \tau^{i}_{j} \sigma^{2}} \frac{1}{age+1}\sum_{k=n}^{n+age}\varpi(k)y_{j,k}+ m^{i}_{i} \frac{\tau^{i}_{j} \sigma^{2}}{age+1 +  \tau^{i}_{j} \sigma^{2}}
	\end{align*}
	and variances \begin{align*}
	\sigma^{i,n}_{j,n+age} = \frac{\sigma^{2}}{age+1 + \tau^{i}_{j}\sigma^{2}},
	\end{align*}
	for any $j \in \{H,F\}$.
\end{proposition}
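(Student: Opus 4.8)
The plan is to read off the posterior from the general closed form already derived in the excerpt and then to carry out the Gaussian algebra. Setting $\varrho=1$ in that formula, the posterior $\mu_{n+age}^{i,n}$ has a Lebesgue density on $\Theta=\mathbb{R}^2$ proportional to $\exp\{-\ell_{n+age}^{i,n}(y_{0:n+age},\theta)\}$ times the Gaussian prior density of $\mu_0^i$. With the Gaussian loss $\ell_{n+age}^{i,n}(y_{0:n+age},\theta)=\sum_{k=n}^{n+age}0.5\,\varpi(k)(y_k-\theta)^{T}\Sigma^{-1}(y_k-\theta)$, $\Sigma=\sigma^2 I$, and a prior that is Gaussian with diagonal covariance, the exponent is a quadratic form in $\theta$, so the whole argument reduces to standard Gaussian conjugacy, i.e.\ completing the square.

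First I would use that both $\Sigma^{-1}$ and the prior precision matrix are diagonal to split the quadratic form in the exponent into a sum of a term in $\theta_H$ alone and a term in $\theta_F$ alone. This additive separation makes the density factor as a product of two one-dimensional densities, which is precisely the assertion that the posterior is a product of two (independent) Gaussians; this is the step where the independence of the two output processes is used.

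Next, for each fixed coordinate $j\in\{H,F\}$ I would collect the $\theta_j$-terms of the exponent, namely $0.5\,\sigma^{-2}\sum_{k=n}^{n+age}\varpi(k)(y_{j,k}-\theta_j)^2+0.5\,\tau_j^{i}(\theta_j-m_j^{i})^2$, and complete the square in $\theta_j$. The coefficient of $\theta_j^2$ gives the posterior precision $\sigma^{-2}\sum_{k=n}^{n+age}\varpi(k)+\tau_j^{i}$, and the vertex of the parabola gives the posterior mean as the precision-weighted combination of the weighted data sum $\sum_{k=n}^{n+age}\varpi(k)y_{j,k}$ and the coordinate-$j$ prior mean. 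Under the normalization $\sum_{k=n}^{n+age}\varpi(k)=age+1$ (equal weights $\varpi(k)\equiv 1$ being the ordinary Bayesian benchmark), the precision equals $(age+1+\tau_j^{i}\sigma^2)/\sigma^2$, which yields the stated variance $\sigma_{j,n+age}^{i,n}$; rewriting the precision-weighted mean as a convex combination of $\frac{1}{age+1}\sum_{k}\varpi(k)y_{j,k}$ and the prior mean, with weights $\frac{age+1}{age+1+\tau_j^{i}\sigma^2}$ and $\frac{\tau_j^{i}\sigma^2}{age+1+\tau_j^{i}\sigma^2}$, reproduces the stated formula for $\hat\theta_{j,n+age}^{i,n}$. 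Because the completed exponent is $-\tfrac12(\text{precision})(\theta_j-\text{mean})^2$ up to a constant and the precision is strictly positive, the one-dimensional kernel is integrable and normalizes to a genuine Gaussian, so nothing beyond the general formula is needed for existence.

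The computation is routine, so the only real care is bookkeeping rather than conceptual. The two points to watch are: (i) the additive separation of the quadratic form, which rests entirely on the diagonality of $\Sigma$ and of the prior covariance; and (ii) the weight normalization, so that the aggregate weight $\sum_{k}\varpi(k)$ matches the $age+1$ appearing in both the mean and the variance. Once these are in place, reading the two Gaussian parameters off the completed square is immediate.
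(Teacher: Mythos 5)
Your proof is correct and follows essentially the same route as the paper's: both reduce the posterior to a completed square in the Gaussian exponent, the only cosmetic difference being that you separate the two coordinates upfront using diagonality while the paper completes the square in matrix form and then reads off the diagonal entries of $\Xi_{age}$. You also make explicit the normalization $\sum_{k=n}^{n+age}\varpi(k)=age+1$ needed for the stated precision, a condition the paper's proof uses implicitly when it replaces the weight sum by $age+1$.
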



Throughout the paper, we make three additional assumptions. First, we focus on constant weights, i.\,e., $\varpi(k) = 1$ for all $k \in \{n,...,n+age\}$. Generalizing this assumption to allow for more flexible weight profiles --- for instance, profiles that model recency bias --- will be interesting when considering longer-lived agents.

Second, we impose symmetry in the precisions, in the sense that $\tau \equiv \tau^{i}_{i}$ and $\tau^{\ast} \equiv \tau^{j}_{i}$ for all $j \ne i$ and any $i \in \{H,F\}$.
(Here and in other instances, we use a star $^*$ to indicate ``foreign country" when the context does not require keeping track of all the information in the indices, allowing us to highlight cross-country symmetries.)
Given Proposition \ref{p: posterior_beliefs}, this assumption readily implies symmetry in the posterior standard deviation, i.\,e., $\sigma_{age} \equiv \sigma^{i,n}_{i,n+age}$ and $\sigma^{\ast}_{age} \equiv  \sigma^{j,n}_{i,n+age}$ for any $j \ne i$ and any $i \in \{H,F\}$. In addition, this assumption also implies symmetry on the weights of the posterior mean, i.\,e., for any $i \in \{H,F\}$ and any $j \ne i$,
	\begin{align} \label{eq: posterior_beliefs}
\hat{\theta}^{i,n}_{i,n+age} =  w_{age} \sum_{k=n}^{n+age}\omega(age) y_{j,k}+ (1-w_{age}) m^{i}_{i} \\
\hat{\theta}^{i,n}_{j,n+age} =  w^{\ast}_{age} \sum_{k=n}^{n+age}\omega(age) y_{j,k}+ (1-w^{\ast}_{age}) m^{i}_{j}
\end{align}
where $w_{age} \equiv  \frac{age+1}{age+1 + \tau \sigma^{2}}$, $w^{\ast}_{age} \equiv  \frac{age+1}{age+1 + \tau^{\ast} \sigma^{2}}$ and $\omega(age) = 1/(age+1)$.

Third, we assume that domestic agents have more precise prior beliefs about their own country's output than about the foreign country's output, i.\,e., $\tau^{\ast} < \tau$. This assumption captures the idea that agents feel more confident about their knowledge of their own country, which they have experienced since childhood, than about their knowledge of the foreign country. This could be due to over-confidence, to early life experiences, or to some form of inter-generational information transmission. As agents have more precise priors about their own country's output, and all agents observe the same information, agents will also have more precise posteriors about their own country's output. As a result, the posterior standard deviation for foreign output will be higher than for domestic output, $\sigma_{age} < \sigma_{age}^*$, for any $age$ (i.\,e., cohort). In addition, old agents have more ``experience" than younger agents, and thus have a more precise posterior belief for a given country's output: $\sigma_{age} > \sigma_{age+1}$ and $\sigma^{\ast}_{age} > \sigma^{\ast}_{age+1}$.

We conclude this part by pointing out that, under experience-based learning, the prior has a non-vanishing role on the formation of beliefs. Moreover, the weight that an agent assigns to the output realization $y_{j,t}$ for $j\in\{H,F\}$ when forming her beliefs varies with her $age\in\{0,1\}$ and her country of origin $i\in \{H,F\}$. These features, which are not present in the standard Bayesian learning, are key to generate heterogeneity on the response to macroeconomics shocks of different generations in different countries, and as such, are key drivers of the results in the paper.

%

\subsection{Equilibrium Definition} \label{sec: Equilibrium}

We now proceed to define the equilibrium of
the economy.

\begin{definition}[Equilibrium]
	An equilibrium is a demand profile for the risky assets
	$\{x^{i,n}_{j,t}\}$,
	a demand profile for the risk-free asset
	$\{a^{i,n}_{t}\}$, and price schedules
	$\{ p_{j,t}\}$ such that
	\begin{enumerate}
		\item  given the price schedule, $\{ (x^{i,n}_{H,t},x^{i,n}_{F,t},a^{i,n}_{t} ) :
		t \in \{ n,n+1 \}\}$ solve the maximization problem \eqref{eq: Generation n Problem} of generation $n$ in country $i\in\{H,F\}$;
		\item markets clear in all $t \in \mathbb{N}_{0}$: \begin{align}
		1 =& \frac{1}{4}\left(  x^{H,t}_{H,t} + x^{H,t-1}_{H,t} + x^{F,t}_{H,t} + x^{F,t-1}_{H,t} \right), \\
		1 =& \frac{1}{4}\left(x^{H,t}_{F,t} + x^{H,t-1}_{F,t} +  x^{F,t}_{F,t} +  x^{F,t-1}_{F,t} \right).
		\end{align}
	\end{enumerate}
\end{definition}

We denote country $i$'s aggregate demand of the asset of country $j$ as
\begin{equation} \label{eq: aggregate_demands}
X^i_{j,t} \equiv \frac{1}{4}\left(x^{i,t}_{j,t} +  x^{i,t-1}_{j,t}\right)\end{equation}
the portfolio of risky assets of country $i$ at time $t$ as $\bar{X}_t^i \equiv \{X^i_{H,t},X^i_{F,t}\}$.
Finally, we focus the analysis on the class of linear equilibria, i.\,e., equilibria with affine prices:

\begin{definition}[Linear Equilibrium]\label{d: linear_eq}
	A linear equilibrium is an equilibrium wherein prices are an affine function of output realizations. That is,  there exists a $K \in \mathbb{N}$, $\alpha_j \in \mathbb{R}$, $\beta_{j,k},\beta^*_{i,k} \in \mathbb{R}$ for all $k \in \{0,...,K\}$,  such that
	\begin{equation} \label{e: LinearPrices}
	p_{j,t}=\alpha_j +\sum_{k=0}^{K}\beta_{j,k} \cdot y_{j,t-k} + \sum_{k=0}^{K}\beta^*_{i,k} \cdot y_{i,t-k}.
	\end{equation}
	for all $j,i\in\{H,F\}$ with $i\neq j$.
\end{definition}

\section{Equilibrium Characterization}\label{sec:EqChar}

\subsection{Full-Information Benchmark}

Before we analyze the equilibrium portfolio choices under learning, we derive the demand for risky assets in the benchmark case of known mean outputs $\theta_H$ and $\theta_F$.
In this scenario, there are no disagreements across cohorts nor across countries. The demands of any cohort trading at time $t$ solve \eqref{eq: Generation n Problem}.
The solution to this problem is standard, with \begin{align}\label{eq: RE_Problem}
x^{i,n}_{j,t} = \frac{E^{i,n}_{t}\left[s_{j,t+1}\right]}{\gamma V_t^{i,n}[s_{j,t+1}]}
\end{align}
for all $n\in\{t-1,t\}$, and zero otherwise. As all agents share the same posterior beliefs $E_t^{i,n}[y_{j,t+1}]=\theta_j$ for all $i,j,n,t$, there is no heterogeneity in cohorts' demands. As each risky asset is in unit supply, 
market clearing implies
$x_{j,t}^{i,n} = 1$
for all $n\in\{t-1,t\}$, and zero otherwise. Furthermore, there exists a unique equilibrium with prices $p_{j} =
\frac{\theta_j - \gamma \sigma^2}{R-1}$
for all $t$.\footnote{\hspace{0mm} Our analysis focuses on Fundamental Equilibria, as we rule out the presence of price bubbles.}

\subsection{EBL: Symmetric Countries Benchmark}

We now return to EBL agents, and first characterize portfolio choice in an economy with no heterogeneity across countries. That is, agents do not know output means and are learning through EBL, but they share prior beliefs about all outputs. The main results are summarized in the following Lemma.

\begin{lemma} \label{l: symmetry}
	If all agents are born with the same prior belief about domestic and foreign output, then all countries hold the same fraction of the world portfolio. As a result, aggregate holdings of domestic and foreign assets are \begin{align}
	X_{H,t}^{H}=X_{H,t}^{F}=X_{F,t}^{H}=X_{F,t}^{F}=1,
	\end{align}
	and thus, do not vary with output realizations.
\end{lemma}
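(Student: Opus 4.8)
The plan is to avoid solving for the equilibrium price path and instead exploit the cross-country symmetry of the primitives together with the two market-clearing conditions. The key observation is that, when all agents are born with the same prior, a domestic and a foreign investor of the same generation $n$ face identical decision problems, so their demands for each asset coincide; the clearing conditions then force each country to hold an equal, output-independent share of every asset.

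First I would record the optimal portfolio. Since generation $n$ in country $i$ maximizes $E^{i,n}_{t}[-\exp(-\gamma(x_{H}s_{H,t+1}+x_{F}s_{F,t+1}))]$ under a joint-Gaussian subjective belief, the CARA-Normal first-order conditions give
\[
(x^{i,n}_{H,t},\,x^{i,n}_{F,t})^{T}=\frac{1}{\gamma}\left(\Omega^{i,n}_{t}\right)^{-1}E^{i,n}_{t}\left[(s_{H,t+1},\,s_{F,t+1})^{T}\right],
\]
where $s_{j,t+1}=p_{j,t+1}+y_{j,t+1}-p_{j,t}R$ and $\Omega^{i,n}_{t}$ is the subjective covariance matrix of $(s_{H,t+1},s_{F,t+1})$. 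Hence the demand of generation $n$ in country $i$ depends on $i$ only through its posterior $\mu^{i,n}_{t}$ over $(\theta_{H},\theta_{F})$ and through the market prices and the (commonly understood) equilibrium mapping from outputs to prices.

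The central step is to show $x^{H,n}_{j,t}=x^{F,n}_{j,t}$ for each asset $j\in\{H,F\}$ and each generation $n$. By Proposition \ref{p: posterior_beliefs}, $\mu^{i,n}_{t}$ depends on the country $i$ only through the prior $\mu^{i}_{0}$; under the hypothesis $\mu^{H}_{0}=\mu^{F}_{0}$, and since the entire output history $y_{0:t}$ is public, a domestic and a foreign investor of generation $n$ carry identical posteriors over $(\theta_{H},\theta_{F})$. As price takers facing the same market prices and using the same mapping \eqref{e: LinearPrices}, they also assign the same subjective distribution to $(s_{H,t+1},s_{F,t+1})$; thus $\Omega^{i,n}_{t}$ and $E^{i,n}_{t}[\cdot]$ are independent of $i$, their maximization problems coincide, and their demands are equal. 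Aggregating over the young ($n=t$) and old ($n=t-1$) of each country gives $X^{H}_{j,t}=X^{F}_{j,t}$ for $j\in\{H,F\}$. Imposing the two market-clearing conditions pins down the common level of these holdings; since each asset is in unit supply and is split equally between the two symmetric countries, both conditions deliver the same value, so $X^{H}_{H,t}=X^{F}_{H,t}=X^{H}_{F,t}=X^{F}_{F,t}$. Because this holds for every realization of $y_{0:t}$, the aggregate holdings do not vary with output---even though individual demands do.

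The main obstacle is conceptual rather than computational: one must justify that the domestic and foreign problems are genuinely identical despite prices being endogenous. The resolution is that prices are market-wide objects faced identically by all agents, so the only channel through which an agent's nationality can enter its demand is the prior, which symmetry neutralizes; because this reasoning holds for any equilibrium price process, the conclusion is price-free and requires no explicit computation, and in particular does not rely on individual demands being output-independent. For full rigor one should separately confirm existence of a symmetric linear equilibrium in the sense of Definition \ref{d: linear_eq}, so that the market-clearing conditions invoked above are attainable.
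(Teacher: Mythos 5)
Your proposal is correct and follows essentially the same route as the paper: equal priors plus a common, publicly observed output history imply identical posteriors (hence identical demands) for same-generation investors in the two countries, and market clearing then pins the common aggregate holdings at a constant level independent of output realizations. The paper's proof simply instantiates this via the explicit demand formula of Proposition \ref{p: demands} with $\tau=\tau^*$ (so $w_{age}=w^*_{age}$ and $\sigma^2_{age}=\sigma^{*2}_{age}$), while your added caveat about verifying existence of the symmetric linear equilibrium is a reasonable point the paper leaves implicit.
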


When countries share prior beliefs, $\tau=\tau^*$, efficient-portfolio theory holds. Both countries hold a fraction of the world portfolio, which implies that there is no home bias in portfolio holdings, nor cyclicality in capital flows. This is true even though agents form beliefs using their lifetime experiences. Within a country, however, individual cohorts need not hold the world portfolio. Instead, deviations of young agents from the world portfolio are off-set by deviations of old agents in that same country, so that the efficient-portfolio result holds in aggregate.

The results in Lemma \ref{l: symmetry} have important implications for international flows. They suggest that with symmetric prior beliefs and equal demographics of market participants, domestic, foreign, or local booms or recessions should have no impact on a country's aggregate portfolio holdings, and thus international flows. In what follows, we show how asymmetries between countries overturn this result.


\subsection{Equilibrium Demands Under EBL}

We now allow for countries to not be perfectly symmetric. In particular, we suppose that agents have more precise prior distributions about their own country's output than the foreign country's output, $\tau>\tau^*$. We begin by analyzing the demands of generation $n$ in country $i\in\{H,F\}$ at time $t$, given prices $p_{j,t}$, by solving problem \eqref{eq: Generation n Problem}.

\begin{proposition}\label{p: demands} The demand of generation $n\in\{t,t-1\}$ in country $i\in\{H,F\}$ for the risky asset of country $j\in\{H,F\}$ at time $t$ is
\begin{equation}  \label{eq: risky_demands}
x_{j,t}^{i,n}	=\frac{E_{t}^{i,n}\left[y_{j,t+1}+p_{j,t+1}\right]-R p_{j,t}}{\gamma V_{t}^{i,n}\left[y_{j,t+1}+p_{j,t+1}\right]},
\end{equation}
and the demand for the risk-free asset of cohort $n$ in country $i$ at time $t$ is
\begin{equation}
a^{i,n}_{t}= \mathcal{W}_{t}^{i,n} - x_{H,t}^{i,n} p_{H,t} - x_{F,t}^{i,n} p_{F,t}.
\end{equation}
\end{proposition}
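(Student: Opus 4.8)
The plan is to exploit the CARA--Normal structure to reduce problem \eqref{eq: Generation n Problem} to a mean--variance problem, solve the first-order conditions, and read off the per-asset demand. First I would note that a trading cohort ($n\in\{t,t-1\}$) fixes $(x_H,x_F)$ at time $t$, so by \eqref{eqn:LoM-W} terminal wealth $\mathcal{W}^{i,n}_{t+1}=x_H s_{H,t+1}+x_F s_{F,t+1}+\mathcal{W}^{i,n}_t R$ is affine in the excess payoffs $s_{j,t+1}=p_{j,t+1}+y_{j,t+1}-p_{j,t}R$. Using the linear price schedule \eqref{e: LinearPrices}, $p_{j,t+1}$ is affine in output realizations dated $t+1$ and earlier, and conditional on time-$t$ information the only undetermined components are $y_{H,t+1}$ and $y_{F,t+1}$, which are jointly Gaussian under the agent's subjective predictive law. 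Hence $(s_{H,t+1},s_{F,t+1})$, and therefore $\mathcal{W}^{i,n}_{t+1}$, are Gaussian under $E^{i,n}_t$.

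Next I would invoke certainty equivalence: for $\mathcal{W}^{i,n}_{t+1}\sim N(\mu_W,\sigma_W^2)$ under $E^{i,n}_t$, the Gaussian moment-generating function gives $E^{i,n}_t[-\exp(-\gamma\mathcal{W}^{i,n}_{t+1})]=-\exp(-\gamma(\mu_W-\tfrac{\gamma}{2}\sigma_W^2))$, so \eqref{eq: Generation n Problem} is equivalent to maximizing the certainty equivalent $\mathcal{W}^{i,n}_tR+\sum_j x_j E^{i,n}_t[s_{j,t+1}]-\tfrac{\gamma}{2}V^{i,n}_t[\sum_j x_j s_{j,t+1}]$. Writing $\mu=(E^{i,n}_t[s_{H,t+1}],E^{i,n}_t[s_{F,t+1}])^\top$ and $\Sigma$ for the $2\times2$ conditional covariance matrix of $(s_{H,t+1},s_{F,t+1})$, this objective equals $\mu^\top x-\tfrac{\gamma}{2}x^\top\Sigma x$ up to the constant $\mathcal{W}^{i,n}_tR$. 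It is strictly concave, since $\gamma>0$ and $\Sigma$ is positive definite (the payoffs are non-degenerate jointly Gaussian), so the unique maximizer solves $\mu=\gamma\Sigma x$, i.e. $x=\tfrac{1}{\gamma}\Sigma^{-1}\mu$.

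The crux is to show $\Sigma$ is diagonal, so that $x=\tfrac{1}{\gamma}\Sigma^{-1}\mu$ collapses to the claimed per-asset formula rather than carrying an off-diagonal hedging term in $\Sigma_{HF}$. I would argue that, because outputs are independent across countries and the linear equilibrium consistent with this independence has each asset's price load only on its own country's output (cf.\ the discussion after \eqref{e: LinearPrices}), $s_{H,t+1}$ is a function of $y_{H,t+1}$ alone and $s_{F,t+1}$ of $y_{F,t+1}$ alone; since the subjective predictive laws of $y_{H,t+1}$ and $y_{F,t+1}$ are independent Gaussians (the posterior being a product of two Gaussians by Proposition \ref{p: posterior_beliefs}), $\mathrm{Cov}^{i,n}_t(s_{H,t+1},s_{F,t+1})=0$. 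This is the step I expect to be the main obstacle, as it is the only place cross-country independence and the equilibrium price structure enter, and the stated formula would be false without it.

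With $\Sigma=\mathrm{diag}(V^{i,n}_t[s_{H,t+1}],V^{i,n}_t[s_{F,t+1}])$, the first-order condition yields $x^{i,n}_{j,t}=E^{i,n}_t[s_{j,t+1}]/(\gamma V^{i,n}_t[s_{j,t+1}])$; substituting $s_{j,t+1}=p_{j,t+1}+y_{j,t+1}-p_{j,t}R$ and using that $p_{j,t}$ is known at $t$ (hence drops from the conditional variance) gives exactly \eqref{eq: risky_demands}. The risk-free demand is then immediate from the budget constraint \eqref{eqn:BC}, namely $a^{i,n}_t=\mathcal{W}^{i,n}_t-x^{i,n}_{H,t}p_{H,t}-x^{i,n}_{F,t}p_{F,t}$, which completes the argument.
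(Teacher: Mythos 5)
Your proposal is correct and follows essentially the same route as the paper's proof: exploit the CARA--Normal structure to reduce \eqref{eq: Generation n Problem} to a mean--variance problem, take the first-order condition, and read off the risk-free demand from the budget constraint. The only difference is one of emphasis --- you explicitly isolate and justify the diagonality of the conditional covariance matrix of $(s_{H,t+1},s_{F,t+1})$, whereas the paper disposes of the cross term with the one-line remark that outputs are independent across countries; your version is the more careful rendering of the same argument.
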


With these demands, we can impose market clearing to solve for market prices of the risky asset of country $j\in\{H,F\}$:
\begin{equation} \label{eq: market_clearing}
\sum_{n\in \{t,t-1\}} \left(  \frac{1}{4} \frac{E_{t}^{H,n}\left[y_{j,t+1} + p_{j,t+1}\right]-R p_{j,t}}{\gamma V_{t}^{H,n}\left[y_{j,t+1}+p_{j,t+1}\right]} + \frac{1}{4} \frac{E_{t}^{F,n}\left[y_{j,t+1}+p_{j,t+1}\right]-R p_{j,t}}{\gamma V_{t}^{F,n}\left[y_{j,t+1}+p_{j,t+1}\right]} \right) = 1
\end{equation}

To solve this, we guess that the price of the asset of country $j$ at time $t$ is given by equation \eqref{e: LinearPrices}. We use the method of undetermined coefficients to verify our guess and solve for the price loadings to fully characterize the behavior of equilibrium asset prices in this economy. Results are summarized in the following Proposition, where we use again
the simplified subindices 0 and 1 for ``the young'' and ``the old'' with subindices 0 and 1 (more generally, an agent's \textit{age}).

\begin{proposition}\label{p: eq_prices}
The equilibrium price of the asset of country $j\in\{H,F\}$ at time $t$ is given by
\begin{equation}
p_{j,t} = \alpha_j + \beta_{j,0} y_{j,t} + \beta_{j,1} y_{j,t-1}
\end{equation}
where {\small
	\begin{align*}
	\beta_{j,0}&=\frac{\Sigma R}{\Sigma R-\frac{1}{4}\left\{ \frac{w_{0}}{\sigma_{0}^{2}}+\frac{w_{1}\omega}{\sigma_{1}^{2}}+\frac{w_{0}^{*}}{\sigma_{0}^{*2}}+\frac{w_{1}^{*}\omega}{\sigma_{1}^{*2}}+\frac{1}{R}\left(\frac{w_{1}\left(1-\omega\right)}{\sigma_{1}^{2}}+\frac{w_{1}^{*}\left(1-\omega\right)}{\sigma_{1}^{*2}}\right)\right\} } -1 \nonumber \\
	\beta_{j,1}&=\left(1+\beta_{i,0}\right)\frac{\left(1-\omega\right)}{4\Sigma R}\left\{ \frac{w_{1}}{\sigma_{1}^{2}}+\frac{w_{1}^{*}}{\sigma_{1}^{*2}}\right\} \\
	\alpha_{j}&=\frac{\left(1+\beta_{i,0}\right)}{\Sigma\left(R-1\right)}\left(\frac{1-w_{0}}{\sigma_{0}^{2}}+\frac{1-w_{1}}{\sigma_{1}^{2}}+\frac{1-w_{0}^{*}}{\sigma_{0}^{*2}}+\frac{1-w_{1}^{*}}{\sigma_{1}^{*2}}\right)m-\frac{\gamma\left(1+\beta_{i,0}\right)^2}{\Sigma\left(R-1\right)} \nonumber
	\end{align*}}
\noindent for $\Sigma \equiv \frac{1}{4}\left(\frac{1}{\sigma_{0}^{2}}+\frac{1}{\sigma_{1}^{2}}+\frac{1}{\sigma_{0}^{*2}}+\frac{1}{\sigma_{1}^{*2}}\right).$
\end{proposition}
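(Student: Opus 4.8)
The plan is to verify the conjectured affine price via the method of undetermined coefficients, starting from the market-clearing condition \eqref{eq: market_clearing} for asset $j$. The first observation, which collapses the general guess \eqref{e: LinearPrices} to the stated two-lag, own-output form, is that outputs are independent across countries and iid over time, and that each investor's posterior mean of $\theta_j$ (from Proposition \ref{p: posterior_beliefs} and \eqref{eq: posterior_beliefs}) loads only on the asset-$j$ realizations she has personally experienced. Since the oldest trader at $t$ is the old cohort (age $1$, born at $t-1$), who has experienced only $y_{j,t}$ and $y_{j,t-1}$, every demand for asset $j$ in \eqref{eq: risky_demands} is affine in $(y_{j,t},y_{j,t-1})$ alone. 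Hence the cross-country loadings $\beta^*_{i,k}$ vanish and $K=1$, so I may take $p_{j,t}=\alpha_j+\beta_{j,0}y_{j,t}+\beta_{j,1}y_{j,t-1}$ and verify consistency a posteriori.

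Next I would assemble the four perceived excess-payoff terms that enter \eqref{eq: market_clearing}: young and old in the domestic country $j$ (prior precision $\tau$, posterior weights $w_0,w_1$ and variances $\sigma_0^2,\sigma_1^2$) and young and old abroad (precision $\tau^*$, weights $w_0^*,w_1^*$, variances $\sigma_0^{*2},\sigma_1^{*2}$). Using $p_{j,t+1}=\alpha_j+\beta_{j,0}y_{j,t+1}+\beta_{j,1}y_{j,t}$, each group $g$'s perceived mean payoff is $E^{g}_t[y_{j,t+1}+p_{j,t+1}]=\alpha_j+(1+\beta_{j,0})\hat\theta^{g}_{j,t}+\beta_{j,1}y_{j,t}$, where $\hat\theta^g_{j,t}$ is that group's posterior mean of $\theta_j$ from \eqref{eq: posterior_beliefs}; in particular the old cohorts contribute the $y_{j,t-1}$ term through their average $\tfrac12(y_{j,t-1}+y_{j,t})$. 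The perceived variance is $V^g_t[y_{j,t+1}+p_{j,t+1}]=(1+\beta_{j,0})^2\sigma^2_{g}$ with $\sigma^2_g\in\{\sigma_0^2,\sigma_1^2,\sigma_0^{*2},\sigma_1^{*2}\}$, so the common factor $\gamma(1+\beta_{j,0})^2$ pulls out of every demand.

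I would then substitute these into \eqref{eq: market_clearing}, write $Rp_{j,t}=R\alpha_j+R\beta_{j,0}y_{j,t}+R\beta_{j,1}y_{j,t-1}$, and match the coefficients of $y_{j,t}$, $y_{j,t-1}$, and the constant, since the right-hand side ($=1$) carries no output terms. Matching the $y_{j,t-1}$ coefficient (only the two old cohorts contribute on the mean side) yields $\beta_{j,1}$ in terms of $(1+\beta_{j,0})$, $\Sigma$, and $\{w_1/\sigma_1^2,\,w_1^*/\sigma_1^{*2}\}$, giving the stated $\beta_{j,1}$ formula (using $\omega=\omega(1)=\tfrac12$, so $\omega=1-\omega$, and cross-country symmetry $\beta_{i,0}=\beta_{j,0}$). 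Substituting this into the $y_{j,t}$-coefficient equation turns it into a single linear equation in $\beta_{j,0}$ of the form $(1+\beta_{j,0})B=4\Sigma R\,\beta_{j,0}$, whose solution $\beta_{j,0}=B/(4\Sigma R-B)$ rearranges to the displayed ratio. Finally the constant equation, which carries the prior means through the $(1-w_g)$ terms and the risk premium through $\gamma(1+\beta_{j,0})^2$, pins down $\alpha_j$.

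The bulk of the work — and the only place requiring care — is bookkeeping rather than conceptual: correctly reading off each cohort's posterior mean and predictive variance, tracking the $(1+\beta_{j,0})$ and $(1+\beta_{j,0})^2$ factors so that they cancel in the two slope equations but survive (to the appropriate powers) in $\alpha_j$, and confirming self-consistency — namely that no output lag beyond $t-1$ and no foreign-output loading are generated, which is guaranteed by cross-country independence and by the age-$1$ horizon of the oldest trader. The apparent nonlinearity in $\beta_{j,0}$ is benign, since eliminating $\beta_{j,1}$ first leaves a linear equation.
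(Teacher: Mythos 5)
Your proposal is essentially the paper's own argument -- guess an affine price, substitute the posterior means and predictive variances from Proposition \ref{p: posterior_beliefs} into the market-clearing condition \eqref{eq: market_clearing}, and match coefficients on $y_{j,t}$, $y_{j,t-1}$, and the constant; your three matching equations coincide with the paper's \eqref{e: MUC_1}--\eqref{e: MUC_3} specialized to $\phi^i_{age}=\tfrac14$, and your order of solution ($\beta_{j,1}$ first, then the linear equation for $\beta_{j,0}$, then $\alpha_j$) is the same.

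The one substantive difference is how the price guess is collapsed to the two-lag, own-output form. You restrict the guess \emph{a priori}, arguing that since posteriors about $\theta_j$ load only on experienced realizations of $y_j$, every demand for asset $j$ is affine in $(y_{j,t},y_{j,t-1})$ alone. As written this is mildly circular: demands depend on $E^{i,n}_t[y_{j,t+1}+p_{j,t+1}]$ and on $V^{i,n}_t[y_{j,t+1}+p_{j,t+1}]$, and under the general linear guess of Definition \ref{d: linear_eq} the price $p_{j,t+1}$ may load on foreign output with coefficients $\beta^*_{i,k}$, which would feed both foreign realizations into the mean and an extra term $\beta^{*2}_{i,0}\sigma^{*2}_{age}$ into the variance. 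Your ``verify consistency a posteriori'' caveat rescues existence -- positing the restricted form and checking that markets clear is a legitimate guess-and-verify -- but it yields only that \emph{a} linear equilibrium of this form exists. The paper instead keeps the general guess with arbitrary $K$ and foreign loadings, kills the lags beyond the oldest participant's experience by a backward recursion on the undetermined coefficients, and then shows that the two matching conditions for $y_{j,t}$ and $y_{j,t-1}$ force a fixed-point equation of the form $\beta^*_{j,0}=c\,\beta^*_{j,0}$ that generically admits only $\beta^*_{j,0}=0$ (hence $\beta^*_{j,1}=0$). That extra step is what justifies the definite article in ``the equilibrium price'' within the linear class, so you should either reproduce it or weaken your claim to existence. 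The coefficient bookkeeping in the remainder of your argument is correct.
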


Proposition \ref{p: eq_prices} makes two important points. First, the price of output in country $i$ only varies with realizations of output of that same country. This result relies on the assumption that agents (correctly) believe that outputs are independently distributed across countries.
Second, the price of output depends on past output realizations that have been observed by at least one market participant. That is, prices are sensitive to past output realizations that have been experienced by (some) market participants.

With these results in hand, we can compute the holdings of both risky assets for each cohort in each country as a function of past output realizations and analyze the effect of output shocks on cross-country flows.

At time $t$, the holdings of cohort $n\in\{t,t-1\}$ in country $i\in\{H,F\}$ of assets from country $i$ and $j\in\{H,F\}\neq i$ are
{\footnotesize
	\begin{align*}
	x_{i,t}^{i,t}	&=\frac{\tilde{\alpha_{i}}+\left(1+\beta_{0,i}\right)\left(1-w_{0}\right)m}{\gamma\left(1+\beta_{0,i}\right)^{2}\sigma_{0}^{2}}+\frac{w_{0}\left(1+\beta_{0,i}\right)+\beta_{1,i}-R\beta_{0,i}}{\gamma\left(1+\beta_{0,i}\right)^{2}\sigma_{0}^{2}}y_{i,t}-\frac{R\beta_{1,i}}{\gamma\left(1+\beta_{0,i}\right)^{2}\sigma_{0}^{2}}y_{i,t-1} \\
	x_{i,t}^{i,t-1}	&=\frac{\tilde{\alpha_{i}}+\left(1+\beta_{0,i}\right)\left(1-w_{1}\right)m}{\gamma\left(1+\beta_{0,i}\right)^{2}\sigma_{1}^{2}}+\frac{w_{1}\left(1+\beta_{0,i}\right)\omega+\beta_{1,i}-R\beta_{0,i}}{\gamma\left(1+\beta_{0,i}\right)^{2}\sigma_{1}^{2}}y_{i,t}+\frac{w_{1}\left(1+\beta_{0,i}\right)\left(1-\omega\right)-R\beta_{1,i}}{\gamma\left(1+\beta_{0,i}\right)^{2}\sigma_{1}^{2}}y_{i,t-1}  \\
	x_{j,t}^{i,t}	&=\frac{\tilde{\alpha_{j}}+\left(1+\beta_{0,j}\right)\left(1-w_{0}^{*}\right)m}{\gamma\left(1+\beta_{0,j}\right)^{2}\sigma_{0}^{*2}}+\frac{w_{0}^{*}\left(1+\beta_{0,j}\right)+\beta_{1,j}-R\beta_{0,j}}{\gamma\left(1+\beta_{0,j}\right)^{2}\sigma_{0}^{*2}}y_{j,t}-\frac{R\beta_{1,j}}{\gamma\left(1+\beta_{0,j}\right)^{2}\sigma_{0}^{*2}}y_{j,t-1} \\
	x_{j,t}^{i,t-1}	&=\frac{\tilde{\alpha_{j}}+\left(1+\beta_{0,j}\right)\left(1-w_{1}^{*}\right)m}{\gamma\left(1+\beta_{0,j}\right)^{2}\sigma_{1}^{*2}}+\frac{w_{1}^{*}\left(1+\beta_{0,j}\right)\omega+\beta_{1,j}-R\beta_{0,j}}{\gamma\left(1+\beta_{0,j}\right)^{2}\sigma_{1}^{*2}}y_{j,t}+\frac{w_{1}^{*}\left(1+\beta_{0,j}\right)\left(1-\omega\right)-R\beta_{1,j}}{\gamma\left(1+\beta_{0,j}\right)^{2}\sigma_{1}^{*2}}y_{j,t-1}
	\end{align*}
}
where $\tilde{\alpha_{i}}=\alpha_i (1-R)$ and $\omega\equiv\omega(1)=\frac{1}{2}$ as defined in \eqref{eq: posterior_beliefs}. (We have dropped $\omega(0)=1$.)

We now introduce the notion of home bias in portfolio holdings for our two-country economy.

\begin{definition}[Home Bias] We say that there is home bias in portfolio holdings at time $t$ when $$\textit{HB}_{t} \equiv X^H_{H,t}-X^H_{F,t} > 0.$$
\end{definition}

Due to the symmetry embedded in the model, it is sufficient to compare a country's holdings of domestic versus foreign assets to assess the presence of home bias in portfolio holdings. As the world portfolio is given by one unit of the domestic asset and one unit of the foreign asset, no home bias in portfolio holdings would imply $X^H_{H,t} = X^H_{F,t} = X^F_{H,t} = X^F_{F,t} =1$. That is, all countries hold the same fraction of assets of the $H$ and $F$ country, and thus $\textit{HB}_t=0$. Finally, due to market clearing, $\textit{HB}_t>0$ implies that  $X^F_{F,t}-X^F_{H,t}>0$ as well, so it is WLOG to focus on the holdings of country $H$.

We also introduce the notion of booms and recessions in individual countries, defined as periods when positive or negative macro-shocks follow a period of output stability in both countries.

\begin{definition}[Booms and Recessions] We say that there is a recession in country $i$ at time $t$ when $y_{i,t-1}=y_{j,t}=y_{j,t-1}=\bar{y}$ and $y_{i,t}<\bar{y}$. Analogously, we say that there is a boom in country $i$ at time $t$ when $y_{i,t-1}=y_{j,t}=y_{j,t-1}=\bar{y}$ and $y_{i,t}>\bar{y}$.
\end{definition}

We can now show that home bias in portfolio holdings arises in a world with experience-based learners, who believe to have more precise priors about their home country's output.

\begin{proposition}[Home Bias] \label{p: home_bias_priors} If prior beliefs are centered around the truth, i.\,e., $m^i_i=m^j_i=\theta_i$, there will be home bias in portfolio holdings in expectations: $$E\left[X^H_{H,t}-X^H_{F,t}\right] > 0,$$ where $E[\cdot]$ is the unconditional expectations operator over the true distribution of outputs.
\end{proposition}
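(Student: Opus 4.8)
The plan is to read off the two cohort demands that make up $X^H_{H,t}$ and $X^H_{F,t}$ from the explicit formulas displayed just before the Home Bias definition, take unconditional expectations, and compare. Since every demand is affine in output, taking expectations amounts to replacing each $y_{i,t},y_{i,t-1}$ by $\theta$ (outputs are i.i.d.\ with mean $\theta$, and under the maintained symmetric-means assumption of this section $\theta_H=\theta_F=\theta$) and, because priors are centered at the truth, replacing each prior mean $m$ by the same $\theta$.

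First I would verify the key algebraic collapse. Grouping the constant with the coefficients on $y_{i,t}$ and $y_{i,t-1}$ in $x^{i,t}_{i,t}$ and using $\tilde{\alpha}_i=\alpha_i(1-R)$ together with $w_0+(1-w_0)=1$, the numerator reduces to $\theta-(R-1)\big(\alpha_i+(\beta_{0,i}+\beta_{1,i})\theta\big)$, which is exactly the unconditional expected excess payoff $E[s_{i,t+1}]\equiv\theta-(R-1)E[p_{i,t}]$. The identical cancellation, now using $(1-w_1)+w_1\omega+w_1(1-\omega)=1$, occurs for the old cohort $x^{i,t-1}_{i,t}$. Writing $E[s_i]$ for this common expected excess payoff,
\[
E[x^{i,t}_{i,t}]=\frac{E[s_i]}{\gamma(1+\beta_{0,i})^2\sigma_0^2},\qquad E[x^{i,t-1}_{i,t}]=\frac{E[s_i]}{\gamma(1+\beta_{0,i})^2\sigma_1^2},
\]
and the foreign-asset demands give the same expression with $(\sigma_0^2,\sigma_1^2)$ replaced by $(\sigma_0^{*2},\sigma_1^{*2})$ and $\beta_{0,i}$ by $\beta_{0,j}$. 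Summing the two resident cohorts yields
\[
E[X^H_{H,t}]=\frac{E[s_H]}{4\gamma(1+\beta_{0,H})^2}\Big(\tfrac{1}{\sigma_0^2}+\tfrac{1}{\sigma_1^2}\Big),\qquad E[X^H_{F,t}]=\frac{E[s_F]}{4\gamma(1+\beta_{0,F})^2}\Big(\tfrac{1}{\sigma_0^{*2}}+\tfrac{1}{\sigma_1^{*2}}\Big).
\]

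Next I would exploit symmetry. Since the countries differ only through the labels attached to the precisions — each asset is held by its residents with the precise prior ($\tau$) and by foreigners with the imprecise prior ($\tau^*$) — the price loadings of Proposition~\ref{p: eq_prices} coincide across assets, so $\beta_{0,H}=\beta_{0,F}$ and $E[p_{H,t}]=E[p_{F,t}]$, giving $E[s_H]=E[s_F]\equiv E[s]$. Subtracting the two displays then gives
\[
E[X^H_{H,t}]-E[X^H_{F,t}]=\frac{E[s]}{4\gamma(1+\beta_{0,H})^2}\left(\frac{1}{\sigma_0^2}+\frac{1}{\sigma_1^2}-\frac{1}{\sigma_0^{*2}}-\frac{1}{\sigma_1^{*2}}\right).
\]
The bracket is strictly positive because $\tau>\tau^*$ yields $\sigma_{age}<\sigma^*_{age}$ for each age, so the whole expression has the sign of $E[s]$.

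The main obstacle is therefore to sign $E[s]$, and the cleanest route avoids computing $\alpha,\beta_0,\beta_1$ explicitly. By the same symmetry, country $F$'s expected holding of the $H$-asset equals country $H$'s expected holding of the $F$-asset, so market clearing for asset $H$ gives $E[X^H_{H,t}]+E[X^H_{F,t}]=1$. Adding the two displays above, the left-hand side equals $E[s]\,\Sigma/\big(\gamma(1+\beta_{0,H})^2\big)$ with $\Sigma$ as defined in Proposition~\ref{p: eq_prices}; setting this equal to $1$ forces $E[s]=\gamma(1+\beta_{0,H})^2/\Sigma>0$. Combined with the positive bracket, this yields $E[X^H_{H,t}]-E[X^H_{F,t}]>0$. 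As a sanity check, the full-information price $p=(\theta-\gamma\sigma^2)/(R-1)$ gives $s=\gamma\sigma^2>0$, consistent with the sign obtained here.
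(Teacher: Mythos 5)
Your proof is correct, and it takes a genuinely different route from the paper's. The paper argues by comparative statics at the symmetric benchmark: it computes $\partial E[X^{H}_{F,t}]/\partial\tau^{*}$, notes that the terms involving $\partial w^{*}_{age}/\partial\tau^{*}$ vanish in expectation when priors are centered at the truth, signs the remaining variance terms, and then appeals somewhat informally to the fixed asset supply to conclude that the equilibrium price adjustment only partially offsets the demand shift. Your argument instead computes the expected aggregate holdings in closed form and closes the loop with market clearing: once you observe that $E[\,\cdot\,]$ of each affine demand collapses to $E[s_j]$ divided by a deterministic variance, the market-clearing identity $E[X^{H}_{H,t}]+E[X^{F}_{H,t}]=1$ pins down $E[s_H]=\gamma(1+\beta_{H,0})^2/\Sigma>0$ and delivers the explicit expressions $E[X^{H}_{H,t}]=\tfrac{1}{4\Sigma}\bigl(\sigma_0^{-2}+\sigma_1^{-2}\bigr)$ and $E[X^{H}_{F,t}]=\tfrac{1}{4\Sigma}\bigl(\sigma_0^{*-2}+\sigma_1^{*-2}\bigr)$, whose difference is positive precisely because $\tau>\tau^{*}$. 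This buys you two things the paper's proof does not: the conclusion is global in $\tau^{*}<\tau$ rather than a local perturbation around $\tau^{*}=\tau$, and the general-equilibrium price effect is handled exactly rather than by the qualitative remark that ``prices fall to off-set some of this effect.'' One small point worth making explicit: the symmetry step $E[s_H]=E[s_F]$ is not actually needed, since market clearing for each asset separately gives $E[s_j]=\gamma(1+\beta_{j,0})^2/\Sigma$, after which the common factor cancels and the comparison reduces to the precision sums alone; this also frees the argument from any assumption that $\theta_H=\theta_F$.
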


Intuitively, home bias arises because
a more precise domestic prior implies that domestic agents perceive less risk when investing in the domestic than in the foreign asset: $\sigma^2_{age}<\sigma_{age}^{*2}$, where $age \in \{0,1\}$. As a result, all else equal, they are more willing than foreigners to hold the domestic relative to the foreign asset, generating home bias in portfolio holdings.

A more precise domestic prior belief has a second implication: Domestic agents put more weight on their prior belief about domestic output than foreign agents do when computing their posterior mean, i.\,e., $1-w_{age} > 1-w_{age}^*$. From inspection of posterior means \eqref{eq: posterior_beliefs}, it follows that agents under-react to domestic shocks relative to foreign shocks, $w_{age}<w_{age}^*$, when forming their beliefs. As a result, after a negative shock, we observe an inflow of domestic funds together with an outflow of foreign funds, increasing home bias in portfolio holdings, generating cyclicality of capital flows:

\begin{proposition}[Cyclicality of Flows] \label{p: priors_recessions}
After a recession in country $i$ at time $t$, we observe an inflow of domestic funds, $X^i_{i,t}-X^i_{i,t-1}>0$ (retrenchment),  and an outflow of foreign funds, $X^j_{i,t}-X^j_{i,t-1}<0$ (fickleness). The opposite flow pattern follows a boom in country $i$.
\end{proposition}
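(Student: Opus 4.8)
The plan is to reduce both halves of the statement to the sign of a single price--demand coefficient. I would first invoke market clearing for asset $i$, which reads $X^i_{i,t}+X^j_{i,t}=1$ for every $t$. Because this holds identically in the output realizations, retrenchment and fickleness are mirror images: writing $X^i_{i,t}=c_{0}+a\,y_{i,t}+b\,y_{i,t-1}$ and $X^j_{i,t}=c_{0}^{*}+a^{*}\,y_{i,t}+b^{*}\,y_{i,t-1}$ (both affine because, by Proposition \ref{p: eq_prices} and output independence, $p_{i,t}$ and the resulting cohort holdings of asset $i$ depend only on $(y_{i,t},y_{i,t-1})$), the identity forces $a+a^{*}=0$ and $b+b^{*}=0$. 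Hence $X^j_{i,t}-X^j_{i,t-1}=-(X^i_{i,t}-X^i_{i,t-1})$, so it suffices to prove retrenchment and fickleness follows for free.

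For retrenchment I would use that these affine coefficients are time-invariant, so $X^i_{i,t}-X^i_{i,t-1}=a\,(y_{i,t}-y_{i,t-1})+b\,(y_{i,t-1}-y_{i,t-2})$. Under a recession preceded by output stability, $y_{i,t-1}=y_{i,t-2}=\bar y$ and $y_{i,t}<\bar y$, so the lag terms vanish and $X^i_{i,t}-X^i_{i,t-1}=a\,(y_{i,t}-\bar y)$. As $y_{i,t}-\bar y<0$, retrenchment is exactly the claim $a<0$, where aggregating the two domestic cohort holdings gives
\[
a=\frac{1}{4\gamma(1+\beta_{0})^{2}}\left[\frac{w_{0}(1+\beta_{0})+\beta_{1}-R\beta_{0}}{\sigma_{0}^{2}}+\frac{w_{1}\omega(1+\beta_{0})+\beta_{1}-R\beta_{0}}{\sigma_{1}^{2}}\right].
\]

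The heart of the proof is signing $a$. I would isolate the common term $K\equiv\beta_{1}-R\beta_{0}$, which appears in all four cohort coefficients, and use the full market-clearing condition $a+a^{*}=0$ (i.e.\ the equilibrium restriction on the loadings behind Proposition \ref{p: eq_prices}) to solve $K=-(1+\beta_{0})W/(4\Sigma)$, where $W\equiv W_{d}+W_{f}$, $W_{d}\equiv \tfrac{w_{0}}{\sigma_{0}^{2}}+\tfrac{w_{1}\omega}{\sigma_{1}^{2}}$, and $W_{f}\equiv \tfrac{w_{0}^{*}}{\sigma_{0}^{*2}}+\tfrac{w_{1}^{*}\omega}{\sigma_{1}^{*2}}$. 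Substituting back, the $K$ and constant pieces cancel and $a$ collapses to
\[
a=\frac{W_{d}S_{f}-W_{f}S_{d}}{4\gamma(1+\beta_{0})(S_{d}+S_{f})},\qquad S_{d}\equiv\tfrac{1}{\sigma_{0}^{2}}+\tfrac{1}{\sigma_{1}^{2}},\quad S_{f}\equiv\tfrac{1}{\sigma_{0}^{*2}}+\tfrac{1}{\sigma_{1}^{*2}}.
\]
The decisive step is the Gaussian identity implicit in Proposition \ref{p: posterior_beliefs}, namely $\tfrac{w_{age}}{\sigma_{age}^{2}}=\tfrac{age+1}{\sigma^{2}}$ (and identically for the foreign weights), which together with $\omega=\tfrac12$ yields $W_{d}=W_{f}=\tfrac{2}{\sigma^{2}}$. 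The entire asymmetry therefore survives only through the precisions, $S_{d}-S_{f}=2(\tau-\tau^{*})>0$ by the maintained assumption $\tau>\tau^{*}$, so $W_{d}S_{f}-W_{f}S_{d}=\tfrac{2}{\sigma^{2}}(S_{f}-S_{d})<0$ and hence $a<0$ provided $1+\beta_{0}>0$. This delivers retrenchment; fickleness follows from $a^{*}=-a>0$, and the boom case is identical with every inequality reversed.

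The main obstacle is the algebraic reduction of $a$ to the clean ratio above: after substituting the equilibrium loadings $\beta_{0},\beta_{1}$ and eliminating $\beta_{1}-R\beta_{0}$ through market clearing, one must verify that everything except $W_{d}S_{f}-W_{f}S_{d}$ cancels. What makes this tractable is the non-obvious identity $w_{age}/\sigma_{age}^{2}=(age+1)/\sigma^{2}$, which forces $W_{d}=W_{f}$ and concentrates the whole effect in $S_{d}-S_{f}$; absent this cancellation the sign would rest on a far messier comparison of belief-weighted precisions. A secondary point to nail down is the positivity of $1+\beta_{0}$, which I would read off from the equilibrium-existence argument underlying Proposition \ref{p: eq_prices}.
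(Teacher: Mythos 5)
Your proposal is correct and follows essentially the same route as the paper: reduce both claims to the sign of the loading of $X^i_{i,t}$ on $y_{i,t}$, use market clearing to get fickleness for free, and eliminate $\beta_{1}-R\beta_{0}$ via the market-clearing restriction on the price coefficients before signing the result (the paper, like you, asserts $1+\beta_{0}>0$ without further argument). The only difference is cosmetic: the paper finishes by expanding into pairwise terms $(w_0-w_0^*)$, $(w_1\omega-w_0^*)$, etc.\ and evaluating each with the explicit formulas, whereas your identity $w_{age}/\sigma_{age}^{2}=(age+1)/\sigma^{2}$ collapses the same computation to $S_f-S_d=-2(\tau-\tau^{*})<0$ a bit more cleanly.
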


Proposition \ref{p: priors_recessions} suggests that our model can rationalize two patterns of international flows that have been identified in the literature \citep{broner2013gross, forbesetal2012} and that are hard to explain with standard models that \emph{also} rationalize home bias. First, there is an inflow of domestic funds during domestic crises --  \emph{retrenchment}. Second, there is an outflow of foreign funds during domestic crises -- \emph{fickleness}.

\cite{broner2006peril} and \cite{milesi2011great} find that during global downturns all countries experience a retrenchment of funds. This is also true in our model. If we define global downturns as periods in which both countries are in a recession, then the model predicts that agents disinvest in foreign assets while investing in domestic assets. To the extent that outputs are correlated in the real world, then the same pattern is likely to be observed after domestic crisis as well. 

\section{Heterogeneity in Market Participation}\label{sec:Heterogeneity}

To analyze the effect of demographics on capital flows, we now allow for different demographics of market participants across countries.
There are many reasons for such differences in participants' demographics. First, the difference could be driven by differences in the country's overall demographics due to cross-country variation in fertility and mortality rates. Second, cultural reasons or market frictions might induce different generations to participate or exit the market at different points in their lives. Our simple model does not formalize the exogenous and endogenous reasons for the differences in market demographics; but we believe that our reduced-form approach captures some of the forces at play when demographics do differ across countries.

In our simple OLG setting, we
denote by $\phi^{i}_{age}$ the mass of agents of $age\in\{0,1\}$ in country $i$ that participate in the market for risky claims. Agents that do not participate in the market invest all of their wealth in the safe asset/technology.\footnote{Formally, in each country, each cohort (young and old) faces a positive probability of being able to access the market for risky asset, so that in equilibrium the demographics of market participants is given by $\{\phi^{i}_{age}\}.$}

Our equilibrium definition then needs to be adjusted, as the new market clearing condition states that in all $t \in \mathbb{N}_{0}$: \begin{align}
1 =& \phi_0^{H}  x^{H,t}_{H,t} + \phi_1^{H} x^{H,t-1}_{H,t} + \phi_0^{F} x^{F,t}_{H,t} + \phi_1^{F} x^{F,t-1}_{H,t} , \\
1 =&  \phi_0^{H}x^{H,t}_{F,t} + \phi_1^{H} x^{H,t-1}_{F,t} + \phi_0^{F} x^{F,t}_{F,t} + \phi_1^{F} x^{F,t-1}_{F,t}.
\end{align}

By using our linear guess for prices, plugging in the posterior means and variances into equation \eqref{eq: market_clearing}, and using the method of undetermined coefficients, we obtain the following adjusted expressions that fully characterize prices for the asset of country $i\in\{H,F\}$ when market participations differ across countries:
{\small
	\begin{align}
	\beta_{i,0}&=\frac{\Sigma R}{\Sigma R-\left\{ \frac{\phi^{i}_0}{\sigma_{0}^{2}}w_{0}+\frac{\phi^{i}_1}{\sigma_{1}^{2}}w_{1}\omega+\frac{\phi^{j}_0}{\sigma_{0}^{*2}}w_{0}^{*}+\frac{\phi^{j}_1}{\sigma_{1}^{*2}}w_{1}^{*}\omega+\frac{\left(1-\omega\right)}{R}\left(\frac{\phi^{i}_1}{\sigma_{1}^{2}}w_{1}+\frac{\phi^{j}_1}{\sigma_{1}^{*2}}w_{1}^{*}\right)\right\} } - 1 \nonumber \\
	\beta_{i,1}&=\frac{\left(1+\beta_{i,0}\right)\left(1-\omega\right)}{\Sigma R}\left\{ \frac{\phi^{i}_1}{\sigma_{1}^{2}}w_{1}+\frac{\phi^{j}_1}{\sigma_{1}^{*2}}w_{1}^{*}\right\} \\
	\alpha_{i}&=\frac{\left(1+\beta_{i,0}\right)}{\Sigma\left(R-1\right)}\left(\frac{\phi^{i}_0}{\sigma_{0}^{2}}\left(1-w_{0}\right)+\frac{\phi^{i}_1}{\sigma_{1}^{2}}\left(1-w_{1}\right)+\frac{\phi^{j}_0}{\sigma_{0}^{*2}}\left(1-w_{0}^{*}\right)+\frac{\phi^{j}_1}{\sigma_{1}^{*2}}\left(1-w_{1}^{*}\right)\right)m-\frac{\gamma\left(1+\beta_{i,0}\right)^2}{\Sigma\left(R-1\right)} \nonumber
	\end{align}}
where $\Sigma \equiv \frac{\phi^{i}_0}{\sigma_{0}^{2}}+\frac{\phi^{i}_1}{\sigma_{1}^{2}}+\frac{\phi^{j}_0}{\sigma_{0}^{*2}}+\frac{\phi^{j}_1}{\sigma_{1}^{*2}}$ and all other price loadings are zero.\footnote{These results are derived in the Proof of Proposition \ref{p: eq_prices} in Appendix \ref{app:Proofs}}

Portfolio holdings and prices depend on the mass of young and old agents in countries $H$ and $F$, as different agents have different prior beliefs and experiences, and thus their weights matter for aggregate asset demand. We compute the portfolio holdings of each cohort in equilibrium and analyze how different demographics of market participation affect the sensitivity of flows to output shocks that was characterized in Proposition \ref{p: priors_recessions}.

\begin{proposition} \label{p: demographics} There exist thresholds $\bar{\tau_1},\bar{\tau_2}\in(\tau^*,\sigma^{-2})$ such that the cyclicality of capital flows, i.\,e., the sensitivity of domestic portfolio holdings to domestic output shocks, varies with demographics as follows
	{\small
\begin{align}
\frac{\partial}{\partial\phi_{0}^{j}}\left(\frac{\partial X_{i,t}^{i}}{\partial y_{i,t}}\right)&=\frac{\phi_{0}^{i}}{\sigma_{0}^{2}\sigma_{0}^{*2}}\left(w_{0}-w_{0}^{*}\right)+\frac{\phi_{1}^{i}}{\sigma_{0}^{*2}\sigma_{1}^{2}}\left(w_{1}\omega-w_{0}^{*}\right)<0. \label{e: derdem_1}\\
\frac{\partial}{\partial\phi_{1}^{j}}\left(\frac{\partial X_{i,t}^{i}}{\partial y_{i,t}}\right)&=\frac{\phi_{0}^{i}}{\sigma_{0}^{2}\sigma_{1}^{*2}}\left(w_{0}-w_{1}^{*}\omega\right)+\frac{\phi_{1}^{i}}{\sigma_{1}^{2}\sigma_{1}^{*2}}\left(w_{1}-w_{1}^{*}\right)\omega>0, \text{if and only if } \tau<\bar{\tau_1}. \label{e: derdem_2}
\\\frac{\partial}{\partial\phi_{0}^{i}}\left(\frac{\partial X_{i,t}^{i}}{\partial y_{i,t}}\right)&=\frac{\phi_{0}^{j}}{\sigma_{0}^{2}\sigma_{0}^{*2}}\left(w_{0}-w_{0}^{*}\right)+\frac{\phi_{1}^{j}}{\sigma_{0}^{2}\sigma_{1}^{*2}}\left(w_{0}-w_{1}^{*}\omega\right) >0,\text{if and only if } \tau<\bar{\tau_2}.  \label{e: derdem_3} \\
\frac{\partial}{\partial\phi_{1}^{i}}\left(\frac{\partial X_{i,t}^{i}}{\partial y_{i,t}}\right)&=\frac{\phi_{0}^{j}}{\sigma_{0}^{*2}\sigma_{1}^{2}}\left(w_{1}\omega-w_{0}^{*}\right)+\frac{\phi_{1}^{j}}{\sigma_{1}^{2}\sigma_{1}^{*2}}\left(w_{1}-w_{1}^{*}\right)\omega<0. \label{e: derdem_4}
\end{align}}
with the opposite sign for changes in the sensitivity of foreign portfolio holdings to domestic output shocks, i.\,e., $\frac{\partial X_{j,t}^{i}}{\partial y_{i,t}}.$

\end{proposition}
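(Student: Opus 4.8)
The plan is to obtain the aggregate domestic demand sensitivity $S_i \equiv \partial X^i_{i,t}/\partial y_{i,t}$ in closed form and then differentiate it with respect to each participation mass, using the heterogeneous-demographics price loadings displayed just before the proposition (the analogues of Proposition \ref{p: eq_prices}). With heterogeneous participation, aggregate domestic demand is $X^i_{i,t} = \phi^i_0 x^{i,t}_{i,t} + \phi^i_1 x^{i,t-1}_{i,t}$, so reading the coefficient of $y_{i,t}$ off the cohort demand formulas gives
\[
S_i = \frac{1}{\gamma(1+\beta_{0,i})^2}\left[\frac{\phi^i_0\big(w_0(1+\beta_{0,i})+\beta_{1,i}-R\beta_{0,i}\big)}{\sigma_0^2}+\frac{\phi^i_1\big(w_1\omega(1+\beta_{0,i})+\beta_{1,i}-R\beta_{0,i}\big)}{\sigma_1^2}\right].
\]
Writing $B \equiv 1+\beta_{0,i}$ and $g \equiv \beta_{1,i}-R\beta_{0,i}$, this depends on $\phi^i_0,\phi^i_1$ directly and on all four masses through $B$ and $g$ (equivalently through $\beta_{0,i},\beta_{1,i}$ and $\Sigma$). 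The same reading applied to the foreign holders $X^j_{i,t} = \phi^j_0 x^{j,t}_{i,t}+\phi^j_1 x^{j,t-1}_{i,t}$ yields the analogous expression with $(\phi^i_0,\phi^i_1,w_0,w_1,\sigma_0^2,\sigma_1^2)$ replaced by $(\phi^j_0,\phi^j_1,w_0^*,w_1^*,\sigma_0^{*2},\sigma_1^{*2})$.

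The device I would use to avoid inverting the closed-form loadings is market clearing itself: since total demand for asset $i$ equals the constant supply $1$ for every output realization, the coefficient of $y_{i,t}$ in $X^i_{i,t}+X^j_{i,t}$ must vanish, i.e.
\[
\frac{\phi^i_0(w_0 B+g)}{\sigma_0^2}+\frac{\phi^i_1(w_1\omega B+g)}{\sigma_1^2}+\frac{\phi^j_0(w_0^* B+g)}{\sigma_0^{*2}}+\frac{\phi^j_1(w_1^*\omega B+g)}{\sigma_1^{*2}}=0.
\]
Differentiating this identity with respect to a given mass yields one linear equation in the unknown derivatives $B'$ and $g'$; a second equation comes from differentiating the loading identity $\beta_{1,i}=\frac{B(1-\omega)}{\Sigma R}\big(\frac{\phi^i_1 w_1}{\sigma_1^2}+\frac{\phi^j_1 w_1^*}{\sigma_1^{*2}}\big)$ together with $g=\beta_{1,i}-R(B-1)$. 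Solving this $2\times 2$ system delivers $B'$ and $g'$ for each of $\phi^j_0,\phi^j_1,\phi^i_0,\phi^i_1$ without ever inverting the price formulas.

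I would then substitute $B'$ and $g'$ into $\partial S_i/\partial\phi$. For a foreign mass (say $\phi^j_0$) the dependence of $S_i$ is purely through $B,g$, so $\partial S_i/\partial\phi^j_0=\frac{1}{\gamma B^2}\big[\frac{\phi^i_0(w_0 B'+g')}{\sigma_0^2}+\frac{\phi^i_1(w_1\omega B'+g')}{\sigma_1^2}\big]-\frac{2B'}{B}S_i$; for a domestic mass there is an extra explicit term. The heart of the argument is the cancellation: after inserting the solved $B',g'$ and using the identity above to eliminate $g$, all dependence on $\Sigma$, $R$, $B$, and the loadings collapses, and each cross-partial reduces — up to the common positive factor $1/\big(\gamma(1+\beta_{0,i})^2\big)$ — to the compact two-term form in \eqref{e: derdem_1}--\eqref{e: derdem_4}. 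The pattern is transparent: differentiating by a mass in one country produces a sum over the cohorts of the \emph{other} country of $\frac{\phi}{\sigma^2\sigma^2}$ times a difference of belief weights, with the sign convention of that difference flipping between the foreign-mass and domestic-mass derivatives (reflecting that the former is pure price feedback while the latter also carries a direct effect). This cancellation is the main obstacle: it is where the bookkeeping is heaviest, so I would organize it by tracking the coefficient of each $\phi^i_a/(\sigma^2\sigma^2)$ separately and verifying the $\Sigma$- and $R$-laden pieces sum to zero.

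Signing the reduced expressions is then routine. From Proposition \ref{p: posterior_beliefs}, $w_{age}=(age+1)/(age+1+\tau\sigma^2)$ is decreasing in $\tau$, so $\tau>\tau^*$ gives $w_0<w_0^*$ and $w_1<w_1^*$, and with $\omega=\tfrac12$ one checks $w_1\omega<w_0^*$ as well. Hence both brackets in \eqref{e: derdem_1} and both in \eqref{e: derdem_4} are negative, giving the unconditional signs. In \eqref{e: derdem_2} and \eqref{e: derdem_3} the ambiguous terms are governed by $\mathrm{sign}(w_0-w_1^*\omega)$ and the relative sizes of the $\phi^{\cdot}/(\sigma^2\sigma^2)$ weights; since each expression is strictly positive at $\tau=\tau^*$ (where the symmetric difference vanishes and a strictly positive term survives) and turns negative as $\tau$ grows, I would define $\bar\tau_1,\bar\tau_2$ as the roots of these weighted sums and verify, by evaluating at the endpoints and checking monotonicity in $\tau$, that they lie in $(\tau^*,\sigma^{-2})$ — the secondary obstacle. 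Finally, the opposite-sign statement for $\partial X^j_{i,t}/\partial y_{i,t}$ is immediate: differentiating the market-clearing identity $X^i_{i,t}+X^j_{i,t}=1$ in both $\phi$ and $y_{i,t}$ gives $\partial_\phi\partial_{y_{i,t}}X^j_{i,t}=-\partial_\phi\partial_{y_{i,t}}X^i_{i,t}$.
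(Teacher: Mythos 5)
There is a genuine gap, and it sits exactly where you locate the ``heart of the argument.'' Your plan computes the \emph{exact} derivative of $S_i=\partial X^i_{i,t}/\partial y_{i,t}$ with respect to a participation mass, carrying along the induced changes $B'$ and $g'$ in the price loadings, and then asserts that everything collapses to the two-term expressions up to the positive factor $1/\bigl(\gamma(1+\beta_{0,i})^2\bigr)$. That collapse does not happen. Using your own market-clearing identity to eliminate $g$, one gets $S_i=\frac{N}{\gamma(1+\beta_{i,0})\Sigma}$ with $N\equiv\sum_{a,b}\frac{\phi^i_a}{\sigma_a^2}\frac{\phi^j_b}{\sigma_b^{*2}}\,(\text{weight differences})$ the bilinear form in \eqref{e: changes_flows}, so
\begin{equation*}
\frac{\partial S_i}{\partial\phi}=\frac{1}{\gamma(1+\beta_{i,0})\Sigma}\left[\frac{\partial N}{\partial\phi}-N\left(\frac{1}{1+\beta_{i,0}}\frac{\partial\beta_{i,0}}{\partial\phi}+\frac{1}{\Sigma}\frac{\partial\Sigma}{\partial\phi}\right)\right].
\end{equation*}
The displayed right-hand sides of \eqref{e: derdem_1}--\eqref{e: derdem_4} are exactly $\partial N/\partial\phi$; the bracketed correction is generically nonzero (e.g.\ $\partial\Sigma/\partial\phi^j_0=1/\sigma_0^{*2}>0$ and $N<0$ by Proposition \ref{p: priors_recessions}), is not proportional to $\partial N/\partial\phi$, and could in principle overturn the unconditional signs you claim for \eqref{e: derdem_1} and \eqref{e: derdem_4}. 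If you carried out your bookkeeping honestly you would not land on the stated expressions; you would either discover these extra terms or have to drop them silently.

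The paper's proof avoids this entirely by interpreting the cross-partials as derivatives of the \emph{normalized} sensitivity: in the proof of Proposition \ref{p: priors_recessions} it substitutes \eqref{e: MUC_2} once for $\beta_{i,1}-R\beta_{i,0}$, writes $\partial X^i_{i,t}/\partial y_{i,t}\propto N$ (the positive scaling $1/(\gamma(1+\beta_{i,0})\Sigma)$ is factored out and thereafter held fixed), and then differentiates the bilinear form $N$ directly --- each line of the proposition is then a one-line read-off, since $N$ is linear in each $\phi$ separately. Your implicit-differentiation machinery for $B'$ and $g'$ is therefore not only unverified but unnecessary under the intended reading. Two smaller points: your signing of the unambiguous terms ($w_0<w_0^*$, $w_1<w_1^*$, $w_1\omega<w_0^*$ when $\tau>\tau^*$) and the market-clearing argument for the opposite sign of $\partial X^j_{i,t}/\partial y_{i,t}$ match the paper; for the thresholds, the paper establishes negativity for $\tau>\tau^*+\sigma^{-2}$ and monotonicity in $\tau$, placing $\bar\tau_1,\bar\tau_2\in(\tau^*,\tau^*+\sigma^{-2})$, so your endpoint check should be done against that interval rather than taken from the proposition's statement.
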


The first important result from Proposition \ref{p: demographics} is that portfolios holdings of the domestic asset, and thus capital flows, are more sensitive to domestic output shocks as the mass of young foreign agents increases, equation \eqref{e: derdem_1}, or equivalently as the mass of old domestic agents increases, equation \eqref{e: derdem_4}. This is because young foreign agents react the most to such shocks, increasing prices and thus causing a larger decrease in domestic demand for the domestic asset. Old domestic agents instead react the least, further reducing the response of domestic demand for the domestic asset. As for the effect of the young domestic and old foreign agents, the result is ambiguous. This is because it now matters who has a more precise posterior: young domestic agent with no experience but precise priors, or old foreign agents with experience but less precise priors. The conditions on the domestic precision indicate exactly the point in which the effect of a more precise prior precision dominates, which implies that a higher number of domestic young agents or of old foreign agents dampens the sensitivity of capital flows to domestic output shocks.

\section{Empirical Implications} \label{sec:EmpiricalImplications}

The theoretical framework illustrates that the concept of experience-based learning, which has been shown to play a significant role in investor belief formation in the domestic context, may also help explain international capital flows. In this section, we turn to the empirical regularities of home bias and the cyclicality of capital flows, and test whether the predictions of EBL regarding the relation of these regularities with demographics hold in the data.

First, we re-estimate the existence and extent of home bias, in line with results in Proposition \ref{p: home_bias_priors}. Second, we analyze the cyclicality of capital flows, as predicted by Proposition \ref{p: priors_recessions}: Is there an inflow of domestic funds (retrenchment) and an outflow of foreign funds (fickleness) during domestic crises?
Finally, we test if the sensitivity of capital flows to output shocks varies with demographics as predicted by Proposition \ref{p: demographics}: Are portfolio holdings of domestic assets, and thus capital flows, more sensitive to domestic output shocks when the fraction of the older (domestic) population is higher?

\subsection{Data}

Our main sources of data are the International Monetary Fund's (IMF) International Financial Statistics (IFS), including its International Investment Positions (IIP) data; the IMF's Balance of Payments (BOP) data, the World Bank's World Development Indicators (WDI), and the United Nations' (UN) data on World Population Prospects (WPP). 

The main data series extracted from these data sets for the empirical analysis are yearly measures of Home Bias in the US, capital outflows by domestic agents (COD), capital inflows by foreign agents (CIF), and the growth of the Gross Domestic Product (GDP growth). We define and discuss all measures below. More details about the sources of data and the construction of the key variables are in Appendix \ref{app:Data}, including some summary statistics in Table \ref{tab:summary}.

\subsection{Home Bias}\label{ss:HomeBias}

Equity home bias is ``the empirical phenomenon that investors’ portfolios are concentrated in domestic equities to a much greater degree than justified by portfolio theory'' \citep{cooper2013equity}. It has been well documented in the literature both at an aggregate level \citep{cooper1986costs, french1991investor, cooper2013equity, mishra2015measures} and at the fund level \citep{hau2008home}. Experience-based learning offers an explanation for the persistence of home bias in equity as well as the variation in the extent of home bias over time: 
It should increase when a country has a negative growth/GDP shock. Below we will test the (refined) hypotheses that domestic investors increase their investment in domestic assets (less domestic outflows) after a negative GDP growth shock and that foreigners decrease their investment in domestic assets (less foreign inflows) after a negative GDP growth shock, which will imply that home bias increases. 

For completeness, we start by providing evidence on the existence of equity home bias throughout our sample period, which extends prior evidence on equity home bias. These estimations aim to confirm its persistence throughout a longer time period and for more recent years than established in prior literature.

We focus on US home bias and map our two-country model to the US versus the rest-of-the world (ROW). We measure US home bias as the difference between the percentage of US equity held domestically (i.\,e., domestic equity investment by the US divided by the global equity investment by the US) and a measure of optimal asset location. For the latter measure, we use the US share of global market capitalization, which is the traditional benchmark in prior research on home bias \citep{cooper1986costs,brealey1999international,berkel2007institutional,cooper2013equity}.
Thus, home bias is calculated as:
$$\frac{\textrm{Domestic Equity Investment in the US}}{\textrm{Global Equity Investment by the US}} - \frac{\textrm{US Market Capitalization}}{\textrm{Global Market Capitalization}}.$$ 

We obtain data on domestic and foreign equity investment in the US and globally from the IIP data of the IMF, available for 1976-2017, and capitalization data from the World Federation of Exchanges, available from the World Bank's WDI data for 1980-2017.

Table \ref{tab:summary} provides the summary statistics of the resulting measure of home bias.
We see that home bias has been large and persistent with a minimum value of 32\%, a mean of 44\%, and a standard deviation of only 9\%.

\begin{figure*}[h]
		\vspace{0.4cm}
	\centering     
	\includegraphics[width=0.6\textwidth]{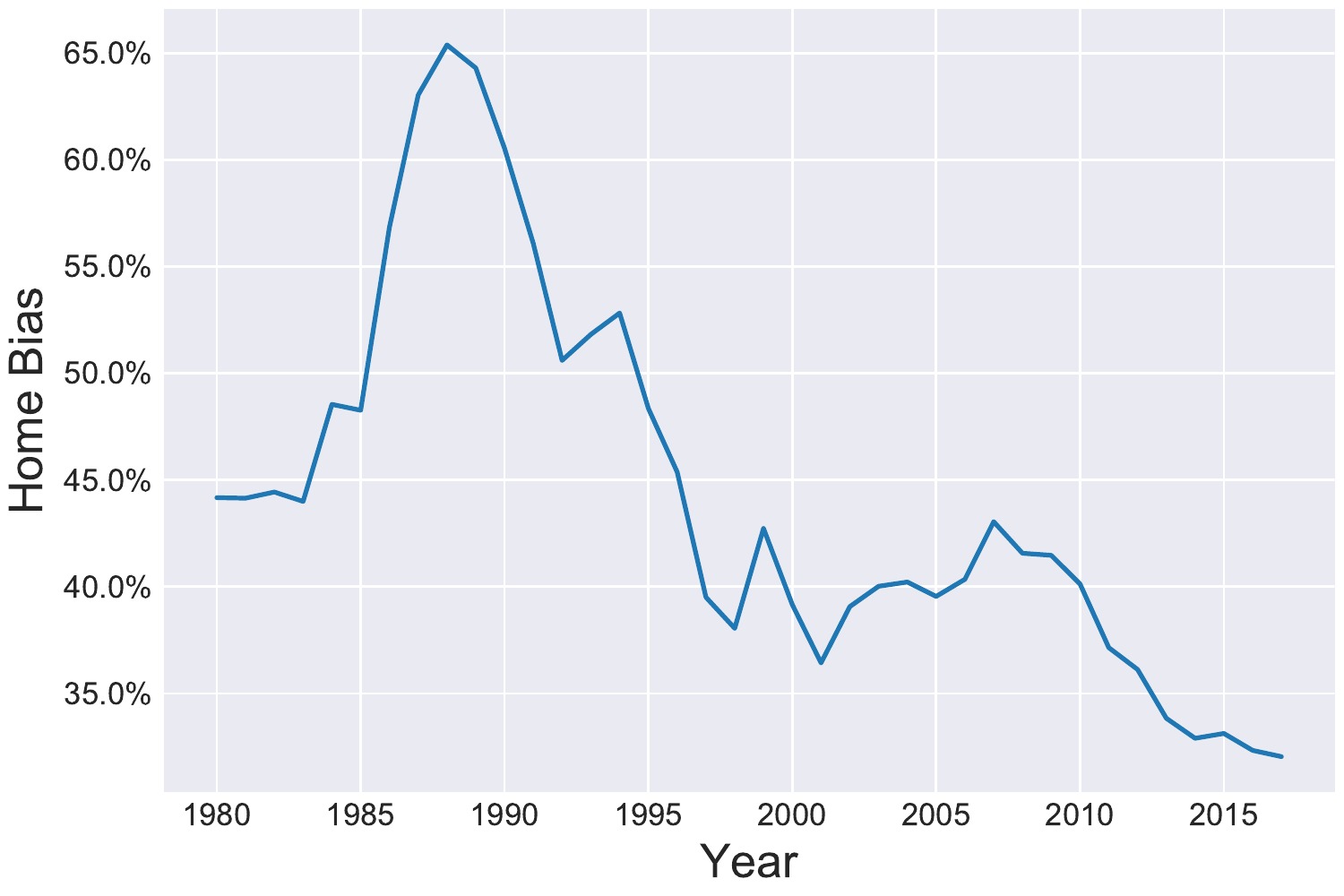}\hfill
	\caption{US Home Bias by Year}
	\label{f:home_bias}
\end{figure*}

To illustrate the extent of US home bias over our sample period, we plot US home bias from 1980 to 2017 in Figure \ref{f:home_bias}. As the graph shows, home bias has remained large and persistent over our 38-year sample period, never dropping below 30\% in our almost 40-year sample. For the analysis in this paper, we do not focus on the year-to-year changes. The influence of experience-based learning is measurable over medium- to long-term horizons (as discussed in \cite{malmendierpouzovanasco2018}), and we abstract from short-term factors such as market revaluation in the regression analyses shown in the next subsection.

\subsection{Cyclicality of Capital Flows: Fickleness and Retrenchment}

Fickleness is defined as the tendency of foreign investors to exit when a country is in distress, and retrenchment as the tendency of local investors to reduce their foreign investments and ``bring home their global liquidity'' during domestic distress \citep{caballero2016model}.
These patterns have been documented in prior literature, although they are less well established than home bias. Proposition \ref{p: priors_recessions} shows that experience-based learning offers an explanation for fickleness and entrenchment. 
Here, we corroborate the empirical patterns using more current data and a different panel structure.

We construct measures of capital flows from yearly changes in a country's investments abroad, assets, and foreign countries' investments domestically, liabilities, in the Balance of Payments (BOP) data of the IMF.\footnote{\cite{broner2013gross} used inflows and outflows variables, which are included in the 5th edition of the BOP manual. We follow \cite{caballero2016model} and use the 6th edition, which defines variables in terms of assets and liabilities instead of outflows and inflows.} 
To capture changes in foreign countries' investments domestically, we calculate capital inflows by foreign agents (CIF) 
as the sum of changes in a country's liabilities from direct investment, portfolio investment, and other investment from the financial account. To capture changes in the domestic country's investments abroad, we calculate capital outflows of domestic agents (COD) as the sum of changes in a country's assets from direct investment, portfolio investment, other investment, and reserve assets from the financial account. We include the 19 countries represented in the G20, 
as these countries are all of a certain size, involved with international finance, and notable enough so that news of income shocks will reach the global investment community. 
We inflation-adjust all capital-flow values and, following \cite{broner2013gross} and \cite{caballero2016model}, normalize them by trend GDP and standardize to a mean of zero and a standard deviation of one (at the level of each country). The resulting data series runs from 1970-2017, varying by country.

To proxy for domestic crises, or output shocks, we follow \cite{broner2013gross} and focus on how output varies during the business cycle (rather than calculating indicators for income shocks across countries, which are hard to reliably define).
We use yearly GDP data in contemporaneous US dollars from the World Bank, running from 1970-2017 (varying by country). 
To calculate real GDP growth, we first inflation adjust the GDP 
data,\footnote{As Balance of Payments (BOP) data is recorded in contemporaneous US dollars, as opposed to the local currency unit, we start with GDP data in contemporaneous US dollars so that our calculations are parallel.} and then calculate current-year real GDP minus prior-year real GDP, all divided by prior-year real GDP. The resulting panel of real GDP growth runs from 1971-2017 (varying by country), 
since we lack a prior year for the 1970 calculation.
We note that, while quarterly GDP data is also available (from the IFS), we use yearly data from the WDI, mirroring \cite{broner2013gross}, as the annual data provides better coverage across time for most countries.

For further details on data sources, see Appendix \ref{app:Data}.

\smallskip

With these variables at hand, we test Proposition \ref{p: priors_recessions} in our new panel.
Proposition \ref{p: priors_recessions} predicts that, during a bust, domestic investors retrench by pulling money out of foreign countries to invest at home, and foreign investors become ``fickle'' and pull money out of the domestic country. The opposite effects hold during booms.
These predictions imply that foreign investment in the domestic country and domestic investment abroad are both pro-cyclical, i.\,e., positively correlated with GDP growth or other measures of the business cycle.

To test these predictions, and confirm the results of \cite{broner2013gross} for our extended time period and the selected set of countries, we estimate the following empirical model:
$$ Y_{i,t} = \alpha_i + \gamma_i \cdot t + \beta_1 \cdot \% \Delta \textrm{GDP}_{i,t} + \epsilon_{i,t} ,$$
where $i$ indexes countries, $t$ indexes/represents the time period, $Y_{i,t}$ is the flow variable for a given country year, $\alpha_i$ is the country fixed effect, $\gamma_i$ is the linear time trend coefficient, $\beta_1$ is the coefficient of interest, $\% \Delta \textrm{GDP}_{i,t}$ is real GDP growth between $t$ and $t-1$, and $\epsilon_{i,t}$ is the error term. 

Our model of experience based learning predicts a positive coefficient estimate $\beta_1$ on GDP growth, both when we use the CIF measure of capital inflows of foreigners and when we use the COD measure of capital outflows of domestics as the outcome variable $Y_{i,t}$. Moreover, this prediction is consistent with the results of \cite{broner2013gross} (in their Table 3) that both CIF and COD correlate positively with GDP, implying that both CIF and COD are pro-cyclical. 
%

\begin{table}[htbp]
	\caption{Business Cycles and Flows}
	\vspace{-0.4cm}
	\begin{small}
		\begin{tablenotes}
			\item CIF are the sum of changes in a country's liabilities from direct investment, portfolio investment, and other investment from the financial account divided by trend GDP and standardized at the country level. COD are the sum of changes in a country's assets from direct investment, portfolio investment, other investment, and reserve assets from the financial account divided by trend GDP and standardized at the country level. Real GDP Growth is the percent GDP growth.
		\end{tablenotes}
	\end{small}
	\vspace{0.5em}
	\begin{center}
		{
		\def\sym#1{\ifmmode^{#1}\else\(^{#1}\)\fi}
		\begin{tabular}{l*{2}{c}}
		\hline\hline
		                    &\multicolumn{1}{c}{(1)}&\multicolumn{1}{c}{(2)} \Tstrut\\
		                    &\multicolumn{1}{c}{CIF}&\multicolumn{1}{c}{COD}\\
		\hline
		Real GDP Growth     &       0.832\sym{**} &       1.852\sym{***} \Tstrut\\
		                    &      (2.68)         &      (7.64)         \Bstrut\\
		\hline
		Country FE          &           X         &           X         \Tstrut\\
		Country Linear Trend&           X         &           X         \\
		Country Clustered SE&           X         &           X         \\
		Income Group        &         G20         &         G20         \Bstrut\\\hline
		R2                  &       0.188         &       0.261         \Tstrut\\
		N                   &         710         &         720         \\
		Number of Countries      &          19         &          19         \\
		\hline\hline
		\multicolumn{3}{l}{\footnotesize \textit{t} statistics in parentheses}\\
		\multicolumn{3}{l}{\footnotesize \sym{*} \(p<0.10\), \sym{**} \(p<0.05\), \sym{***} \(p<0.01\)}\\
		\end{tabular}
		}

	\end{center}
	\label{tab:broner_rep}
\end{table}

Table \ref{tab:broner_rep} shows the regression results, with the CIF measure of capital flows as the outcome variable in column (1), and the COD measure in column (2). 
As the estimates reveal, we find the predicted positive relationship with our measure for the
business cycle booms, real GDP  growth.
In both regressions, the estimates for the coefficient of interest are significant, controlling for country fixed effects and a country specific linear trend.

\subsection{Cyclicality of Capital Flows and Demographics}

Having confirmed the persistence of home bias, fickleness, and retrenchment in our data, we now move to the implications of Proposition \ref{p: demographics} which are novel and specific to the EBL framework. 

An empirical prediction arising from Proposition \ref{p: demographics} 
is that an increase in the older domestic population of market participants increases the business-cycle sensitivity of flows.
This is because the domestic old population has the strongest prior among all demographic groups. As a result, a domestic recession has the least impact on the beliefs of older domestic investors.
In other words, older domestic investors believe that other demographic groups are underestimating domestic production and that the domestic asset is under-priced, leading to retrenchment.\footnote{Note that, for these comparative statics, we are not assuming that the total population remains constant: An increase in the number of old agents is not equivalent to a decrease in the number of young agents.}

Proposition \ref{p: demographics} does not offer a prediction whether a larger young domestic population will increase or decrease the flow sensitivity to business cycle fluctuations. The reason is that it is unclear how the domestic young agents react relative to the foreign old agents: The former are endowed with a more precise prior, but the latter group has more experience.
We will explore the relationship as an empirical question in the data.

In order to perform these empirical tests, 
we introduce four country-level indicator variables for both median and top-quartile population statistics, and the old and young population.  These indicator variables equal one if a country has an above median, or (alternatively) a top-quartile level of young or old investors, respectively.
Following \cite{malmendierpouzovanasco2018}, who consider those aged 25 to 75 the investor population, we define the young investor population of a country as those aged 25 to 49 and the old investor population as those aged 50 to 74.
To measure population mass, we use data on yearly, country-level population data in 5-year age bins from the World Population Prospects (WPP) provided by the United Nations (UN). More details are in Appendix \ref{app:Data}.

To test our prediction, we extend the regression specification of Table \ref{tab:broner_rep}. As before, we regress a flow variable, CIF or COD, on real GDP growth, a country fixed effect, and a country linear time trend, but we add interactions of GDP growth with indicators for high old investor population and, for completenesss, high young investor populations. The regression specification is
$$ Y_{i,t} = \alpha_i + \gamma_i \cdot t + \beta_1 \cdot \% \Delta \textrm{GDP}_{i,t} + \beta_2 \cdot \% \Delta \textrm{GDP}_{i,t} \cdot D_{i,t} + \epsilon_{i,t} ,$$
where the relevant parameters follow the above definitions, and $\beta_2$ is the new coefficient of interest, with $D_{i,t}$ being one of the indicator variables described above.

In Table \ref{tab:broner_gdp}, columns (1) and (2), we add two interactions of real GDP growth: with indicators for old investor populations above the median and young investor populations above the median. In columns (3) and (4), we add two interactions of real GDP growth with indicators for old investor populations in the top quartile and young investor populations in the top quartile.

\begin{table}[htbp]
	\caption{Population and Business Cycle Sensitivity}
	\vspace{-0.4cm}
	\begin{small}
		\begin{tablenotes}
			\item CIF are the sum of changes in a country's liabilities from direct investment, portfolio investment, and other investment from the financial account divided by trend GDP and standardized at the country level. COD are the sum of changes in a country's assets from direct investment, portfolio investment, other investment, and reserve assets from the financial account divided by trend GDP and standardized at the country level. Real GDP Growth is the percent GDP growth.
		\end{tablenotes}
	\end{small}
	\vspace{0.5em}
	\begin{center}
		{
			\def\sym#1{\ifmmode^{#1}\else\(^{#1}\)\fi}
			\begin{tabular}{l*{4}{c}}
				\hline\hline
				&\multicolumn{1}{c}{(1)}&\multicolumn{1}{c}{(2)}&\multicolumn{1}{c}{(3)}&\multicolumn{1}{c}{(4)} \Tstrut\\
				&\multicolumn{1}{c}{CIF}&\multicolumn{1}{c}{COD}&\multicolumn{1}{c}{CIF}&\multicolumn{1}{c}{COD}\\
				\hline
				Real GDP Growth                       &       0.963\sym{*}  &       1.834\sym{***}&       0.702\sym{*}  &       1.549\sym{***}\Tstrut\\
				&      (2.08)         &      (8.20)         &      (1.94)         &      (5.19)         \\
				[1em]
				Old Pop. Above Median X GDP growth     &       0.471         &       1.916\sym{***}&                     &                     \\
				&      (0.47)         &      (5.07)         &                     &                     \\
				[1em]
				Young Pop. Above Median X GDP growth   &      -0.683         &      -1.692\sym{***}&                     &                     \\
				&     (-0.69)         &     (-5.82)         &                     &                     \\
				[1em]
				Old Pop. in Top Quartile X GDP growth  &                     &                     &       1.241         &       2.046\sym{***}\\
				&                     &                     &      (1.50)         &      (3.92)         \\
				[1em]
				Young Pop. in Top Quartile X GDP growth&                     &                     &      -0.280         &       0.491         \\
				&                     &                     &     (-0.45)         &      (1.16)         \Bstrut\\
				\hline
				Country FE                            &           X         &           X         &           X         &           X         \Tstrut\\
				Country Linear Trend                  &           X         &           X         &           X         &           X         \\
				Country Clustered SE                  &           X         &           X         &           X         &           X         \\
				Country Sample                        &         G20         &         G20         &         G20         &         G20         \Bstrut\\\hline
				R2                                    &       0.188         &       0.267         &       0.191         &       0.271         \Tstrut\\
				N                                     &         710         &         720         &         710         &         720         \\
				Number of Countries                   &          19         &          19         &          19         &          19         \\
				\hline\hline
				\multicolumn{5}{l}{\footnotesize \textit{t} statistics in parentheses}\\
				\multicolumn{5}{l}{\footnotesize \sym{*} \(p<0.10\), \sym{**} \(p<0.05\), \sym{***} \(p<0.01\)}\\
			\end{tabular}
		}
	\end{center}
	\label{tab:broner_gdp}
\end{table}

In all four columns, we estimate a positive coefficient of the interaction between the indicators for above-median (or top-quartile) old investor populations and GDP growth.
The coefficient is significant at the 1\% level in columns (2) and (4) where COD is the dependent variable.
That is, consistent with Proposition \ref{p: demographics}, we find that having a larger domestic population of older investors predicts increased business-cycle sensitivity.
We note that, although COD and CIF are correlated \citep{broner2013gross,caballero2016model,forbesetal2012}, it is not surprising that COD (capital outflows by domestic agents) exhibits a stronger correlation with these interaction variables in light of our theory, as the analysis focuses on the domestic population.

Our estimates also reveal that, empirically, an increase in the domestic young population reduces the sensitivity to business-cycle fluctuations.
In columns (1) to (3) of Table \ref{tab:broner_gdp}, the sign on the interaction with the young population is negative and, in column (2), significant at the 1\% level. These results suggest that the effect of being young, implying a less precise prior, dominates the effect of the stronger domestic prior. As a result, during a domestic bust the domestic young are more likely to invest abroad which would reduce the degree of aggregate retrenchment.

Taken together with the results of our model, these results suggests that, in a given country, the older domestic investors have the most precise prior, the older foreign investors have the second most precise prior, the domestic young have the next most precise prior, and the foreign young have the least precise prior.

\section{Conclusion} \label{sec: Conclusions}

This paper introduces a novel explanation for several classic international macro puzzles surrounding portfolio investment and capital flows. We introduce a modern macro-finance approach to belief formation, experience-based learning (EBL), into a two country CARA-Normal OLG model. In an equilibrium where both countries have the same demographics but prior beliefs about domestic and foreign output differ, portfolio holdings, prices, and demand for a country's risky asset depend on the history of the country's output. Further, this model setting generates home bias in portfolio holdings because domestic agents believe that domestic output is less risky than foreign output. In addition, domestic agents will overreact to foreign output shocks driving two patterns of behavior, fickleness and retrenchment. The effect of a negative foreign output shock will negatively affect the prior of domestics more strongly than their foreign counterparts, 
which induces them to be fickle by selling their foreign risky asset.
Similarly, in the case of a domestic output shock foreign agents will value the domestic asset less than the domestic agents, and this will lead domestic agents to retrench by selling the foreign risky assets to buy the domestic risky asset. Our model also generates additional predictions, in particular that an increase in the domestic old population generates a higher degree of fickleness and retrenchment. We take this prediction to the data and find supportive evidence.

We view this paper as a first step on bringing to bear insights from experience-based learning to shed light on empirical regularities in international macro-finance. It could be extended in several directions. For instance one could consider a subjective model over the process of output and prices, where agents use past realization of all variables to update their beliefs.

\newpage

\singlespacing
\bibliographystyle{junpan}
\bibliography{MacroFinance}

\begin{thebibliography}{}

\bibitem[\protect\citeauthoryear{Albuquerque, Bauer, and Schneider}{Albuquerque
  et~al.}{2009}]{albuquerque2009global}
Albuquerque, R., G.~H. Bauer, and M.~Schneider (2009).
\newblock Global private information in international equity markets.
\newblock {\em Journal of Financial Economics\/}~{\em 94\/}(1), 18--46.

\bibitem[\protect\citeauthoryear{Ardalan}{Ardalan}{2019}]{ardalanequity}
Ardalan, K. (2019).
\newblock Equity Home Bias: A Review Essay.
\newblock {\em Journal of Economic Surveys\/}~{\em 33\/}(3), 949--967.

\bibitem[\protect\citeauthoryear{Bae, Stulz, and Tan}{Bae
  et~al.}{2008}]{bae2008local}
Bae, K.-H., R.~M. Stulz, and H.~Tan (2008).
\newblock Do local analysts know more? A cross-country study of the performance
  of local analysts and foreign analysts.
\newblock {\em Journal of Financial Economics\/}~{\em 88\/}(3), 581--606.

\bibitem[\protect\citeauthoryear{Barberis, Greenwood, Jin, and
  Shleifer}{Barberis et~al.}{2015}]{barberis2015x}
Barberis, N., R.~Greenwood, L.~Jin, and A.~Shleifer (2015).
\newblock X-CAPM: An extrapolative capital asset pricing model.
\newblock {\em Journal of Financial Economics\/}~{\em 115\/}(1), 1--24.

\bibitem[\protect\citeauthoryear{Barberis, Greenwood, Jin, and
  Shleifer}{Barberis et~al.}{2016}]{barberis2016extrapolation}
Barberis, N., R.~Greenwood, L.~Jin, and A.~Shleifer (2016).
\newblock Extrapolation and bubbles.
\newblock Working Paper, National Bureau of Economic Research.

\bibitem[\protect\citeauthoryear{Berkel}{Berkel}{2007}]{berkel2007institutional}
Berkel, B. (2007).
\newblock Institutional determinants of international equity portfolios-a
  country-level analysis.
\newblock {\em The BE Journal of Macroeconomics\/}~{\em 7\/}(1).

\bibitem[\protect\citeauthoryear{Bernanke}{Bernanke}{2005}]{bernanke2005global}
Bernanke, B.~S. (2005).
\newblock The global saving glut and the US current account deficit.
\newblock Working Paper.

\bibitem[\protect\citeauthoryear{Bissiri, Holmes, and Walker}{Bissiri
  et~al.}{2016}]{bissiri2016general}
Bissiri, P.~G., C.~C. Holmes, and S.~G. Walker (2016).
\newblock A general framework for updating belief distributions.
\newblock {\em Journal of the Royal Statistical Society: Series B (Statistical
  Methodology)\/}~{\em 78\/}(5), 1103--1130.

\bibitem[\protect\citeauthoryear{Black}{Black}{1974}]{black1974international}
Black, F. (1974).
\newblock International capital market equilibrium with investment barriers.
\newblock {\em Journal of Financial Economics\/}~{\em 1\/}(4), 337--352.

\bibitem[\protect\citeauthoryear{Blanchard}{Blanchard}{2012}]{Blanchard2012}
Blanchard, O. (2012).
\newblock Sustaining a Global Recovery.
\newblock {\em Finance and Development\/}~{\em 46\/}(3), 9--12.

\bibitem[\protect\citeauthoryear{Botsch and Malmendier}{Botsch and
  Malmendier}{2018}]{Botsch_Malmendier2018}
Botsch, M. and U.~Malmendier (2018).
\newblock Inflation Experiences and Contract Choice -- Evidence from
  Residential Mortgages.
\newblock Working paper, UC-Berkeley.

\bibitem[\protect\citeauthoryear{Brealey, Cooper, Kaplanis, et~al.}{Brealey
  et~al.}{1999}]{brealey1999international}
Brealey, R., I.~Cooper, E.~Kaplanis, et~al. (1999).
\newblock What is the international dimension of international finance.
\newblock {\em European Finance Review\/}~{\em 3\/}(1), 103--119.

\bibitem[\protect\citeauthoryear{Brennan and Cao}{Brennan and
  Cao}{1997}]{brennan1997international}
Brennan, M.~J. and H.~H. Cao (1997).
\newblock International portfolio investment flows.
\newblock {\em The Journal of Finance\/}~{\em 52\/}(5), 1851--1880.

\bibitem[\protect\citeauthoryear{Brennan, Cao, Strong, and Xu}{Brennan
  et~al.}{2005}]{brennan2005dynamics}
Brennan, M.~J., H.~H. Cao, N.~Strong, and X.~Xu (2005).
\newblock The dynamics of international equity market expectations.
\newblock {\em Journal of Financial Economics\/}~{\em 77\/}(2), 257--288.

\bibitem[\protect\citeauthoryear{Broner, Didier, Erce, and Schmukler}{Broner
  et~al.}{2013}]{broner2013gross}
Broner, F., T.~Didier, A.~Erce, and S.~L. Schmukler (2013).
\newblock Gross capital flows: Dynamics and crises.
\newblock {\em Journal of Monetary Economics\/}~{\em 60\/}(1), 113--133.

\bibitem[\protect\citeauthoryear{Broner, Gelos, and Reinhart}{Broner
  et~al.}{2006}]{broner2006peril}
Broner, F.~A., R.~G. Gelos, and C.~M. Reinhart (2006).
\newblock When in peril, retrench: Testing the portfolio channel of contagion.
\newblock {\em Journal of International Economics\/}~{\em 69\/}(1), 203--230.

\bibitem[\protect\citeauthoryear{Caballero, Farhi, and Gourinchas}{Caballero
  et~al.}{2008}]{caballero2008equilibrium}
Caballero, R.~J., E.~Farhi, and P.-O. Gourinchas (2008).
\newblock An equilibrium model of" global imbalances" and low interest rates.
\newblock {\em American economic review\/}~{\em 98\/}(1), 358--93.

\bibitem[\protect\citeauthoryear{Caballero and Krishnamurthy}{Caballero and
  Krishnamurthy}{2009}]{caballero2009global}
Caballero, R.~J. and A.~Krishnamurthy (2009).
\newblock Global imbalances and financial fragility.
\newblock {\em American Economic Review\/}~{\em 99\/}(2), 584--88.

\bibitem[\protect\citeauthoryear{Caballero and Simsek}{Caballero and
  Simsek}{2018}]{caballero2016model}
Caballero, R.~J. and A.~Simsek (2018).
\newblock A model of fickle capital flows and retrenchment.
\newblock Working Paper, National Bureau of Economic Research.

\bibitem[\protect\citeauthoryear{Campbell and Shiller}{Campbell and
  Shiller}{1988}]{CampbellShiller1988}
Campbell, J.~Y. and R.~J. Shiller (1988).
\newblock Stock Prices, Earnings, and Expected Dividends.
\newblock {\em Journal of Finance\/}~{\em 43\/}(3), 661--676.

\bibitem[\protect\citeauthoryear{Chiang, Hirshleifer, Qian, and Sherman}{Chiang
  et~al.}{2011}]{HirshleiferEtAl2011}
Chiang, Y.-M., D.~Hirshleifer, Y.~Qian, and A.~E. Sherman (2011).
\newblock Do Investors Learn from Experience? Evidence from Frequent IPO
  Investors.
\newblock {\em Review of Financial Studies\/}~{\em 24}, 1560--1589.

\bibitem[\protect\citeauthoryear{Chousakos, Gorton, and Ordonez}{Chousakos
  et~al.}{2016}]{ordonez2016aggregate}
Chousakos, K., G.~Gorton, and G.~Ordonez (2016).
\newblock Aggregate Information Dynamics.

\bibitem[\protect\citeauthoryear{Coeurdacier and Gourinchas}{Coeurdacier and
  Gourinchas}{2016}]{coeurdacier2016bonds}
Coeurdacier, N. and P.-O. Gourinchas (2016).
\newblock When bonds matter: Home bias in goods and assets.
\newblock {\em Journal of Monetary Economics\/}~{\em 82}, 119--137.

\bibitem[\protect\citeauthoryear{Coeurdacier, Guibaud, and Jin}{Coeurdacier
  et~al.}{2015}]{coeurdacier2015credit}
Coeurdacier, N., S.~Guibaud, and K.~Jin (2015).
\newblock Credit constraints and growth in a global economy.
\newblock {\em American Economic Review\/}~{\em 105\/}(9), 2838--81.

\bibitem[\protect\citeauthoryear{Coeurdacier, Kollmann, and Martin}{Coeurdacier
  et~al.}{2010}]{coeurdacier2010international}
Coeurdacier, N., R.~Kollmann, and P.~Martin (2010).
\newblock International portfolios, capital accumulation and foreign assets
  dynamics.
\newblock {\em Journal of International Economics\/}~{\em 80\/}(1), 100--112.

\bibitem[\protect\citeauthoryear{Coeurdacier and Rey}{Coeurdacier and
  Rey}{2013}]{coeurdacier2013home}
Coeurdacier, N. and H.~Rey (2013).
\newblock Home bias in open economy financial macroeconomics.
\newblock {\em Journal of Economic Literature\/}~{\em 51\/}(1), 63--115.

\bibitem[\protect\citeauthoryear{Collin-Dufresne, Johannes, and
  Lochstoer}{Collin-Dufresne et~al.}{2016}]{collin2016asset}
Collin-Dufresne, P., M.~Johannes, and L.~A. Lochstoer (2016).
\newblock Asset pricing when ‘this time is different’.
\newblock {\em The Review of Financial Studies\/}~{\em 30\/}(2), 505--535.

\bibitem[\protect\citeauthoryear{Cooper, Sercu, and Vanp{\'e}e}{Cooper
  et~al.}{2013}]{cooper2013equity}
Cooper, I., P.~Sercu, and R.~Vanp{\'e}e (2013).
\newblock The equity home bias puzzle: A survey.
\newblock {\em Foundations and Trends in Finance\/}~{\em 7\/}(4), 289--416.

\bibitem[\protect\citeauthoryear{Cooper and Kaplanis}{Cooper and
  Kaplanis}{1986}]{cooper1986costs}
Cooper, I.~A. and E.~Kaplanis (1986).
\newblock Costs to crossborder investment and international equity market
  equilibrium.
\newblock {\em Recent developments in corporate finance\/}, 209--240.

\bibitem[\protect\citeauthoryear{Coval and Moskowitz}{Coval and
  Moskowitz}{2001}]{coval2001geography}
Coval, J.~D. and T.~J. Moskowitz (2001).
\newblock The geography of investment: Informed trading and asset prices.
\newblock {\em Journal of political Economy\/}~{\em 109\/}(4), 811--841.

\bibitem[\protect\citeauthoryear{Dvo{\v{r}}{\'a}k}{Dvo{\v{r}}{\'a}k}{2003}]{dvovrak2003gross}
Dvo{\v{r}}{\'a}k, T. (2003).
\newblock Gross capital flows and asymmetric information.
\newblock {\em Journal of International Money and Finance\/}~{\em 22\/}(6),
  835--864.

\bibitem[\protect\citeauthoryear{Dziuda and Mondria}{Dziuda and
  Mondria}{2012}]{dziuda2012asymmetric}
Dziuda, W. and J.~Mondria (2012).
\newblock Asymmetric information, portfolio managers, and home bias.
\newblock {\em The Review of Financial Studies\/}~{\em 25\/}(7), 2109--2154.

\bibitem[\protect\citeauthoryear{Errunza and Losq}{Errunza and
  Losq}{1985}]{errunza1985international}
Errunza, V. and E.~Losq (1985).
\newblock International asset pricing under mild segmentation: Theory and test.
\newblock {\em The Journal of Finance\/}~{\em 40\/}(1), 105--124.

\bibitem[\protect\citeauthoryear{Fama and French}{Fama and
  French}{1988}]{fama1988dividend}
Fama, E.~F. and K.~R. French (1988).
\newblock Dividend yields and expected stock returns.
\newblock {\em Journal of financial economics\/}~{\em 22\/}(1), 3--25.

\bibitem[\protect\citeauthoryear{Fidora, Fratzscher, and Thimann}{Fidora
  et~al.}{2007}]{fidora2007home}
Fidora, M., M.~Fratzscher, and C.~Thimann (2007).
\newblock Home bias in global bond and equity markets: the role of real
  exchange rate volatility.
\newblock {\em Journal of international Money and Finance\/}~{\em 26\/}(4),
  631--655.

\bibitem[\protect\citeauthoryear{Forbes and Warnock}{Forbes and
  Warnock}{2012}]{forbesetal2012}
Forbes, K.~J. and F.~E. Warnock (2012).
\newblock Capital flow waves: Surges, stops, flight, and retrenchment.
\newblock {\em Journal of International Economics\/}~{\em 88\/}(2), 235--251.

\bibitem[\protect\citeauthoryear{French and Poterba}{French and
  Poterba}{1991}]{french1991investor}
French, K.~R. and J.~M. Poterba (1991).
\newblock Investor Diversification and International Equity Markets.
\newblock {\em The American Economic Review\/}~{\em 81\/}(2), 222--226.

\bibitem[\protect\citeauthoryear{Fuster, Hebert, and Laibson}{Fuster
  et~al.}{2011}]{FusterHebertLaibson2011}
Fuster, A., B.~Hebert, and D.~Laibson (2011).
\newblock Natural Expectations, Macroeconomic Dynamics and Asset Pricing.
\newblock In D.~Acemoglu and M.~Woodford (Eds.), {\em NBER Macroeconomics
  Annual 2011}. Chicago, IL: University of Chicago Press.

\bibitem[\protect\citeauthoryear{Fuster, Laibson, and Mendel}{Fuster
  et~al.}{2010}]{FusterLaibsonMendel2010}
Fuster, A., D.~Laibson, and B.~Mendel (2010).
\newblock Natural Expectations and Macroeconomic Fluctuations.
\newblock {\em Journal of Economic Perspectives\/}~{\em 24}, 67--84.

\bibitem[\protect\citeauthoryear{Gehrig}{Gehrig}{1993}]{gehrig1993information}
Gehrig, T. (1993).
\newblock An information based explanation of the domestic bias in
  international equity investment.
\newblock {\em The Scandinavian Journal of Economics\/}, 97--109.

\bibitem[\protect\citeauthoryear{Gourinchas and Tornell}{Gourinchas and
  Tornell}{2004}]{gourinchas2004exchange}
Gourinchas, P.-O. and A.~Tornell (2004).
\newblock Exchange rate puzzles and distorted beliefs.
\newblock {\em Journal of International Economics\/}~{\em 64\/}(2), 303--333.

\bibitem[\protect\citeauthoryear{Grinblatt and Keloharju}{Grinblatt and
  Keloharju}{2001}]{grinblatt2001distance}
Grinblatt, M. and M.~Keloharju (2001).
\newblock How distance, language, and culture influence stockholdings and
  trades.
\newblock {\em The Journal of Finance\/}~{\em 56\/}(3), 1053--1073.

\bibitem[\protect\citeauthoryear{Guidolin}{Guidolin}{2005}]{guidolin2005home}
Guidolin, M. (2005).
\newblock Home Bias and High Turnover in an Overlapping-generations Model with
  Learning.
\newblock {\em Review of International Economics\/}~{\em 13\/}(4), 725--756.

\bibitem[\protect\citeauthoryear{Hau and Rey}{Hau and Rey}{2008}]{hau2008home}
Hau, H. and H.~Rey (2008).
\newblock Home bias at the fund level.
\newblock {\em American Economic Review\/}~{\em 98\/}(2), 333--38.

\bibitem[\protect\citeauthoryear{Hertwig, Barron, Weber, and Erev}{Hertwig
  et~al.}{2004}]{Hertwigetal2004}
Hertwig, R., G.~Barron, E.~U. Weber, and I.~Erev (2004).
\newblock Decisions from Experience and the Effect of Rare Events in Risky
  Choice.
\newblock {\em Psychological Science\/}~{\em 15}, 534 C539.

\bibitem[\protect\citeauthoryear{Hodrick and Prescott}{Hodrick and
  Prescott}{1997}]{hodrick1997postwar}
Hodrick, R.~J. and E.~C. Prescott (1997).
\newblock Postwar US business cycles: an empirical investigation.
\newblock {\em Journal of Money, credit, and Banking\/}, 1--16.

\bibitem[\protect\citeauthoryear{Kaustia and Kn\"upfer}{Kaustia and
  Kn\"upfer}{2008}]{Kaustia_Knuepfer2008}
Kaustia, M. and S.~Kn\"upfer (2008).
\newblock Do Investors Overweight Personal Experience? Evidence from IPO
  Subscriptions.
\newblock {\em Journal of Finance\/}~{\em 63}, 2679--2702.

\bibitem[\protect\citeauthoryear{LeRoy}{LeRoy}{2005}]{LeRoy2005}
LeRoy, S.~F. (2005).
\newblock Excess Volatility.
\newblock {\em working paper\/}.

\bibitem[\protect\citeauthoryear{LeRoy and Porter}{LeRoy and
  Porter}{1981}]{LeRoyPorter1981}
LeRoy, S.~F. and R.~D. Porter (1981).
\newblock The present value relation: Tests based on implied variance bounds.
\newblock {\em Econometrica\/}~{\em 49}, 555--574.

\bibitem[\protect\citeauthoryear{Ma{\'c}kowiak, Mat{\v{e}}jka, and
  Wiederholt}{Ma{\'c}kowiak et~al.}{2018}]{mackowiak2018survey}
Ma{\'c}kowiak, B., F.~Mat{\v{e}}jka, and M.~Wiederholt (2018).
\newblock Survey: Rational inattention, a disciplined behavioral model.

\bibitem[\protect\citeauthoryear{Malloy}{Malloy}{2005}]{malloy2005geography}
Malloy, C.~J. (2005).
\newblock The geography of equity analysis.
\newblock {\em The Journal of Finance\/}~{\em 60\/}(2), 719--755.

\bibitem[\protect\citeauthoryear{Malmendier and Nagel}{Malmendier and
  Nagel}{2011}]{Malmendier_Nagel2008}
Malmendier, U. and S.~Nagel (2011).
\newblock Depression Babies: Do Macroeconomic Experiences Affect Risk-Taking?
\newblock {\em Quarterly Journal of Economics\/}~{\em 126}, 373--416.

\bibitem[\protect\citeauthoryear{Malmendier and Nagel}{Malmendier and
  Nagel}{2016}]{Malmendier_Nagel2013}
Malmendier, U. and S.~Nagel (2016).
\newblock Learning from Inflation Experiences.
\newblock {\em Quarterly Journal of Economics\/}~{\em 131}, 53--87.

\bibitem[\protect\citeauthoryear{Malmendier, Nagel, and Yan}{Malmendier
  et~al.}{2018}]{Malmendier_Nagel_Yan2018}
Malmendier, U., S.~Nagel, and Z.~Yan (2018).
\newblock The Making of Hawks and Doves.
\newblock Working paper, UC-Berkeley.

\bibitem[\protect\citeauthoryear{Malmendier, Pouzo, and Vanasco}{Malmendier
  et~al.}{2018}]{malmendierpouzovanasco2018}
Malmendier, U., D.~Pouzo, and V.~Vanasco (2018).
\newblock Investor Experiences and Financial Market Dynamics.

\bibitem[\protect\citeauthoryear{Malmendier and Shen}{Malmendier and
  Shen}{2017}]{Malmendier_Shen2015}
Malmendier, U. and L.~S. Shen (2017).
\newblock Scarred Consumption.
\newblock Working paper, UC-Berkeley.

\bibitem[\protect\citeauthoryear{Malmendier and Steiny}{Malmendier and
  Steiny}{2018}]{Malmendier_Steiny2018}
Malmendier, U. and A.~Steiny (2018).
\newblock Rent or Buy? The Role of Lifetime Experiences of Macroeconomic Shocks
  within and across Countries.
\newblock Working paper, UC-Berkeley.

\bibitem[\protect\citeauthoryear{Markowitz}{Markowitz}{1959}]{markowitz1959portfolio}
Markowitz, H.~M. (1959).
\newblock Portfolio Selection: Efficient Diversification of Investments, Wiley,
  New York, Yale University Press new Haven.

\bibitem[\protect\citeauthoryear{Milesi-Ferretti and Tille}{Milesi-Ferretti and
  Tille}{2011}]{milesi2011great}
Milesi-Ferretti, G.-M. and C.~Tille (2011).
\newblock The great retrenchment: international capital flows during the global
  financial crisis.
\newblock {\em Economic policy\/}~{\em 26\/}(66), 289--346.

\bibitem[\protect\citeauthoryear{Mishra}{Mishra}{2015}]{mishra2015measures}
Mishra, A.~V. (2015).
\newblock Measures of equity home bias puzzle.
\newblock {\em Journal of Empirical Finance\/}~{\em 34}, 293--312.

\bibitem[\protect\citeauthoryear{Mondria and Wu}{Mondria and
  Wu}{2010}]{mondria2010puzzling}
Mondria, J. and T.~Wu (2010).
\newblock The puzzling evolution of the home bias, information processing and
  financial openness.
\newblock {\em Journal of Economic Dynamics and Control\/}~{\em 34\/}(5),
  875--896.

\bibitem[\protect\citeauthoryear{Obstfeld and Rogoff}{Obstfeld and
  Rogoff}{2000}]{obstfeld2000six}
Obstfeld, M. and K.~Rogoff (2000).
\newblock The six major puzzles in international macroeconomics: is there a
  common cause?
\newblock {\em NBER macroeconomics annual\/}~{\em 15}, 339--390.

\bibitem[\protect\citeauthoryear{Ravn and Uhlig}{Ravn and
  Uhlig}{2002}]{ravn2002adjusting}
Ravn, M.~O. and H.~Uhlig (2002).
\newblock On adjusting the Hodrick-Prescott filter for the frequency of
  observations.
\newblock {\em Review of economics and statistics\/}~{\em 84\/}(2), 371--376.

\bibitem[\protect\citeauthoryear{Schraeder}{Schraeder}{2015}]{schraeder2015information}
Schraeder, S. (2015).
\newblock Information Processing and Non-Bayesian Learning in Financial
  Markets.
\newblock {\em Review of Finance\/}, 1--31.

\bibitem[\protect\citeauthoryear{Shiller}{Shiller}{1981}]{Shiller1981}
Shiller, R.~J. (1981).
\newblock Do stock prices move too much to be justified by subsequent changes
  in dividends?
\newblock {\em American Economic Review\/}~{\em 71}, 421--436.

\bibitem[\protect\citeauthoryear{Simonsohn, Karlsson, Loewenstein, and
  Ariely}{Simonsohn et~al.}{2008}]{simonsohn2008tree}
Simonsohn, U., N.~Karlsson, G.~Loewenstein, and D.~Ariely (2008).
\newblock The tree of experience in the forest of information: Overweighing
  experienced relative to observed information.
\newblock {\em Games and Economic Behavior\/}~{\em 62\/}(1), 263--286.

\bibitem[\protect\citeauthoryear{Solnik}{Solnik}{1974}]{solnik1974equilibrium}
Solnik, B.~H. (1974).
\newblock An equilibrium model of the international capital market.
\newblock {\em Journal of economic theory\/}~{\em 8\/}(4), 500--524.

\bibitem[\protect\citeauthoryear{Stulz}{Stulz}{1981}]{stulz1981effects}
Stulz, R.~M. (1981).
\newblock On the effects of barriers to international investment.
\newblock {\em The Journal of Finance\/}~{\em 36\/}(4), 923--934.

\bibitem[\protect\citeauthoryear{Tesar and Werner}{Tesar and
  Werner}{1995}]{tesar1995home}
Tesar, L.~L. and I.~M. Werner (1995).
\newblock Home bias and high turnover.
\newblock {\em Journal of international Money and Finance\/}~{\em 14\/}(4),
  467--492.

\bibitem[\protect\citeauthoryear{Tille and van Wincoop}{Tille and van
  Wincoop}{2014}]{tille2014international}
Tille, C. and E.~van Wincoop (2014).
\newblock International capital flows under dispersed private information.
\newblock {\em Journal of International Economics\/}~{\em 93\/}(1), 31--49.

\bibitem[\protect\citeauthoryear{Tversky and Kahneman}{Tversky and
  Kahneman}{1974}]{TverskyKahneman1974}
Tversky, A. and D.~Kahneman (1974).
\newblock Judgment under uncertainty: Heuristics and biases.
\newblock {\em Science\/}~{\em 4157}, 1124--1131.

\bibitem[\protect\citeauthoryear{Van~Nieuwerburgh and
  Veldkamp}{Van~Nieuwerburgh and Veldkamp}{2009}]{veldkamp2009information}
Van~Nieuwerburgh, S. and L.~Veldkamp (2009).
\newblock Information immobility and the home bias puzzle.
\newblock {\em The Journal of Finance\/}~{\em 64\/}(3), 1187--1215.

\bibitem[\protect\citeauthoryear{Vives}{Vives}{2010}]{Vives2008}
Vives, X. (2010).
\newblock {\em Information and Learning in Markets: The Impact of Market
  Microstructure}.
\newblock Princeton University Press.

\bibitem[\protect\citeauthoryear{Weber, B{\"o}ckenholt, Hilton, and
  Wallace}{Weber et~al.}{1993}]{weber1993determinants}
Weber, E.~U., U.~B{\"o}ckenholt, D.~J. Hilton, and B.~Wallace (1993).
\newblock Determinants of diagnostic hypothesis generation: effects of
  information, base rates, and experience.
\newblock {\em Journal of Experimental Psychology: Learning, Memory, and
  Cognition\/}~{\em 19\/}(5), 1151.

\end{thebibliography}


\pagebreak
\begin{appendices}
	\counterwithin{table}{section}
	\setcounter{figure}{0}

	\section{Proofs for Results}\label{app:Proofs}

		\begin{remark}[Market Participation]
		When possible, the proofs are done for a general demographic structure of market participants. Thus, we introduce the following notation: $\phi_{age}^{i}$ is the mass of agents with $age\in\{0,1\}$, i.\,e., young and old, in country $i\in\{H,F\}$. See Section \ref{sec:Heterogeneity} for a discussion on what this captures in our OLG setting.
	\end{remark}

\medskip

\begin{proof}[Proof of Proposition \ref{p: posterior_beliefs}]

	Let $\Sigma_{0} = [1/\tau^{i}_{i},0;0,1/\tau^{i}_{j}]$ and $\theta_{0} = (m^{i}_{i},m^{j}_{j})$. It follows that, for any $A \subseteq \Theta$ Borel,
	\[
	\mu_{n+age}^{i,n}(A)=\frac{\int_{A}\exp\left\{ -0.5\left(\sum_{k=n}^{n+age}\varpi(k)(y_{k}-\theta)^{T}\Sigma^{-1}(y_{k}-\theta)+(\theta-\theta_{0})^{T}\Sigma_{0}^{-1}(\theta-\theta_{0})\right)\right\} d\theta}{\int_{\Theta}\exp\left\{ -0.5\left(\sum_{k=n}^{n+age}\varpi(k)(y_{k}-\theta)^{T}\Sigma^{-1}(y_{k}-\theta)+(\theta-\theta_{0})^{T}\Sigma_{0}^{-1}(\theta-\theta_{0})\right)\right\} d\theta}.
	\]

	Observe that
	\begin{align*}
	& \sum_{k=n}^{n+age}\varpi(k)\left\{ (y_{k})^{T}\Sigma^{-1}(y_{k})-2(y_{k})^{T}\Sigma^{-1}\theta\right\} +\theta^{T}\left(\Sigma^{-1}(age+1)+\Sigma_{0}^{-1}\right)\theta-2\theta_{0}^{T}\Sigma_{0}^{-1}\theta\\
	= & \sum_{k=n}^{n+age}\varpi(k)(y_{k})^{T}\Sigma^{-1}(y_{k})-2\left(\sum_{k=n}^{n+age}\varpi(k)(y_{k})^{T}\Sigma^{-1}+\theta_{0}^{T}\Sigma_{0}^{-1}\right)\theta\\
	& +\theta^{T}\left(\Sigma^{-1}(age+1)+\Sigma_{0}^{-1}\right)\theta.
	\end{align*}

	Letting $\Xi_{age} =  (\Sigma^{-1}(age+1)+\Sigma_{0}^{-1})^{-1}$ and $\alpha^{T}_{n,age} \equiv \left(\sum_{k=n}^{n+age}\varpi(k)(y_{k})^{T}\Sigma^{-1}+\theta_{0}^{T}\Sigma_{0}^{-1}\right)$, it follows that the RHS equals
	\begin{align*}
	& \theta^{T} \Xi^{-1}_{age} \theta - 2\left\{ \alpha^{T}_{n,age} \Xi_{age} \right\} \Xi^{-1}_{age} \theta + \alpha^{T}_{n,age} \Xi_{age}  \alpha_{n,age}  +stuff_{n,age}\\
	= &  \left( \theta - \alpha^{T}_{n,age} \Xi_{age} \right)^{T} \Xi^{-1}_{age}  \left( \theta - \alpha^{T}_{n,age} \Xi_{age} \right) +  stuff_{n,age}
	\end{align*}
	where $stuff_{n,age} \equiv \theta_{0}^{T}\Sigma_{0}^{-1}\theta_{0} + \sum_{k=n}^{n+age}\varpi(k)(y_{k})^{T}\Sigma^{-1}(y_{k}) - \alpha^{T}_{n,age} \Xi_{age}  \alpha_{n,age}$ and does not depend on $\theta$. Therefore,
	\begin{align*}
	\mu_{n+age}^{i,n}(A)=\frac{\int_{A}\exp\left\{ -0.5\left(  \left( \theta - \alpha^{T}_{n,age} \Xi_{age} \right)^{T} \Xi^{-1}_{age}  \left( \theta - \alpha^{T}_{n,age} \Xi_{age} \right)  \right)\right\} d\theta}{\int_{\Theta}\exp\left\{ -0.5\left(  \left( \theta - \alpha^{T}_{n,age} \Xi_{age} \right)^{T} \Xi^{-1}_{age}  \left( \theta - \alpha^{T}_{n,age} \Xi_{age} \right)  \right)\right\} d\theta}.
	\end{align*}
	Which implies that $\mu^{i,n}_{n+age}$ is characterized by a Gaussian pdf with mean $\alpha^{T}_{n,age} \Xi_{age}$ and covariance matrix $\Xi_{age}$.

	Observe that $\Xi_{age}$ is a diagonal matrix with elements given by
	\begin{align*}
	&\sigma^{i,n}_{i,n+age} = \frac{ \sigma^{2}}{(age+1) + \tau^{i}_{i} \sigma^{2}}\\
	&\sigma^{i,n}_{j,n+age} = \frac{ \sigma^{2}}{(age+1) +  \tau^{i}_{j} \sigma^{2}}.
	\end{align*}
	And the mean $\alpha^{T}_{n,age} \Xi_{age}$ is thus given by
	\begin{align*}
	 \left[
	\begin{array}{c}
	\sigma^{-2} \sum_{k=n}^{n+age}\varpi(k)y_{i,k}+ m^{i}_{i} \tau^{i}_{i}\\
	\sigma^{-2} \sum_{k=n}^{n+age}\varpi(k)y_{j,k}+ m^{i}_{j} \tau^{i}_{j}
	\end{array}
	\right]
	\Xi_{age} = \left[
	\begin{array}{c}
	\frac{(age+1)}{(age+1) +  \tau^{i}_{i} \sigma^{2}} \frac{1}{age+1}\sum_{k=n}^{n+age}\varpi(k)y_{i,k}+ m^{i}_{i} \frac{\tau^{i}_{i}\sigma^{2}}{(age+1) + \tau^{i}_{i} \sigma^{2}}\\
	\frac{(age+1)}{(age+1) + \tau^{i}_{j} \sigma^{2}} \frac{1}{age+1}\sum_{k=n}^{n+age}\varpi(k)y_{j,k}+ m^{i}_{j} \frac{\tau^{i}_{j}\sigma^{2}}{(age+1) +\tau^{i}_{j} \sigma^{2}}
	\end{array}
	\right]
	\end{align*}

\end{proof}

\medskip

	\begin{proof}[Proof of Lemma \ref{l: symmetry}] As $\tau=\tau^*$, we have that $w_{age}=w_{age}^{*}$ and that $\sigma_{age}^{2}=\sigma_{age}^{*2}$ for $age\in\{0,1\}$. Using the results from the Proof of Proposition \ref{p: demands}, we have that the aggregate demand for the asset of country $F$ by agents in country $H$ is
		\begin{equation}
		X_{F,t}^{H}=\phi_0^{H} \frac{E^{H,t}_{t}[s_{F,t+1}] }{V^{H,t}_{t}[s_{F,t+1}]} + \phi_1^{H} \frac{E^{H,t-1}_{t}[s_{F,t+1}] }{V^{H,t-1}_{t}[s_{F,t+1}]}
		\end{equation}
	where $s_{j,t+1}\equiv y_{j,t+1}+p_{j,t+1}-Rp_{j,t}$ for $j\in\{H,F\}$. Analogously, the aggregate demand for the asset of country $F$ by agents in country $F$ is
		\begin{equation}
		X_{F,t}^{F}=\phi_0^{F} \frac{E^{F,t}_{t}[s_{F,t+1}] }{V^{F,t}_{t}[s_{F,t+1}]} + \phi_1^{F} \frac{E^{F,t-1}_{t}[s_{F,t+1}] }{V^{F,t-1}_{t}[s_{F,t+1}]}
		\end{equation}
		As both countries have the same prior beliefs about output in country $F$ and all agents in a given cohort observe the same output realizations, it follows that agents in both countries in a given cohort have the same posterior beliefs.

		When both countries have the demographics of market participation, i.\,e., $\phi_{age}^H=\phi_{age}^F$ for all $age\in\{0,1\}$ with $\phi_0^{H}+\phi_1^{H}=\phi_0^{F}+\phi_1^{F}=\phi$, it follows that $X_{F,t}^{H}=X_{F,t}^{F}$. Market clearing implies that $$\phi X_{F,t}^{H} + \phi X_{F,t}^{F}=1.$$ It follows that $X_{F,t}^{H} = X_{F,t}^{F} = \frac{1}{2\phi}.$ As in our baseline model we have assumed that each country has a mass of $\phi=\frac{1}{2}$ agents, the result follows. The proof for  $X_{H,t}^{H} = X_{H,t}^{F} = 1$ is isomorphic.
	\end{proof}

	\medskip

	\begin{proof}[Proof of Proposition \ref{p: demands}]
		Given their posterior belief, the problem that generation $n$ born in country $i$ faces at time $t$ is given by (we drop the superscript $n,i$ to reduce notation):

\begin{align}
\max_{x_{H,t},x_{F,t},b_{t}}&E_{t}[-\exp(-\gamma\cdot \mathcal{W}_{t+1})] \\
s.t.\quad \mathcal{W}_{t+1}&=x_{H,t}\left(y_{H,t+1}+p_{H,t+1}-Rp_{H,t}\right)+x_{F}\left(y_{F,t+1}+p_{F,t+1}-Rp_{F,t}\right) \\
b_{t}&=\mathcal{W}_{t-1}-x_{H,t}p_{H,t}-x_{F,t}p_{F,t}
\end{align}

Due to the linearity of prices (we focus on Affine Equilibria), we know that $\mathcal{W}_{t+1}$ is normally distributed. As a result, the problem can be re-written as follows,
\begin{align}
 \max_{x_{H,t},x_{F,t}}&E_t\left[\mathcal{W}_{t+1}\right]-\frac{1}{2}\gamma V_t\left[\mathcal{W}_{t+1}\right] \\
		s.t.\quad \mathcal{W}_{t+1}&=x_{H,t}\left(y_{H,t+1}+p_{H,t+1}-Rp_{H,t}\right)+x_{F,t}\left(y_{F,t+1}+p_{F,t+1}-Rp_{F,t}\right)
\end{align}
		where

\begin{align*}
E_{t}\left[\mathcal{W}_{t+1}\right]&=x_{H,t}\left(E_{t}\left[y_{H,t+1}+p_{H,t+1}\right]-Rp_{H,t}\right)+x_{F,t}\left(E_{t}\left[y_{F,t+1}+p_{F,t+1}\right]-Rp_{F,t}\right) \\
V_{t}\left[\mathcal{W}_{t+1}\right]&=x_{H,t}^{2}V_{t}\left[y_{H,t+1}+p_{H,t+1}\right]+x_{F,t}^{2}E_{t}\left[y_{F,t+1}+p_{F,t+1}\right]
\end{align*}

as outputs are independently distributed across countries. From the FOC of this problem, we obtain that the demand of generation $n\in\{t,t-1\}$ in country $i\in\{H,F\}$ for the asset in country $j\in\{H,F\}$ at time $t$ is:

$$x_{j,t}^{n,i}	=\frac{E_{t}^{n,i}\left[y_{j,t+1}+p_{j,t+1}\right]-Rp_{j,t}}{\gamma V_{t}^{n,i}\left[y_{j,t+1}+p_{j,t+1}\right]}.$$

And the demand for the risk-free asset of cohort $n$ in country $i$ at time $t$ follows from the budget constraint:

$$b_{t}^{n,i}=\mathcal{W}_{t}^{n,i}-x_{F,t}^{n,i}p_{F,t}-x_{H,t}^{n,i}p_{H,t}$$

	\end{proof}

	\medskip

	\begin{proof}[Proof of Proposition \ref{p: eq_prices}]

Given our price guess, we use the method of undetermined coefficients (MUC) and market clearing to obtain prices. We focus on market clearing for the asset of country i:

	{\small
	\begin{align}
	\phi_{0}^{i}&\frac{E_{t}^{t,i}\left[y_{i,t+1}+p_{i,t+1}\right]-Rp_{i,t}}{\gamma V_{t}^{t,i}\left[y_{i,t+1}+p_{i,t+1}\right]}+\phi_{1}^{i}\frac{E_{t}^{t-1,i}\left[y_{i,t+1}+p_{i,t+1}\right]-Rp_{i,t}}{\gamma V_{t}^{t-1,i}\left[y_{i,t+1}+p_{i,t+1}\right]}+... \\
	&\phi_{1}^{j}\frac{E_{t}^{t,j}\left[y_{i,t+1}+p_{i,t+1}\right]-R p_{i,t}}{\gamma V_{t}^{t,j}\left[y_{i,t+1}+p_{i,t+1}\right]}+\phi_{1}^{j}\frac{E_{t}^{t-1,j}\left[y_{i,t+1}+p_{i,t+1}\right]-R p_{i,t}}{\gamma V_{t}^{t-1,j}\left[y_{i,t+1}+p_{i,t+1}\right]}=1
	\end{align}}

		Plugging in the price guess, after some algebra we obtain:

	{\small
\begin{align}
\phi_{0}^{i}&\left(\frac{\alpha_{i}+\left(1+\beta_{i,0}\right)E_{t}^{t,i}\left[y_{i,t+1}\right]+\beta_{j,0}^{*}E_{t}^{t,i}\left[y_{j,t+1}\right]+\sum_{k=1}^{K}\beta_{i,k}\cdot y_{i,t+1-k}+\sum_{k=1}^{K}\beta_{j,k}^{*}\cdot y_{j,t+1-k}}{\gamma\left[\left(1+\beta_{i,0}\right)^{2}\sigma_{0}^{2}+\beta_{i,0}^{*}{}^{2}\sigma_{0}^{*2}\right]}\right)+... \nonumber \\
\phi_{1}^{i}&\left(\frac{\alpha_{i}+\left(1+\beta_{i,0}\right)E_{t}^{t-1,i}\left[y_{i,t+1}\right]+\beta_{j,0}^{*}E_{t}^{t-1,i}\left[y_{j,t+1}\right]+\sum_{k=1}^{K}\beta_{i,k}\cdot y_{i,t+1-k}+\sum_{k=1}^{K}\beta_{j,k}^{*}\cdot y_{j,t+1-k}}{\gamma\left[\left(1+\beta_{i,0}\right)^{2}\sigma_{1}^{2}+\beta_{i,0}^{*}{}^{2}\sigma_{1}^{*2}\right]}\right)+...	\nonumber \\
\phi_{0}^{j}&\left(\frac{\alpha_{i}+\left(1+\beta_{i,0}\right)E_{t}^{t,j}\left[y_{i,t+1}\right]+\beta_{j,0}^{*}E_{t}^{t,j}\left[y_{j,t+1}\right]+\sum_{k=1}^{K}\beta_{i,k}\cdot y_{i,t+1-k}+\sum_{k=1}^{K}\beta_{j,k}^{*}\cdot y_{j,t+1-k}}{\gamma\left[\left(1+\beta_{i,0}\right)^{2}\sigma_{0}^{*2}+\beta_{i,0}^{*}{}^{2}\sigma_{0}^{2}\right]}\right)+... \nonumber  \\
\phi_{1}^{j}&\left(\frac{\alpha_{i}+\left(1+\beta_{i,0}\right)E_{t}^{t-1,j}\left[y_{i,t+1}\right]+\beta_{j,0}^{*}E_{t}^{t-1,j}\left[y_{j,t+1}\right]+\sum_{k=1}^{K}\beta_{i,k}\cdot y_{i,t+1-k}+\sum_{k=1}^{K}\beta_{j,k}^{*}\cdot y_{j,t+1-k}}{\gamma\left[\left(1+\beta_{i,0}\right)^{2}\sigma_{1}^{*2}+\beta_{i,0}^{*}{}^{2}\sigma_{1}^{2}\right]}\right) \nonumber \\
&\quad \quad \quad=\Sigma\frac{R}{\gamma}\left(\alpha_{i}+\sum_{k=0}^{K}\beta_{i,k}\cdot y_{i,t-k}+\sum_{k=0}^{K}\beta_{j,k}^{*}\cdot y_{j,t-k}\right)
\end{align}}
where
{\footnotesize $$\Sigma\equiv\left(\frac{\phi_{0}^{i}}{\left(1+\beta_{i,0}\right)^{2}\sigma_{0}^{2}+\beta_{i,0}^{*}{}^{2}\sigma_{0}^{*2}}+\frac{\phi_{1}^{i}}{\left(1+\beta_{i,0}\right)^{2}\sigma_{1}^{2}+\beta_{i,0}^{*}{}^{2}\sigma_{1}^{*2}}+\frac{\phi_{0}^{j}}{\left(1+\beta_{i,0}\right)^{2}\sigma_{0}^{*2}+\beta_{i,0}^{*}{}^{2}\sigma_{0}^{2}}+\frac{\phi_{1}^{j}}{\left(1+\beta_{i,0}\right)^{2}\sigma_{1}^{*2}+\beta_{i,0}^{*}{}^{2}\sigma_{1}^{2}}\right).$$	}

\smallskip

We are now ready to use the MUC. We begin by solving for the loadings on the oldest output realizations (where note that $K$ can be an arbitrary large number):

\smallskip

From $y_{j,t-K}$, we have that  $$0=\beta_{j,K}^{*}\cdot y_{j,t-K}\iff\beta_{j,K}^{*}=0$$
and analogously for $y_{i,t-K}$ we have $$0=\beta_{i,K}\cdot y_{i,t-K}\iff\beta_{i,K}=0.$$

\smallskip

From $y_{j,t+1-K}$, and using the previous results, we have
$$0	=\Sigma\frac{R}{\gamma}\beta_{j,K-1}^{*}y_{j,t-(K-1)}\iff\beta_{j,K-1}^{*}=0.$$

Analogously, from $y_{i,t+1-K}$, we have $$0=\Sigma\frac{R}{\gamma}\beta_{i,K-1}y_{i,t-(K-1)}\iff\beta_{i,K-1}=0.$$

Through this process, it is immediate that $\beta_{i,\tau}=\beta_{j,\tau}^{*}=0$ for all $\tau\in\left\{ t-2,K\right]$. We now study the coefficients on the output realizations that also determine beliefs; that is, that have been experienced by at least one market participant. We begin by studying the price loadings on the foreign country output realizations.

\smallskip

From $y_{j,t-1}$, and using the result that $\beta_{2,k}^{*}=0$, we have
{\footnotesize
$$\beta_{j,0}^{*}\left(\frac{\phi_{1}^{i}w_{1}^{*}\left(1-\omega\right)}{\left(1+\beta_{i,0}\right)^{2}\sigma_{1}^{2}+\beta_{i,0}^{*}{}^{2}\sigma_{1}^{*2}}+\frac{\phi_{1}^{j}w_{1}\left(1-\omega\right)}{\left(1+\beta_{i,0}\right)^{2}\sigma_{1}^{*2}+\beta_{i,0}^{*}{}^{2}\sigma_{1}^{2}}\right)=\Sigma R\beta_{j,1}^{*}.$$}

\smallskip

From $y_{j,t}$, after some algebra, we have

{\scriptsize
\begin{align*}
\beta_{j,0}^{*}\left(\Sigma R-\frac{\phi_{0}^{i}w_{1}^{*}}{\left(1+\beta_{i,0}\right)^{2}\sigma_{0}^{2}+\beta_{i,0}^{*}{}^{2}\sigma_{0}^{*2}}-\frac{\phi_{1}^{i}w_{1}^{*}\left(1-\omega\right)}{\left(1+\beta_{i,0}\right)^{2}\sigma_{1}^{2}+\beta_{i,0}^{*}{}^{2}\sigma_{1}^{*2}}-\frac{\phi_{0}^{j}w_{1}}{\left(1+\beta_{i,0}\right)^{2}\sigma_{0}^{*2}+\beta_{i,0}^{*}{}^{2}\sigma_{0}^{2}}-\frac{\phi_{0}^{j}w_{1}\left(1-\omega\right)}{\left(1+\beta_{i,0}\right)^{2}\sigma_{1}^{*2}+\beta_{i,0}^{*}{}^{2}\sigma_{1}^{2}}\right)& \\
=\beta_{1,k}^{*}\Sigma.&
\end{align*}}
Combining the last two conditions, we obtain:
{\scriptsize
$$\beta_{j,0}^{*}=R\frac{\left(\Sigma R-\frac{\phi_{0}^{i}w_{1}^{*}}{\left(1+\beta_{i,0}\right)^{2}\sigma_{0}^{2}+\beta_{i,0}^{*}{}^{2}\sigma_{0}^{*2}}-\frac{\phi_{1}^{i}w_{1}^{*}\left(1-\omega\right)}{\left(1+\beta_{i,0}\right)^{2}\sigma_{1}^{2}+\beta_{i,0}^{*}{}^{2}\sigma_{1}^{*2}}-\frac{\phi_{0}^{j}w_{1}}{\left(1+\beta_{i,0}\right)^{2}\sigma_{0}^{*2}+\beta_{i,0}^{*}{}^{2}\sigma_{0}^{2}}-\frac{\phi_{0}^{j}w_{1}\left(1-\omega\right)}{\left(1+\beta_{i,0}\right)^{2}\sigma_{1}^{*2}+\beta_{i,0}^{*}{}^{2}\sigma_{1}^{2}}\right)}{\left(\frac{\phi_{1}^{i}w_{1}^{*}\left(1-\omega\right)}{\left(1+\beta_{i,0}\right)^{2}\sigma_{1}^{2}+\beta_{i,0}^{*}{}^{2}\sigma_{1}^{*2}}+\frac{\phi_{1}^{j}w_{1}\left(1-\omega\right)}{\left(1+\beta_{i,0}\right)^{2}\sigma_{1}^{*2}+\beta_{i,0}^{*}{}^{2}\sigma_{1}^{2}}\right)}\beta_{j,0}^{*}.$$} 	which generically holds if and only if $$\beta_{j,0}^{*}=0 \Rightarrow \beta_{j,1}^{*}=0.$$

Thus, the price of the asset of country $i$ has zero price loadings on output realizations of country $j$. Finally, we are left with the most recent output realizations of country $i$.

\smallskip

From $y_{i,t-1}$ and using the previous results, we have,

\begin{equation} \label{e: MUC_1}
\left(1+\beta_{i,0}\right)\left(1-\omega\right)\left\{\frac{\phi_{1}^{i}}{\sigma_{1}^{2}}w_{1}+\frac{\phi_{1}^{j}}{\sigma_{1}^{*2}}w_{1}^{*}\right\} =\Sigma R\beta_{i,1}.
\end{equation}

\smallskip

From $y_{i,t}$, we have

\begin{equation} \label{e: MUC_2}
\left(1+\beta_{i,0}\right)\left\{ \frac{\phi_{0}^{i}}{\sigma_{0}^{2}}w_{0}+\frac{\phi_{1}^{i}}{\sigma_{1}^{2}}w_{1}\omega+\frac{\phi_{0}^{j}}{\sigma_{0}^{*2}}w_{0}^{*}+\frac{\phi_{1}^{j}}{\sigma_{1}^{*2}}w_{1}^{*}\omega\right\} +\Sigma\beta_{1}^{i}=\Sigma R\beta_{i,0}.
\end{equation}

\smallskip

And from the constant terms,
{\scriptsize
\begin{equation} \label{e: MUC_3}
\left(1+\beta_{i,0}\right)\left\{ \frac{\phi_{0}^{i}}{\sigma_{0}^{2}}\left(1-w_{0}\right)m_{i}+\frac{\phi_{1}^{i}}{\sigma_{1}^{2}}\left(1-w_{1}\right)m_{i}+\frac{\phi_{0}^{j}}{\sigma_{0}^{*2}}\left(1-w_{0}^{*}\right)m_{i}+\frac{\phi_{1}^{j}}{\sigma_{1}^{*2}}\left(1-w_{1}^{*}\right)m_{i}\right\} =\gamma\left(1+\beta_{i,0}\right)^{2}+\Sigma\left(R-1\right)\alpha_{i}.
\end{equation}}

After some simple algebra and imposing that all cohorts in all countries have a mass of $\frac{1}{4}$, the loadings from the proposition are obtained.

	\end{proof}

\medskip

	\begin{proof}[Proof of Proposition \ref{p: home_bias_priors}]
		To show that a decrease in the prior precision about a foreign country induces a home bias in the domestic country, it suffices to show that $\frac{\partial E[HB_{t}]}{\partial\tau^{*}}|_{\tau^{*}=\tau}<0$, which would imply that $HB_{t}>0$ if $\tau^{*}<\tau$. As $\tau^{*}$ decreases, everything else equal, the aggregate demand in country $H$ of domestic assets, $X_{H,t}^{H}$, remains unchanged. Thus, we focus on the effect of a decrease in the prior precision about the foreign asset on the demand of the domestic country for the foreign asset. We thus highlight the terms in the demand that are directly affected by the prior precision $\tau^*$:

		{\footnotesize
			\begin{align*}
			&X_{F,t}^{H} = \phi_0^H \left[ \frac{\tilde{\alpha}+(1+\beta_{F,0})(1-w^*_{0})m}{\gamma(1+\beta_{F,0})^{2}\sigma_{0}^{*2}} +\frac{w^*_{0}(1+\beta_{F,0})+\beta_{F,1}-R\beta_{F,0}}{\gamma(1+\beta_{F,0})^{2}\sigma_{0}^{*2}} y_{F,t}
			-\frac{R\beta_{F,1}}{\gamma(1+\beta_{F,0})^{2}\sigma_{0}^{*2}} y_{F,t-1} \right] + ...\\
			&\phi_1^{H}\left[\frac{\tilde{\alpha}+\left(1+\beta_{F,0}\right)\left(1-w^*_{1}\right)m}{\gamma\left(1+\beta_{F,0}\right)^{2}\sigma_{1}^{*2}}+\frac{w^*_{1}\left(1+\beta_{F,0}\right)\omega+\beta_{F,1}-R\beta_{F,0}}{\gamma\left(1+\beta_{F,0}\right)^{2}\sigma_{1}^{*2}} y_{F,t} +\frac{w^*_{1}\left(1+\beta_{F,0}\right)\left(1-\omega\right)-R\beta_{F,1}}{\gamma\left(1+\beta_{F,0}\right)^{2}\sigma_{1}^{*2}} y_{F,t-1}\right].
			\end{align*}}

		where the terms indexed by $*$ are a function of $\tau^*$. We can then compute $\frac{\partial X_{F,t}^{H}\left(\tau^{*}\right)}{\partial\tau^{*}}$ as follows
		\begin{equation*}
		=\phi_0^H\frac{\left(y_{F,t}-m\right)\frac{\partial w^*_{0}}{\partial \tau^*}}{\gamma\left(1+\beta_{F,0}\right)\sigma_{0}^{*2}}+\phi_1^{H}\frac{\left(\omega y_{F,t}+\left(1-\omega\right)y_{F,t-1}-m\right)\frac{\partial w^*_{1}}{\partial \tau^*}}{\gamma\left(1+\beta_{F,0}\right)\sigma_{1}^{*2}}+\phi_0^H\frac{\partial x_{F,t}^{t,H}}{\partial\sigma_{0}^{*2}}\frac{\partial\sigma_{0}^{2*}}{\partial\tau^{*}}+\phi_1^{H}\frac{\partial x_{F,t}^{t-1,H}}{\partial\sigma_{1}^{*2}}\frac{\partial\sigma_{1}^{*2}}{\partial\tau^{*}}
		\end{equation*}

		Thus, if $m=E[y_{F,t}]$ for all $t$, it follows that
		\begin{equation*}
		E\left[\frac{\partial X_{F,t}^{H}\left(\tau^{*}\right)}{\partial\tau^{*}}\right] = \phi_0^H\frac{\partial x_{F,t}^{t,H}}{\partial\sigma_{0}^{*2}}\frac{\partial\sigma_{0}^{2*}}{\partial\tau^{*}}+\phi_1^{H}\frac{\partial x_{F,t}^{t-1,H}}{\partial\sigma_{1}^{*2}}\frac{\partial\sigma_{1}^{*2}}{\partial\tau^{*}} < 0
		\end{equation*}

Thus, everything else equal, as the precision of prior beliefs about the output in the foreign country decreases, domestic  agents demand less of the foreign asset. In equilibrium, prices fall to off-set some of this effect, but since the asset supply is fixed, it follows that in the equilibrium with $\tau^*<\tau$, and prior beliefs centered around the truth, domestic agents hold on average a smaller fraction of the foreign asset than of the domestic asset, $E[X_H^H]>E[X_F^H]$.

\end{proof}

\begin{proof}[Proof of Proposition \ref{p: priors_recessions}]
We define flows for country $i$ as the change in the domestic asset holdings of domestic and foreign investors, respectively:
$$\Delta_{i,t}^{i}	=\left(X_{i,t}^{i}-X_{i,t-1}^{i}\right).$$
$$\Delta_{i,t}^{j}	=\left(X_{i,t}^{j}-X_{i,t-1}^{j}\right).$$


It is immediate that $X_{i,t-1}^{i}$ and $X_{i,t-1}^{j}$ are independent of $y_{i,t}$, as they are portfolios holdings chosen in $t-1$. As a result, to understand the change in flows, it suffices to check that:
$$\frac{\partial X^i_{i,t}}{\partial y_{i,t}}<0 \quad \quad \quad \frac{\partial X^j_{i,t}}{\partial y_{i,t}}>0.$$

In equilibrium, $X_{i,t}^{i}+X_{i,t}^{j}=1$, which implies that $\frac{\partial X_{i,t}^{i}}{\partial y_{i,t}}	=-\frac{\partial X_{i,t}^{j}}{\partial y_{i,t}}.$ Thus, it suffices to show that $\frac{\partial X_{i,t}^{i}}{\partial y_{i,t}}<0.$

\begin{align*}
\frac{\partial X^i_{i,t}}{\partial y_{i,t}} = \phi_0^i \left(\frac{w_{0}\left(1+\beta_{0,i}\right)+\beta_{1,i}-R\beta_{0,i}}{\gamma\left(1+\beta_{0,i}\right)^{2}\sigma_{0}^{2}}\right) + \phi_1^i \left(\frac{w_{1}\left(1+\beta_{0,i}\right)\omega+\beta_{1,i}-R\beta_{0,i}}{\gamma\left(1+\beta_{0,i}\right)^{2}\sigma_{1}^{2}}\right)
\end{align*}

by replacing $\beta_{1,i}-R\beta_{0,i}$ from using equation \eqref{e: MUC_2} and using the fact that and using the fact $1+\beta_{0,i}>0$, we  obtain

{\small
\begin{align} \label{e: changes_flows}
\frac{\partial X_{i,t}^{i}}{\partial y_{i,t}}&\propto\left(\frac{\phi_{0}^{i}}{\sigma_{0}^{2}}w_{0}+\frac{\phi_{1}^{i}}{\sigma_{1}^{2}}w_{1}\omega\right)-\left(\frac{\phi_{0}^{i}}{\sigma_{0}^{2}}+\frac{\phi_{1}^{i}}{\sigma_{1}^{2}}\right)\frac{\left(\frac{\phi_{0}^{i}}{\sigma_{0}^{2}}w_{0}+\frac{\phi_{1}^{i}}{\sigma_{1}^{2}}w_{1}\omega+\frac{\phi_{0}^{j}}{\sigma_{0}^{*2}}w_{0}^{*}+\frac{\phi_{1}^{j}}{\sigma_{1}^{*2}}w_{1}^{*}\omega\right)}{\left(\frac{\phi_{0}^{i}}{\sigma_{0}^{2}}+\frac{\phi_{1}^{i}}{\sigma_{1}^{2}}+\frac{\phi_{0}^{j}}{\sigma_{0}^{*2}}+\frac{\phi_{1}^{j}}{\sigma_{1}^{*2}}\right)} \\
&=\frac{\phi_{0}^{i}}{\sigma_{0}^{2}}\frac{\phi_{0}^{j}}{\sigma_{0}^{*2}}\left(w_{0}-w_{0}^{*}\right)+\frac{\phi_{0}^{i}}{\sigma_{0}^{2}}\frac{\phi_{1}^{j}}{\sigma_{1}^{*2}}\left(w_{0}-w_{1}^{*}\omega\right)+\frac{\phi_{0}^{j}}{\sigma_{0}^{*2}}\frac{\phi_{1}^{i}}{\sigma_{1}^{2}}\left(w_{1}\omega-w_{0}^{*}\right)+\left(\frac{\phi_{1}^{i}}{\sigma_{1}^{2}}\frac{\phi_{1}^{j}}{\sigma_{1}^{*2}}\right)\left(w_{1}-w_{1}^{*}\right)\omega \nonumber
\end{align}}

Now, when both countries have the same demographics of market participation, $\phi_{age}^{i}=\phi$ for all $i\in\left\{ H,F\right\}$  and $age\in\left\{ 0,1\right\}$, where $\phi=\frac{1}{4}$ in the baseline model. With this,
{\small
$$\frac{\partial X_{i,t}^{i}}{\partial y_{i,t}}=\frac{\phi^{2}}{\sigma_{0}^{2}\sigma_{0}^{*2}}\left[\left(1+\frac{\sigma_{0}^{*2}}{\sigma_{1}^{*2}}\right)w_{0}-\left(1+\frac{\sigma_{0}^{2}}{\sigma_{1}^{2}}\right)w_{0}^{*}\right]+\frac{\phi^{2}}{\sigma_{1}^{2}\sigma_{1}^{*2}}\left[\left(1+\frac{\sigma_{1}^{*2}}{\sigma_{0}^{*2}}\right)w_{1}-\left(1+\frac{\sigma_{1}^{2}}{\sigma_{0}^{2}}\right)w_{1}^{*}\right]\omega.$$}

Remember that
\begin{align*}
\sigma_{0}^{2}&=\left[\tau+\sigma^{-2}\right]^{-1} \quad  \quad \quad \;\;\; \sigma_{1}^{2}=\left[\tau+2\sigma^{-2}\right]^{-1} \\
\sigma_{0}^{*2}&=\left[\tau^{*}+\sigma^{-2}\right]^{-1} \quad  \quad \quad \sigma_{1}^{*2}=\left[\tau^{*}+2\sigma^{-2}\right]^{-1} \\
w_{0}&=\frac{\sigma^{-2}}{\tau+\sigma^{-2}}\quad  \quad \quad\quad\quad \;\; w_{1}\omega=\frac{\sigma^{-2}}{\tau+2\sigma^{-2}} \\
w_{0}^{*}&=\frac{\sigma^{-2}}{\tau^{*}+\sigma^{-2}}\quad  \quad \quad\quad \quad w_{1}^{*}\omega=\frac{\sigma^{-2}}{\tau^{*}+2\sigma^{-2}}
\end{align*}

Plugging this in, we obtain that the first and second term in brackets are:

\begin{align}
\left(1+\frac{\sigma_{0}^{*2}}{\sigma_{1}^{*2}}\right)w_{0}-\left(1+\frac{\sigma_{0}^{2}}{\sigma_{1}^{2}}\right)w_{0}^{*} &=\frac{2\sigma^{-2}}{\left(\tau+\sigma^{-2}\right)\left(\tau^{*}+\sigma^{-2}\right)}\left(\tau^{*}-\tau\right)<0 \\
\left(1+\frac{\sigma_{1}^{*2}}{\sigma_{0}^{*2}}\right)w_{1}\omega-\left(1+\frac{\sigma_{1}^{2}}{\sigma_{0}^{2}}\right)w_{1}^{*}\omega &=\frac{2\sigma^{-2}}{\left(\tau+2\sigma^{-2}\right)\left(\tau^{*}+2\sigma^{-2}\right)}\left(\tau^{*}-\tau\right)<0
\end{align}
Thus, it follows that with same demographics, $\frac{\partial X_{i,t}^{i}}{\partial y_{i,t}}<0$ and $\frac{\partial X_{i,t}^{j}}{\partial y_{i,t}}>0$.

%

\end{proof}

\begin{proof}[Proof of Proposition \ref{p: demographics}]
	To analyze the effect of different demographics, we analyze how the change in flows varies with changes in the mass of young and old market participants in both the domestic and the foreign country:

	\begin{align}
	\frac{\partial}{\partial\phi_{0}^{j}}\left(\frac{\partial X_{i,t}^{i}}{\partial y_{i,t}}\right)	&=\frac{\phi_{0}^{i}}{\sigma_{0}^{2}\sigma_{0}^{*2}}\left(w_{0}-w_{0}^{*}\right)+\frac{\phi_{1}^{i}}{\sigma_{0}^{*2}\sigma_{1}^{2}}\left(w_{1}\omega-w_{0}^{*}\right) \label{e: der_1}\\
	\frac{\partial}{\partial\phi_{1}^{j}}\left(\frac{\partial X_{i,t}^{i}}{\partial y_{i,t}}\right)	&=\frac{\phi_{0}^{i}}{\sigma_{0}^{2}\sigma_{1}^{*2}}\left(w_{0}-w_{1}^{*}\omega\right)+\frac{\phi_{1}^{i}}{\sigma_{1}^{2}\sigma_{1}^{*2}}\left(w_{1}-w_{1}^{*}\right)\omega \label{e: der_2} \\
	\frac{\partial}{\partial\phi_{0}^{i}}\left(\frac{\partial X_{i,t}^{i}}{\partial y_{i,t}}\right)	&=\frac{\phi_{0}^{j}}{\sigma_{0}^{2}\sigma_{0}^{*2}}\left(w_{0}-w_{0}^{*}\right)+\frac{\phi_{1}^{j}}{\sigma_{0}^{2}\sigma_{1}^{*2}}\left(w_{0}-w_{1}^{*}\omega\right) \label{e: der_3} \\
	\frac{\partial}{\partial\phi_{1}^{i}}\left(\frac{\partial X_{i,t}^{i}}{\partial y_{i,t}}\right)	&=\frac{\phi_{0}^{j}}{\sigma_{0}^{*2}\sigma_{1}^{2}}\left(w_{1}\omega-w_{0}^{*}\right)+\frac{\phi_{1}^{j}}{\sigma_{1}^{2}\sigma_{1}^{*2}}\left(w_{1}-w_{1}^{*}\right)\omega \label{e: der_4}
	\end{align}

	First, when $\tau\geq \tau^*$, we have that $w_0\leq w_0^*$, $w_1\leq w_1^*$, and $w_1\omega < w_1^*\omega \leq w_0^*$, as foreign (and young) agents overweight domestic output realizations relative to domestic (and old) agents. As a result, it is immediate that the sign of \eqref{e: der_1} and \eqref{e: der_4} is negative.

	\medskip

	Next, we analyze the sign of \eqref{e: der_2} and \eqref{e: der_3}. Note that for $\tau=\tau^{*}$, as $w_{age}=w_{age}^*$ for $age\in\{0,1\}$,  we have
	\begin{align*}
	\frac{\partial}{\partial\phi_{1}^{j}}\left(\frac{\partial X_{i,t}^{i}}{\partial y_{i,t}}\right)	&=\frac{\phi_{0}^{i}}{\sigma_{0}^{2}\sigma_{1}^{2}}\left(w_{0}-w_{1}\omega\right)>0 \\
	\frac{\partial}{\partial\phi_{0}^{i}}\left(\frac{\partial X_{i,t}^{i}}{\partial y_{i,t}}\right)	&=\frac{\phi_{1}^{j}}{\sigma_{0}^{2}\sigma_{1}^{2}}\left(w_{0}-w_{1}\omega\right)>0
	\end{align*}
	and that for for $\tau>\tau^{*}+\sigma^{-2}$, as  $w_{0}=\frac{\sigma^{-2}}{\tau+\sigma^{-2}}<\frac{\sigma^{-2}}{\tau^{*}+2\sigma^{-2}}=w_{1}^{*}\omega$, we have
	\begin{align*}
	\frac{\partial}{\partial\phi_{1}^{j}}\left(\frac{\partial X_{i,t}^{i}}{\partial y_{i,t}}\right)&<0     \\ \frac{\partial}{\partial\phi_{0}^{i}}\left(\frac{\partial X_{i,t}^{i}}{\partial y_{i,t}}\right)&<0.
	\end{align*}
	Finally, by plugging in the explicit formulas for variances and weights that depend on $\tau$, it is straightforward that these sensitivities are monotonically decreasing in $\tau$:

	\begin{align*}
	\frac{\partial}{\partial\phi_{1}^{j}}\left(\frac{\partial X_{i,t}^{i}}{\partial y_{i,t}}\right)	&
	=\frac{\phi_{0}^{i}}{\sigma_{1}^{*2}}\left(\sigma^{-2}-\left(\tau+\sigma^{-2}\right)\omega_{1}\omega\right)+\frac{\phi_{1}^{i}}{\sigma_{1}^{*2}}\left(\sigma^{-2}-\left(\tau+2\sigma^{-2}\right)w_{1}^{*}\omega\right)\\
	\frac{\partial}{\partial\phi_{0}^{i}}\left(\frac{\partial X_{i,t}^{i}}{\partial y_{i,t}}\right)	
	&=\frac{\phi_{0}^{j}}{\sigma_{0}^{*2}}\left(\sigma^{-2}-\left(\tau+\sigma^{-2}\right)w_{0}^{*}\right)+\frac{\phi_{1}^{j}}{\sigma_{1}^{*2}}\left(\sigma^{-2}-\left(\tau+\sigma^{-2}\right)w_{1}^{*}\omega\right)
	\end{align*}

	Thus, it follows that there exists a $\bar{\tau}_{1},\bar{\tau}_{2}\in\left(\tau^{*},\tau^{*}+\sigma^{-2}\right)$ such that $\frac{\partial}{\partial\phi_{1}^{j}}\left(\frac{\partial X_{i,t}^{i}}{\partial y_{i,t}}\right)>0$ if and only if $\tau<\bar{\tau}_{1}$ and $\frac{\partial}{\partial\phi_{0}^{i}}\left(\frac{\partial X_{i,t}^{i}}{\partial y_{i,t}}\right)>0$ if and only if $\tau<\bar{\tau}_{2}$.
\end{proof}

\section{Data}\label{app:Data}

This appendix provides a more detailed description of our data and variable construction, and complements the explanations in the main text.

Our main sources of data are the International Monetary Fund's (IMF) International Financial Statistics (IFS), including the International Investment Positions (IIP), which is available from \url{http://data.imf.org/IFS}; 
Balance of Payments (BOP) data, also provided by the IMF;
the World Bank's World Development Indicators (WDI), available at \url{https://datacatalog.worldbank.org/dataset/world-development-indicators};
and the United Nations' (UN) data on World Population Prospects (WPP), which we obtain from \url{https://population.un.org/wpp/Download/Standard/Population/} under Annual Population by Age Groups - Both Sexes.
As there is no official crosswalk for the UN, IMF, and World Bank data, we generate a country crosswalk by hand to merge the different data sources. 

As discussed in the main text, our key variables are yearly measures of Home Bias in the US, capital outflows by domestic agents (COD), capital inflows by foreign agents (CIF), and the growth of the Gross 
Domestic Product (GDP growth), all of which we defined and discuss in detail below. 
Table \ref{tab:summary} summarizes the sample sizes and variable characteristics. All monetary data are in US\$ and inflation adjusted using the GDP deflator. 

\begin{table}[htbp!]
	\caption{Summary Statistics}
	\vspace{-0.4cm}
	\begin{small}
		\begin{tablenotes}
			\item Home Bias based on data from 1980-2017. For COD, CIF, and GDP growth year panel varies by country. 
		\end{tablenotes}
	\end{small}
	\vspace{0.5em}
	\begin{center}
		{
			\def\sym#1{\ifmmode^{#1}\else\(^{#1}\)\fi}
			\begin{tabular}{l*{5}{c}}
				\hline\hline
				Variable & N  &  Mean  & SD  & Min  &  Max \Tstrut \Bstrut \\ \hline
				Home Bias & 38  & 0.44   & 0.09& 0.32 & 0.65  \Tstrut\Bstrut\\\hline
				COD  & 725 & 0 &  .99 &  -3.02  & 3.31 \Tstrut\\
				CIF & 714 & 0 &  .99 & -2.74   & 4.07\\
				GDP Growth & 874 & 0.05 & 0.15 & -0.64 & 1.79 \Bstrut \\
				\hline\hline
			\end{tabular}
		}
	\end{center}
	\label{tab:summary}
\end{table}

\subsection{Home Bias}
\label{appsec:homebias}

The construction of our measure of Home Bias is detailed in the main text, and summarized in the formula$$\frac{\textrm{Domestic Equity Investment in the US}}{\textrm{Global Equity Investment by the US}} - \frac{\textrm{US Market Capitalization}}{\textrm{Global Market Capitalization}}.$$ 

As discussed in Section \ref{ss:HomeBias}, we obtain data on domestic and foreign equity investment in the US and globally from the IIP data of the IMF, and capitalization data from the World Federation of Exchanges. Here, we provide more details on the location of the data and the variable construction.

For equity investment, we follow \cite{cooper2013equity}, \cite{mishra2015measures}, and \cite{hau2008home} and use `equity portfolio investment' as defined by the IMF in their IIP data and available for 1976-2017. 
(Note that the start and end years in the IIP data vary by country, some countries have gaps.) This class of investments captures liquid investment assets such as listed equities, and also includes investments of less than 10\% in venture capital firms, while investments above 10\% in a listed company are excluded (and counted as direct investment). The restriction to listed equity in our (and in prior) research reflects that robust data is generally not available for unlisted equities.
For each country and sample year, the IIP provides (1) a country's assets, which is equal to the total foreign equity portfolio investment holdings for a country in that year, and (2) a country's liabilities, which is equal to the total equity portfolio investment holdings by foreigners in a country in that year. 

We calculate domestic equity investment by the US by subtracting foreign investment in listed US equity from the US listed-equity market capitalization.
Foreign investment in the US comes from the IIP data within the IFS, specifically ``International Investment Position, Liabilities, Portfolio investment, Equity and investment fund shares [BPM6], US Dollar (ILPE\_BP6\_USD).'' US Market Capitalization comes from the World Federation of Exchanges, available from the World Bank's WDI data for 1980-2017\footnote{The Center for Research in Security Prices (CRSP) provides more historical market capitalization for the US, but not for the world.}, specifically ``Market capitalization of listed domestic companies (current US\$) [CM.MKT.LCAP.CD].''
To calculate total equity investment by the US we sum the US listed-equity investment by the US and foreign listed-equity investment by the US, also provided directly by the IIP imder ``International Investment Position, Assets, Portfolio investment, Equity and investment fund shares [BPM6], US Dollar (IAPE\_BP6\_USD).''

For the benchmark measure, the percent of the global portfolio allocated to the US, we divide the US market capitalization by the global market capitalization, both available from the World Federation of Exchanges (via the World Bank's WDI).

Finally, we note that some of the prior research on home bias, such as \cite{cooper2013equity} and \cite{mishra2015measures}, use the Coordinated Portfolio Investment Survey (CPIS) from the IMF which provides more granular data than the IIP. However, the CPIS starts only in 2001, which would halve our sample size. The only potential advantage to using the CPIS over the IIP data is the ability to estimate ``round-tripping,'' when money is invested domestically but through a tax haven so it is counted as a foreign investment in the IIP data (cf.\;\citealp{cooper2013equity}). However, the CPIS data only allows for a rather imprecise estimation of ``round-tripping'' which could bias results in an unknown direction. By not accounting for ``round-tripping,'' our measure is biased towards less domestic US investment and hence
less home bias. As a result, our results are conservative.

\subsection{Fickleness and Retrenchment}

As indicated in the main text, we calculate our measures of capital flows from annual changes in assets and liabilities in the IMF's BOP data, following the variable definitions of capital inflows by foreign agents (CIF) and capital outflows by domestic agents (COD) in \cite{caballero2016model}.

For CIF, we calculate the sum of changes in a country's liabilities from direct investment, portfolio investment, and  other investment using the BOP data series ``Financial Account, Net Lending ($+$) / Net Borrowing ($-$) (Balance from Financial Account), Direct Investment, Net Incurrence of Liabilities, US Dollars (BFDL\_BP6\_USD),'' ``Financial Account, Portfolio Investment, Net Incurrence of Liabilities, US Dollars (BFPL\_BP6\_USD),'' and ``Financial Account, Other Investment, Net Incurrence of Liabilities, US Dollars (BFOL\_BP6\_USD).''

For COD, we calculate the sum of changes in a country's assets from direct investment, portfolio investment, other investment, and reserve assets from the financial account using the BOP data series ``Financial Account, Net Lending ($+$) / Net Borrowing ($-$) (Balance from Financial Account), Direct Investment, Net Acquisition of Financial Assets, US Dollars (BFDA\_BP6\_USD),'' ``Financial Account, Portfolio Investment, Net Acquisition of Financial Assets, US Dollars (BFPA\_BP6\_USD),'' ``Financial Account, Other Investment, Net Acquisition of Financial Assets, US Dollars (BFOA\_BP6\_USD)'', and ``Financial Account, Reserve Assets, US Dollars (BFRA\_BP6\_USD).''

We inflation-adjust, normalize, and standardize the flow variables, as discussed in the main text.
For the inflation adjustment, we use the GDP deflator data available from the St. Louis Federal Reserve Bank here at  \url{https://fred.stlouisfed.org/series/GDPDEF}. (We adjust the series to 100 in 2010.)
To normalize CIF and COD, we divide by trend GDP, given by the Hodrick-Prescott filter of \cite{hodrick1997postwar} using the recommended smoothing parameter of \cite{ravn2002adjusting} of 6.25. After standardizing the variables to a mean of 0 and a standard deviation of 1 (at the country level), the summary statistic of standard deviations in Table \ref{tab:summary} is slightly below 1 due to numerical error; however, when tabulated, each individual country has a standard deviation of 1 and a mean of 0 up to a high number of digits.

Real GDP growth is calculated from yearly GDP data in contemporaneous US dollars provided by the World Bank in the World Development Indicators from 1970-2017 (varying by country), as discussed in the main text.
The year of maximum GDP growth, from 1973 to 1974 in Saudi Arabia, resulted from the rapid increase of oil sales in the Middle East in the 1970s. The year of minimum GDP growth is from 2001 to 2002 in Argentina, resulting from the Argentinian debt crisis of the early 2000s. Since these are not anomalous outliers, but driven by  major events, we follow \cite{broner2013gross} by opting not to drop or winsorize any data points.

For the population distribution, we use yearly, country-level population data from the UN which gives population by country in 5-year age bins estimates from 1950-2020.

We use four country level population-based indicator variables. We follow \cite{malmendierpouzovanasco2018}, in particular the definitions used for calculations using a Markov-Switching Regime, who define the investor population as those age 25 to 75 and define young as those below 50. Working within the 5-year age bins provided by the UN, we define the young investor population of a country as those aged 25 to 49 and the old investor population as those aged 50 to 74. 
We calculate all indicators based on countries' relative positions within our panel of G20 countries.
Since population is a smooth variable and, relative to other countries, similar across time, we calculate country-level, as opposed to a country-year-level, indicators. 
To do this we average each country's young and old population from 1970 to 2018. We then determine which countries have above median and top-quartile levels of average young and old population. Even though size is an important determinant of whether or not a country has a high level of young/old investors, it is not the sole determinant and our variables are not collinear.

\end{appendices}
\end{document}